\newtheorem{alg}{Algorithm}[section]
\newcommand{\eps}{\varepsilon}
\newcommand{\pa}{\partial}
\newcommand{\al}{\alpha}
\newcommand{\na}{\nabla}
\newcommand{\Ga}{\Gamma}
\newcommand{\Om}{\Omega}
\newcommand{\de}{\delta}
\newcommand{\De}{\Delta}
\newcommand{\lam}{\lambda}
\newcommand{\bks}{\backslash}
\renewcommand{\i}{\mathbf{i}}
\newcommand{\R}{{\mathbb{R}}}
\newcommand{\C}{{\mathbb{C}}}
\renewcommand{\Re}{\mathrm{Re}\,}
\renewcommand{\Im}{\mathrm{Im}\,}
\newcommand{\bs}{\backslash}
\newcommand{\be}{\begin{eqnarray}}
\newcommand{\ee}{\end{eqnarray}}
\newcommand{\ben}{\begin{eqnarray*}}
\newcommand{\een}{\end{eqnarray*}}
\newcommand{\bee}{\begin{equation}}
\newcommand{\eee}{\end{equation}}
\newcommand{\nn}{\nonumber}
\newcommand{\debproof}{\begin{proof}}
\newcommand{\finproof}{\end{proof}}
\title{Reverse Time Migration for Reconstructing Extended Obstacles in Planar Acoustic Waveguides
\thanks{This work is  supported by National Basic Research Project under the grant 2011CB309700 and China NSF under the grants 11021101 and 11321061.}}
\author{Zhiming Chen\thanks{LSEC, Institute of Computational Mathematics and Scientific Engineering Computing,
Academy of Mathematics and System Sciences, Chinese Academy of Sciences, Beijing 100190, P.R. CHINA ({\tt zmchen@lsec.cc.ac.cn}).}
        \and Guanghui Huang\thanks{LSEC, Institute of Computational Mathematics and Scientific Engineering Computing,
Academy of Mathematics and System Sciences, Chinese Academy of Sciences, Beijing 100190, P.R. CHINA ({\tt ghhuang@lsec.cc.ac.cn}).}}
\begin{document}
\maketitle

\begin{abstract}
We propose a new reverse time migration method for reconstructing extended obstacles in the planar waveguide using acoustic waves
at a fixed frequency. We prove the resolution of the reconstruction method in terms of the aperture and the thickness of the waveguide.
The resolution analysis implies that the imaginary part of the cross-correlation imaging function is always positive and thus may have better stability properties. Numerical experiments are included to illustrate the powerful imaging quality and to confirm our resolution results.
\end{abstract}

\begin{keywords}
Reverse time migration, planar waveguide, resolution analysis, extended obstacle
\end{keywords}

\begin{AMS} 35R30, 78A46, 78A50
\end{AMS}

\pagestyle{myheadings}
\thispagestyle{plain}
\markboth{ZHIMING CHEN AND GUANGHUI HUANG}{REVERSE TIME MIGRATION FOR PLANAR ACOUSTIC WAVEGUIDE}

\section{Introduction}

We propose a reverse time migration (RTM) method to find the support of an unknown obstacle embedded in a planar acoustic waveguide
from the measurement of the wave field on part of the boundary of the waveguide which is far away from the obstacle (see Figure \ref{wg_figure}). Let $\R^2_h=\{(x_1,x_2)\in\R^2: x_2\in (0,h)\}$ be the waveguide of thickness $h>0$. Denote by $\Ga_0=\{(x_1,x_2)\in \R^2: x_2 = 0\}$ and $\Ga_h=\{(x_1,x_2)\in \R^2: x_2 = h\}$
the boundaries of $\R^2_h$. Let the obstacle occupy a bounded Lipschitz domain $D$ included in $B_R=(-R,R)\times (0,h)$, $R>0$, with $\nu$ the unit outer normal to its boundary $\Ga_D$.
We assume the incident wave is a point source excited at $x_s\in\Ga_h$. The measured wave field satisfies the following equations:
\be
& & \Delta u + k^2 u =-\delta_{x_s}(x) \qquad \mbox{in } \R^2_h\bs\bar D, \label{waveguide}\\
& & \frac{\pa u}{\pa \nu} + \i k \eta(x) u = 0 \ \ \ \ \mbox{ on } \Ga_D, \label{dir_bc}\\
& & u= 0\ \ \mbox{on }\Ga_0,\ \ \ \ \frac{\partial u}{\partial x_2}= 0\ \ \mbox{on }\Ga_h. \label{gammah0}
\end{eqnarray}
Here $k>0$ is the wave number and $\eta(x)>0$ is a bounded function on $\Ga_D$. The equation \eqref{waveguide} is understood as the limit when $x_s\in\R^2_h\bs\bar D$ tends to $\Ga_h$. The impedance boundary condition in \eqref{dir_bc} is assumed only for the convenience of the analysis of this paper. The RTM method
studied in this paper does not require any a priori information of the physical
properties of the obstacle such as penetrable and non-penetrable, and for the non-penetrable obstacles, the type of boundary conditions on the boundary of
the obstacle (see section \ref{extensions} below).

\begin{figure}
    \begin{center}
    \vskip-1.cm
    \qquad \qquad \qquad \includegraphics[width=0.8\textwidth,height=3.0in]{./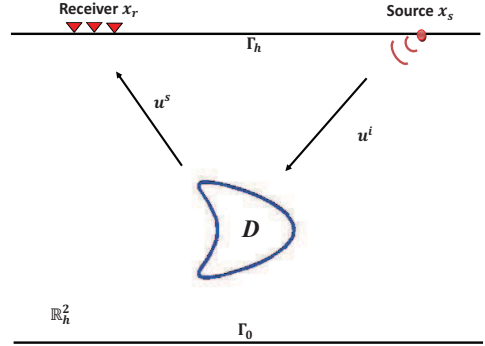}
    \end{center}
    \vskip-1.5cm
    \caption{The geometric setting of the inverse problem in planar waveguides.} \label{wg_figure}
\end{figure}

Now we introduce the radiation condition for the planar waveguide problem \cite{Xu1}. Since $D\subset B_R$, we have by separation of variables the following mode expansion:
\bee \label{expansion}
     u(x_1,x_2) = \sum_{n=1}^{\infty} u_n(x_1)\sin(\mu_n x_2 ),\ \ \ \ \forall\ |x_1|>R,
\eee
where $\mu_n=\frac{2n-1}{2h}\pi$, $n=1,2,\cdots$, are called {\it cut-off} frequencies.
In this paper we will always assume
\begin{equation}\label{cut-off}
k\not=\frac{2n-1}{2h}\pi,\ \ n=1,2,\cdots.
\end{equation}
The mode expansion coefficients $u_n(x_1)$, $n=1,2,\cdots$, satisfy the 1D Helmholtz equation:
\bee \label{mode1}
    u_n'' + \xi_n^2 u_n = 0,\ \ \ \ \forall\ |x_1|>R, \ \ n=1,2, ...,
\eee
where $\xi_n=\sqrt{k^2-\mu_n^2}$ if $k>\mu_n$ and $\xi_n=\i\sqrt{\mu_n^2-k^2}$ if $k<\mu_n$. The radiation condition
for the planar waveguide problem is then to impose the mode expansion coefficient $u_n(x_1)$ to satisfy
\bee\label{mode2}
\lim_{|x_1|\to\infty}\bigg(\frac{\partial u_n }{\partial |x_1|}-\i \xi_n u_n \bigg)=0,\ \ n=1,2,\cdots,
\eee
which guarantees the uniqueness of the solution of the 1D Helmholtz equation \eqref{mode1}.
The existence and uniqueness of the wave scattering problem \eqref{waveguide}-\eqref{gammah0} with the radiation condition \eqref{mode2} is an intensively
studied subject in the literature, see e.g. \cite{Arens1, MW1, MW2, RAMM,Xu1}. The difficulty is the possible existence of the so-called embedded
trapped modes which destroys the uniqueness of the solution \cite{lin}. In this paper we will show that the impedance boundary condition on the scatterer
guarantees the uniqueness of the scattering solution. We also prove the existence of the solution by the limiting absorption principle.

It is well known that imaging a scatterer in a waveguide is much more
challenging than in the free space. Indeed, because of the presence of two parallel infinite boundaries of the
waveguide, only a finite number of modes can propagate at long distance, while the other
modes decay exponentially \cite{Xu1}. We refer to \cite{ammari} for MUSIC type algorithm to locate small inclusions, \cite{Xu2} for the generalized dual space method, \cite{Arens2, BL, sz13} for the linear sampling method, \cite{tmp13} for a selective imaging method based on Kirchhoff migration, and the inversion method in \cite{RLAK} for reconstructing obstacles in waveguides.

The RTM method, which consists of back-propagating the complex conjugated data into the background medium and computing the
cross-correlation between the incident wave field and the backpropagation field to output the final imaging profile, is nowadays widely used in exploration geophysics \cite{ber84, cla85, bcs}. In \cite{cch_a, cch_e}, the RTM method for reconstructing extended targets using acoustic and electromagnetic waves at a fixed frequency in the free space is proposed and studied. The resolution
analysis in \cite{cch_a,cch_e} is achieved without using the small inclusion or geometrical optics assumption previously made in the literature.

The purpose of this paper is to extend the RTM method in \cite{cch_a, cch_e} to find extended targets in the planar acoustic waveguide.
Our new RTM algorithm is motivated by a generalized Helmholtz-Kirchhoff identity for the waveguide scattering problems. We show our new imaging function
enjoys the nice feature that it is always positive and thus may have better stability properties. The key ingredient in the analysis is a decay estimate
of the difference of the Green function for the waveguide problem and the half space Green function. We also refer to \cite{KE} for the study of the resolution of time-reversal experiments.

The rest of this paper is organized as follows. In section 2 we introduce some necessary results concerning the direct scattering problem.
In section 3 we prove the generalized Helmholtz-Kirchhoff identity and introduce our RTM algorithm. In section 4 we study the resolution of the finite aperture Helmholtz-Kirchhoff function which plays a key role in the resolution analysis of RTM algorithm in section 5. In section 6 we consider the extension of the resolution results for reconstructing penetrable
obstacles or non-penetrable obstacles with sound soft or sound hard boundary conditions. In section 7 we report extensive numerical experiments
to show the competitive performance of the RTM algorithm. In section 8 we include some concluding remarks. The appendix is devoted to the proof of the existence of the solution of the direct scattering
problem by the limiting absorption principle.

\section{Direct scattering problem}

We start by introducing the Green function $N(x,y)$, where $y\in\R^2_h$, which is the radiation solution satisfying the equations:
\begin{eqnarray*}
& & \Delta N(x,y) + k^2 N(x,y) = -\delta_y(x) \qquad \mbox{ in } \R^2_h,\\
& & N(x,y)=0\ \ \mbox{on }\Ga_0,\ \ \ \ \frac{\partial N(x,y)}{\partial x_2} = 0\ \ \mbox{on }\Ga_h.
\end{eqnarray*}
Let $\hat N_y(\xi,x_2)=\int^\infty_{-\infty}N(x,y)e^{-\i(x_1-y_1)\xi}dx_1$ be the Fourier transform in the first variable. It is easy to find by the assumption that $N(x,y)$ is a radiation solution that
\bee\label{spectal_green}
    \hat N_y(\xi,x_2) = \frac{\textbf{i}}{2\mu}\bigg( e^{\textbf{i}\mu|x_2-y_2|} - e^{\textbf{i}\mu(x_2+y_2)} - \frac{2\sin(\mu x_2)}{\cos(\mu h)}\sin(\mu y_2) e^{\textbf{i}\mu h} \bigg),
\eee
where $\mu=\sqrt{k^2 - \xi^2}$ and we choose the branch cut of $\sqrt{z}$ such that $\Re(\sqrt{z})\ge 0$ throughout the paper.
By using the limiting absorption principle one can obtain the following formula for the Green function by taking the inverse Fourier transform on the Sommerfeld Integral Path (SIP) (see Figure \ref{SIP}):
\bee{\label{GreenN0}}
    N(x,y) =\frac{1}{2\pi}\int_{\rm SIP}\hat N_y(\xi,x_2)e^{\textbf{i}\xi (x_1-y_1)}d\xi.
\eee
We refer to \cite[Chapter 2]{chew} for more discussion on the SIPs.
We will also use the following well-known {\it normal mode expression} for the Green function $N(x,y)$, see e.g. \cite{Xu1}:
\bee{\label{GreenN}}
    N(x,y) =\sum_{n=1}^{\infty}\frac{{\i}}{h\xi_n}\sin(\mu_n x_2)\sin(\mu_n y_2)e^{\i\xi_n|x_1-y_1|}.
\eee
It is obvious that the series in the normal mode expression is absolutely convergent if $x_1\not= y_1$. If $x_1=y_1$ but 
$x_2\not= y_2$, the series in \eqref{GreenN} is also convergent by using the method of Dirichlet's test \cite[\S 8.B.13-15]{v81}. 

\begin{figure}\label{SIP}
    \centering
    \includegraphics[width=0.7\textwidth, height=3.0in]{./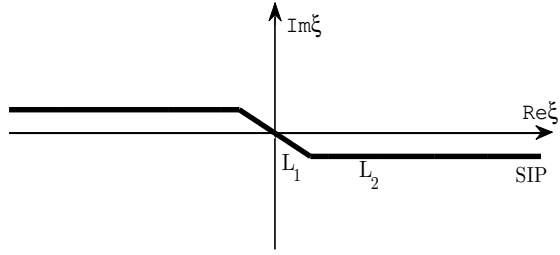}
    \vskip-2cm
    \caption{The Sommerfeld Integral Path (SIP).}
\end{figure}

\begin{lemma}\label{boundness}
    If $|x_1-y_1|\ge\alpha h$ for some constant $\al>0$, then $N(x,y)$ and $\na_x N(x,y)$ are uniformly bounded.
\end{lemma}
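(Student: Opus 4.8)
The plan is to work directly from the normal mode expansion \eqref{GreenN}
\ben
    N(x,y) =\sum_{n=1}^{\infty}\frac{\i}{h\xi_n}\sin(\mu_n x_2)\sin(\mu_n y_2)e^{\i\xi_n|x_1-y_1|},
\een
exploiting the fact that the separation hypothesis $|x_1-y_1|\ge\al h$ upgrades the merely conditional convergence of this series into uniform geometric convergence. Since $\mu_n=\frac{(2n-1)\pi}{2h}$ is increasing, the cut-off assumption \eqref{cut-off} gives an index $N_0=N_0(k,h)$ such that $\xi_n$ is real (a propagating mode) for $n<N_0$ and $\xi_n=\i\sqrt{\mu_n^2-k^2}$ is purely imaginary with positive imaginary part (an evanescent mode) for $n\ge N_0$. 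First I would split the sum at $N_0$: the propagating part is a finite sum whose terms are each bounded in modulus by $(h|\xi_n|)^{-1}$, hence bounded by a constant depending only on $k$ and $h$.

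The crux is the evanescent tail. For $n\ge N_0$ one has $\i/(h\xi_n)=1/\big(h\sqrt{\mu_n^2-k^2}\big)$ and, crucially,
\ben
    \big|e^{\i\xi_n|x_1-y_1|}\big|=e^{-\sqrt{\mu_n^2-k^2}\,|x_1-y_1|}\le e^{-\sqrt{\mu_n^2-k^2}\,\al h}.
\een
Using $|\sin|\le1$ and the bound $\sqrt{\mu_n^2-k^2}\ge c\,n/h$ valid once $\mu_n\ge 2k$ (which follows from $\mu_n\sim n\pi/h$), the $n$-th term is dominated by $(cn)^{-1}e^{-c\al n}$, so the tail is controlled by a convergent geometric-type series with sum depending only on $\al,k,h$. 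This establishes the uniform boundedness of $N(x,y)$.

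For the gradient I would differentiate term by term, which is legitimate precisely because the differentiated series again converges uniformly on $\{|x_1-y_1|\ge\al h\}$. Differentiating in $x_1$ brings down the factor $\i\xi_n\,\mathrm{sgn}(x_1-y_1)$, which cancels the $1/\xi_n$ in front and leaves
\ben
    \pa_{x_1}N=-\frac{\mathrm{sgn}(x_1-y_1)}{h}\sum_{n=1}^{\infty}\sin(\mu_n x_2)\sin(\mu_n y_2)e^{\i\xi_n|x_1-y_1|};
\een
the evanescent terms now carry no $\xi_n$ in the denominator but retain the decay $e^{-\sqrt{\mu_n^2-k^2}\,\al h}$, so the tail is again geometrically summable. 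Differentiating in $x_2$ replaces $\sin(\mu_n x_2)$ by $\mu_n\cos(\mu_n x_2)$, producing a factor $\mu_n/\xi_n$ that equals $\mu_n/\sqrt{\mu_n^2-k^2}$ on evanescent modes and is therefore bounded, so the same exponential decay controls the tail.

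The main obstacle I anticipate is organizing the uniformity cleanly: every estimate must depend only on $\al,k,h$ and not on $x,y$, and one must verify that the finitely many modes with $\mu_n$ just above the cut-off, where $\sqrt{\mu_n^2-k^2}$ is small, cause no trouble—this is handled by noting they contribute only finitely many bounded terms, while the genuine decay driving convergence comes from the tail $\mu_n\ge 2k$. The extra factor $\mu_n$ appearing in $\pa_{x_2}N$ is the only place where differentiation could in principle worsen convergence, but it is harmless since it is overwhelmed by the exponential.
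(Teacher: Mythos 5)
Your proof is correct and takes essentially the same approach as the paper: both start from the normal mode expansion \eqref{GreenN}, split the series at the propagating/evanescent cutoff, bound the finitely many propagating terms directly, and use the decay $e^{-|\xi_n|\alpha h}$ of the evanescent modes forced by $|x_1-y_1|\ge\alpha h$ to sum the tail (and treat $\nabla_x N$ by termwise differentiation of the same series). The only difference is technical and cosmetic: the paper controls the sums by comparison with integrals of monotone functions, obtaining constants like $1/(\alpha k h\pi)$, whereas you use the cruder bound $(cn)^{-1}e^{-c\alpha n}$ together with finitely many near-cutoff terms---both give bounds uniform in $x,y$.
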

\debproof We only prove $N(x,y)$ is uniformly bounded. The proof for $\na_x N(x,y)$ is similar.
    Let $M$ be the integer such that $\mu_M<k<\mu_{M+1}$. Since $\frac{e^{-\al kh\sqrt{t^2-1}}}{\sqrt{t^2-1}}$ is a decreasing function in $(1,\infty)$, we know that
    \ben
        \sum_{n=M+1}^{\infty}\frac{1}{h|\xi_n|} e^{-|\xi_n||x_1-y_1|}\le\frac{1}{h|\xi_{M+1}|}+\frac 1\pi\int^\infty_1\frac{e^{-\al kh\sqrt{t^2-1}}}{\sqrt{t^2-1}}dt
        \le\frac 1{h|\xi_{M+1}|}+\frac 1{\al kh\pi}.
    \een
On the other hand, note that $|\sum_{n=1}^{M}\frac{\i}{h\xi_n}\sin(\mu_n x_2)\sin(\mu_n y_2) e^{\i\xi_n|x_1-y_1|}|<\sum_{n=1}^{M}\frac{1}{h\xi_n}$, we obtain
    \ben
        \sum_{n=1}^{M} \frac{1}{h\xi_n} \leq \frac{1}{h|\xi_M|}+\sum^{M-1}_{n=1} \frac{1}{h\xi_n}
        \leq \frac{1}{h|\xi_M|}+\frac{1}{\pi}\int_{0}^{1}\frac{1}{\sqrt{1-t^2}}dt=\frac{1}{h|\xi_M|}+\frac{1}{2},
    \een
where we have used the fact that $\frac{1}{\sqrt{1-t^2}}$ is an increasing function in $(0,1)$. This completes the proof. \qquad
\finproof

Now we consider the existence and uniqueness of the radiation solution of the following waveguide problem:
\be
& & \Delta \psi + k^2 \psi =0 \qquad \mbox{in } \R^2_h\bs\bar D, \label{l1}\\
& & \frac{\pa \psi}{\pa \nu} + \i k \eta(x) \psi = g \ \ \ \ \mbox{ on } \Ga_D, \label{l2}\\
& & \psi= 0\ \ \mbox{on }\Ga_0,\ \ \ \ \frac{\partial \psi}{\partial x_2}= 0\ \ \mbox{on }\Ga_h, \label{l3}
\ee
where $g\in H^{-1/2}(\Ga_D)$. We first show the uniqueness of the solution.

\begin{lemma}{\label{uniqueness}}
Let $\eta>0$ be bounded on $\Ga_D$. The scattering problem \eqref{l1}-\eqref{l3} has at most one radiation solution.
\end{lemma}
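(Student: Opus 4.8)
The plan is to argue by an energy identity on a truncated waveguide, followed by unique continuation. Let $\psi$ be the difference of two radiation solutions, so that $\psi$ solves \eqref{l1}--\eqref{l3} with $g=0$. For $\rho>R$ set $\Om_\rho=B_\rho\bs\bar D$ with $B_\rho=(-\rho,\rho)\times(0,h)$, and apply Green's first identity on $\Om_\rho$:
\[
\int_{\Om_\rho}\big(|\na\psi|^2-k^2|\psi|^2\big)\rd x=\int_{\pa\Om_\rho}\bar\psi\,\frac{\pa\psi}{\pa n}\rd s,
\]
where $n$ is the outer unit normal of $\Om_\rho$. The left-hand side is real, so it suffices to track the imaginary part of the boundary integral.

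First I would dispose of the horizontal boundaries: on $\Ga_0\cap\pa\Om_\rho$ we have $\psi=0$ and on $\Ga_h\cap\pa\Om_\rho$ we have $\pa\psi/\pa n=\pa\psi/\pa x_2=0$, so neither contributes. On $\Ga_D$ the outer normal of $\Om_\rho$ is $-\nu$, and the impedance condition \eqref{l2} with $g=0$ gives $\pa\psi/\pa\nu=-\i k\eta\psi$; hence this piece equals $\i k\int_{\Ga_D}\eta|\psi|^2\rd s$, whose imaginary part is $k\int_{\Ga_D}\eta|\psi|^2\rd s\ge0$. It remains to evaluate the two vertical cuts $\{x_1=\pm\rho\}\times(0,h)$. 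Here I would insert the mode expansion \eqref{expansion}: for $|x_1|>R$ each coefficient solves \eqref{mode1} and satisfies the radiation condition \eqref{mode2}, which pins it down to the purely outgoing mode $\psi_n(x_1)=c_n e^{\i\xi_n x_1}$ for $x_1>R$ (and the mirror expression for $x_1<-R$). Using the orthogonality of $\{\sin(\mu_n x_2)\}$ on $(0,h)$, the contribution of each cut reduces to $\tfrac h2\sum_n\bar\psi_n\psi_n'$; the evanescent modes ($n>M$, with $\xi_n$ imaginary, where $M$ is the number of propagating modes $\mu_M<k<\mu_{M+1}$) give real terms, while each propagating mode ($n\le M$, $\xi_n>0$) contributes $\tfrac h2\,\i\,\xi_n|c_n|^2$. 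Collecting imaginary parts yields
\[
0=k\int_{\Ga_D}\eta|\psi|^2\rd s+\frac h2\sum_{n=1}^{M}\xi_n\big(|\psi_n(\rho)|^2+|\psi_n(-\rho)|^2\big),
\]
a sum of nonnegative terms that must therefore all vanish.

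Since $\eta>0$, the first term forces $\psi=0$ on $\Ga_D$, and then \eqref{l2} gives $\pa\psi/\pa\nu=-\i k\eta\psi=0$ on $\Ga_D$ as well; thus $\psi$ has vanishing Cauchy data on $\Ga_D$. The final step is unique continuation: extending $\psi$ by zero into $D$ produces an $H^1_{\rm loc}$ function that solves the Helmholtz equation weakly across the Lipschitz interface $\Ga_D$ (the Cauchy data match), vanishes on the nonempty open set $D$, and hence vanishes on the connected set $\R^2_h$ by the unique continuation principle for second-order elliptic equations. Therefore $\psi\equiv0$ in $\R^2_h\bs\bar D$, which is the claimed uniqueness.

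I expect the two technical pressure points to be (i) the vertical-cut computation, where one must justify termwise integration of the mode series against the orthogonal system and exploit that \eqref{mode2} fixes the \emph{exact} outgoing coefficients rather than merely their asymptotics, so the cut integrals at finite $\rho$ are already in closed form; and (ii) the closing unique-continuation argument, where the Lipschitz regularity of $\Ga_D$ and the weak matching of Cauchy data must be handled carefully so that the zero extension is a genuine distributional solution. Everything else is routine integration by parts.
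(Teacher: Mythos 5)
Your proposal is correct and follows essentially the same route as the paper's proof: an energy (Green's) identity on the truncated domain $B_\rho\setminus\bar D$, vanishing of the $\Ga_0,\Ga_h$ terms, the sign-definite impedance contribution on $\Ga_D$, the modal/Parseval evaluation of the vertical cuts showing the propagating modes contribute nonnegative imaginary parts while evanescent modes contribute real terms, and finally unique continuation from vanishing Cauchy data on $\Ga_D$. The only difference is presentational: you spell out the zero-extension argument behind the unique continuation step, which the paper simply invokes.
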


\debproof
We include a proof here for the sake of completeness. Let $g=0$ in \eqref{l2}. We multiply \eqref{l1} by $\bar{\psi}$ and integrate over $B_R\bs\bar D$ to obtain by integration by parts that
\bee \label{unique_id}
-\Im\int_{\Ga_D}\bar \psi\frac{\pa \psi}{\pa \nu}ds + \Im\int_{\pa B_R}\bar \psi\frac{\pa \psi}{\pa\nu}d s =0,
\eee
where $\nu$ is the unit outer normal to $\pa B_R$ on $\pa B_R$ and to $\Ga_D$ on $\Ga_D$. By the boundary condition satisfied by $\psi$,
$ \int_{(\Ga_0\cup \Ga_h)\cap\pa B_R}\bar \psi\frac{\pa \psi}{\pa \nu}ds=0$. On the other hand, for $|x_1|>R$, similar to \eqref{expansion} we have the mode expansion $\psi(x)=\sum^\infty_{n=1}\psi_n(x_1)\sin(\mu_n x_2)$ with $\psi_n(x_1)$ satisfying \eqref{mode1}-\eqref{mode2}.
Thus there exist constants $\psi_n^\pm$ such that $\psi_n(x_1) = \psi_n^{\pm} e^{\i\xi_n |x_1|}$ for $\pm\,x_1>R$.
By the Parseval identity, we have then
\ben
 \int_{\Ga_R^{+} \bigcup \Ga_R^{-}} \bar \psi\frac{\pa \psi}{\pa \nu}ds= \frac h2\sum_{n=1}^{M}\i\xi_n(|\psi_n^{+}|^2+|\psi_n^{-}|^2) - \frac h2\sum_{n=M+1}^{+\infty}|\xi_n|(|\psi_n^{+}|^2+|\psi_n^{-}|^2)e^{-2|\xi_n|R},
\een
where $\Ga_R^{\pm}=\{(x_1,x_2)\in\R^2: x_1=\pm R,x_2\in (0,h)\}.$
Thus by taking the imaginary part of the above identity and inserting it into (\ref{unique_id}) we have
\bee\label{DtN}
-\Im{\int_{\Ga_D}\bar \psi\frac{\pa \psi}{\pa\nu}ds} + \frac h2\sum_{n=1}^{M}\xi_n(|\psi_n^{+}|^2+|\psi_n^{-}|^2) = 0.
\eee
By using the impedance condition and the assumption $\eta>0$ on $\Ga_D$ we have $\psi=0$ on $\Ga_D$ and $\psi_n^{\pm}=0, n=1,2,...,M$. This implies that $\frac{\pa \psi}{\pa \nu}=0$ on $\Ga_D$. By the unique continuation principle we conclude $\psi=0$ in $\R^2_h \bks \bar D$. This completes the proof. \qquad
\finproof

In this paper, we call $\psi_n^\pm, n=1,2,\cdots, M$, which are the coefficients of the propagating modes, the {\it far-field pattern} of the radiation solution $\psi$ of the planar waveguide problem \eqref{l1}-\eqref{l3}.

We remark that under some assumption on the geometry of the obstacle, the uniqueness of the solution to the acoustic waveguide scattering problem for the sound soft obstacle was first proved in \cite{MW1} based on the Rellich type identity. The proof was refined in \cite{RAMM} and was also used in Arens \cite{Arens1} for 3D scattering problems. For general geometry of the obstacle, the embedded trapped mode may appear which makes the uniqueness fail \cite{lin}.

The following theorem which is useful in our resolution analysis for the RTM algorithm will be proved in the Appendix by using the method of limiting absorption principle.

\begin{theorem}{\label{LAP}}
Let $g \in H^{-1/2}(\Ga_D)$ and $\eta(x)>0$ be bounded on $\Ga_D$. Then the problem \eqref{l1}-\eqref{l3}
admits a unique radiation solution $\psi \in H^{1}_{\rm loc}(\R^2_h \backslash \bar D)$ . Moreover, for any bounded open set $\mathcal O\subset \R^2\bs\bar D$, there exists a constant $C$ such that
$\|\psi\|_{H^{1}(\mathcal O)}\le C\|g\|_{H^{-1/2}(\Ga_D)}$.
\end{theorem}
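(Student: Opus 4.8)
The plan is to follow the limiting absorption principle: regularize the problem by a vanishing dissipation, solve each dissipative problem on a truncated domain, and then pass to the limit while keeping the energy estimate uniform in the absorption parameter. First I would replace $k^2$ by $k_\ep^2=k^2+\i\ep$ with $\ep>0$ and set $\xi_n^\ep=\sqrt{k_\ep^2-\mu_n^2}$, chosen so that $\Im\,\xi_n^\ep>0$ for every $n$; then each mode $e^{\i\xi_n^\ep|x_1|}$ decays and the dissipative solution lies in $H^1(\R^2_h\bs\bar D)$. Using the mode expansion as in \eqref{expansion} on $|x_1|>R$, I would construct the Dirichlet-to-Neumann operator $T_\ep$ on $\Ga_R^+\cup\Ga_R^-$ and recast \eqref{l1}-\eqref{l3} as a problem on the bounded domain $\Om_R=B_R\bs\bar D$ carrying the transparent condition $\pa\psi/\pa\nu=T_\ep\psi$ on $\Ga_R^\pm$, Dirichlet on $\Ga_0\cap\pa\Om_R$, Neumann on $\Ga_h\cap\pa\Om_R$, and the impedance condition \eqref{l2} on $\Ga_D$.

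For fixed $\ep>0$ I would establish well-posedness on $V=\{v\in H^1(\Om_R):v=0\ \text{on}\ \Ga_0\cap\pa\Om_R\}$. The associated sesquilinear form satisfies a G\aa rding inequality, because on the bounded domain $\Om_R$ the $L^2$ volume term and the boundary terms (including the lower-order part of $T_\ep$) are compact perturbations of the $H^1$ seminorm, the embedding $H^1(\Om_R)\hookrightarrow L^2(\Om_R)$ and the trace being compact. Uniqueness for $\ep>0$ is immediate: testing the homogeneous problem with $\bar\psi$ and taking the imaginary part collects the absorption contribution $\ep\|\psi\|_{L^2(\Om_R)}^2$, the impedance dissipation $k\int_{\Ga_D}\eta|\psi|^2$, and the radiated energy $\frac h2\sum|\xi_n^\ep|(\cdots)$ from $T_\ep$, all of the same sign, which forces $\psi\equiv0$. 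The Fredholm alternative then gives a unique $\psi_\ep\in V$ with $\|\psi_\ep\|_{H^1(\Om_R)}\le C_\ep\|g\|_{H^{-1/2}(\Ga_D)}$, and the mode expansion extends $\psi_\ep$ to a decaying solution on all of $\R^2_h\bs\bar D$.

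The crux is to replace $C_\ep$ by a constant independent of $\ep$. I would argue by contradiction: if no uniform bound held, there would be $\ep_j\to0$ and data $g_j$ with $\|\psi_{\ep_j}\|_{L^2(\Om_R)}=1$ but $\|g_j\|_{H^{-1/2}(\Ga_D)}\to0$. The G\aa rding inequality keeps the sequence bounded in $H^1(\Om_R)$, so a subsequence converges weakly to some $\psi$; the continuity $\xi_n^{\ep_j}\to\xi_n$, valid for all $n$ thanks to the cut-off assumption \eqref{cut-off} that excludes $\mu_n=k$, lets me pass to the limit in $T_{\ep_j}$ so that $\psi$ solves the homogeneous problem \eqref{l1}-\eqref{l3} with $g=0$. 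The delicate point is that $\psi$ inherits the radiation condition \eqref{mode2}: I would read off its modes from the limiting DtN operator and control the evanescent tail using the decay and boundedness of the modal Green function in Lemma \ref{boundness}. By Rellich compactness the convergence is strong in $L^2(\Om_R)$, so $\|\psi\|_{L^2(\Om_R)}=1$; but the uniqueness Lemma \ref{uniqueness} forces $\psi\equiv0$, a contradiction. This is exactly the step where the continuous spectrum of the waveguide could create resonant blow-up, and it is the impedance boundary condition, via Lemma \ref{uniqueness}, that rules out a nontrivial radiating limit.

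With the $\ep$-uniform estimate established, I would extract $\psi_\ep\rightharpoonup\psi$ in $H^1(\Om_R)$, verify that $\psi$ solves \eqref{l1}-\eqref{l3} and satisfies \eqref{mode2}, and then transfer the bound to $\|\psi\|_{H^1(\mathcal O)}\le C\|g\|_{H^{-1/2}(\Ga_D)}$ on an arbitrary bounded $\mathcal O$ by interior elliptic regularity together with the explicit modal representation outside $B_R$. Uniqueness from Lemma \ref{uniqueness} guarantees that the limit is independent of the chosen subsequence, so the whole family converges and the radiation solution is unique. As indicated, I expect the principal obstacle to be the uniform-in-$\ep$ bound of the third step rather than the functional-analytic machinery of the first two.
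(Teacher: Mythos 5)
Your proposal is correct in its overall strategy and would yield the theorem, but it implements the limiting absorption principle through genuinely different machinery than the paper. The paper never truncates the domain: it works on the unbounded waveguide in weighted Sobolev spaces $H^{1,-s}(\R^2_h\bs\bar D)$, and its key quantitative input is Lemma \ref{newton}, a uniform-in-$\ep$ resolvent estimate for the \emph{free} dissipative waveguide proved from the explicit modal Green function $N^z$ (this is where the cut-off assumption \eqref{cut-off} enters, via $|\xi_n^z|^2\ge|\mu_n^2-k^2|$); cut-off functions then localize the obstacle problem onto this estimate, and the compactness--uniqueness contradiction is run in the weighted space, with the radiation property of the limit read off from the integral representation with the kernel $N^{1+\i\ep'}$. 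You instead reduce to the bounded domain $B_R\bs\bar D$ with a Dirichlet-to-Neumann map $T_\ep$, get well-posedness for fixed $\ep$ from G\aa rding plus Fredholm (note $T_\ep$ itself is not compact; it is the sign of its evanescent part plus the finite rank of its propagating part that saves the decomposition, which your phrase ``lower-order part'' gestures at correctly), and pass to the limit using the operator-norm convergence $T_\ep\to T_0$, which indeed follows from $\sup_n|\xi_n^\ep-\xi_n|\le C\ep$ under \eqref{cut-off}. What your route buys is the avoidance of weighted spaces and of any explicit Green function estimate; what the paper's route buys is that the exterior behavior is handled once and for all by Lemma \ref{newton}, and the solution is produced directly on the unbounded domain rather than by gluing a modal extension to a truncated solution.

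Two small repairs you should make. First, the correct negation of the uniform bound produces a sequence $\ep_j$ that need only accumulate at some $\ep'\in[0,1]$, not necessarily at $0$; the paper treats both cases, disposing of $\ep'>0$ because the limit then solves the dissipative homogeneous problem (exponentially decaying, in $H^1$, hence zero), and your own energy identity for $\ep>0$ handles this case verbatim, so the fix is one sentence. Second, the appeal to Lemma \ref{boundness} to ``control the evanescent tail'' is misplaced: that lemma concerns $N(x,y)$, not the limit function; what you actually need (and what suffices) is that a function on $B_R\bs\bar D$ satisfying the truncated problem with the limiting operator $T_0$ extends, mode by mode, to an $H^1_{\rm loc}$ radiation solution satisfying \eqref{mode2}, after which Lemma \ref{uniqueness} kills the limit and Rellich compactness yields the contradiction exactly as in the paper.
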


\section{The reverse time migration algorithm}

In this section we develop the reverse time migration type algorithm for inverse scattering problems in the planar acoustic waveguide.
Let $G(x,y)$ be the half-space Green function, where $y\in\R^2_+
=\{(x_1,x_2)\in\R^2: x_2>0 \}$, which satisfies the Sommerfeld radiation condition and the following equations:
\begin{eqnarray*}
\Delta G(x,y) + k^2 G(x,y) &=&-\delta_y(x) \qquad \mbox{ in } \R_{+}^2,\\
G(x,y) &=& 0  \qquad \qquad \ \ \ \mbox{on } \Ga_0.
\end{eqnarray*}
It is well known by the image method that
\begin{equation}\label{G}
    G(x,y) = \frac{\textbf{i}}{4}H_{0}^{(1)}(k|x-y|) - \frac{\textbf{i}}{4}H_{0}^{(1)}(k|x-y'|),
\end{equation}
where $H^{(1)}_0(z)$ is the first Hankel function of zeroth order and $y'=(y_1,-y_2)$ is the image point of $y=(y_1,y_2)$ with respect to $y_2 = 0$.

We start by proving the generalized Helmholtz-Kirchhoff identity which plays a key role in this paper.

\begin{lemma}\label{HK} Let $S(x,y)=N(x,y)-G(x,y)$. Then we have
 \begin{equation}\label{hk1}
    \int_{\Ga_h}\frac{\partial G(x,\zeta)}{\partial \zeta_2}\overline {N(\zeta,y)}ds(\zeta)=2\i\,\Im N(x,y)-S(x,y),\ \ \ \ \forall x,y\in\R^2_h.
\end{equation}
\end{lemma}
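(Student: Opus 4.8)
The plan is to reduce the stated identity to a cleaner algebraic form and then derive it from Green's second identity on a truncated strip. Since $2\i\,\Im N(x,y)=N(x,y)-\overline{N(x,y)}$ and $S(x,y)=N(x,y)-G(x,y)$, the right-hand side of \eqref{hk1} collapses to $G(x,y)-\overline{N(x,y)}$, so it suffices to prove
\[
\int_{\Ga_h}\frac{\partial G(x,\zeta)}{\partial\zeta_2}\,\overline{N(\zeta,y)}\,ds(\zeta)=G(x,y)-\overline{N(x,y)}.
\]
To this end I would fix $x,y\in\R^2_h$ and work on $\Omega_R=(-R,R)\times(0,h)$ for $R$ large enough that $x,y\in\Omega_R$. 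Viewing $u(\zeta)=G(x,\zeta)$ and $v(\zeta)=\overline{N(\zeta,y)}$ as functions of $\zeta$, reciprocity (visible directly from the mode expansion \eqref{GreenN} and the explicit formula \eqref{G}) gives $\Delta_\zeta u+k^2u=-\delta_x$ and $\Delta_\zeta v+k^2v=-\delta_y$. Green's second identity on $\Omega_R$, made rigorous by excising small balls around $x$ and $y$, then yields $\int_{\Omega_R}(u\Delta v-v\Delta u)\,d\zeta=\overline{N(x,y)}-G(x,y)$ for the volume term.

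For the boundary I would treat the four sides of $\partial\Omega_R$ separately. On $\Ga_0$ both $u$ and $v$ vanish by the Dirichlet conditions for $G$ and $N$, so that piece contributes nothing. On $\Ga_h$ the outward normal is $+\hat e_2$ and $\partial_{\zeta_2}v=\overline{\partial_{\zeta_2}N(\zeta,y)}=0$ by the Neumann condition on $N$, so the $\Ga_h$ contribution reduces to $-\int_{\Ga_h\cap\{|\zeta_1|<R\}}\overline{N(\zeta,y)}\,\partial_{\zeta_2}G(x,\zeta)\,ds$, exactly the negative of the integral we are after. Rearranging, the identity follows once the two vertical sides $\Ga_R^{\pm}$ are shown to drop out as $R\to\infty$.

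The main obstacle is precisely this vanishing of the side contributions, which hinges on combining the two different radiation behaviours of $G$ and $N$. On $\Ga_R^{\pm}$ the points $\zeta=(\pm R,\zeta_2)$ with $\zeta_2\in(0,h)$ recede to infinity from $x$ and its image, so the asymptotics of the Hankel functions $H_0^{(1)},H_1^{(1)}$ in \eqref{G} give $G(x,\zeta)=O(R^{-1/2})$ and $\nabla_\zeta G(x,\zeta)=O(R^{-1/2})$. Meanwhile, since $|\zeta_1-y_1|=|\pm R-y_1|\ge\alpha h$ for $R$ large, Lemma \ref{boundness} bounds $N(\zeta,y)$ and $\nabla_\zeta N(\zeta,y)$ uniformly on $\Ga_R^{\pm}$. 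As each vertical side has fixed length $h$, the integrand $u\,\partial_n v-v\,\partial_n u=O(R^{-1/2})$ integrates to $O(R^{-1/2})\to0$, eliminating both side terms.

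Finally I would record two points about passing to the limit. First, the integral over $\Ga_h$ must be read as the symmetric improper limit $\lim_{R\to\infty}\int_{\Ga_h\cap\{|\zeta_1|<R\}}$, matching the truncation of $\Omega_R$; its existence follows from the oscillatory decay of the integrand, whose propagating part behaves like $|\zeta_1|^{-1/2}e^{\i(k-\xi_n)|\zeta_1|}$ with $k-\xi_n>0$. Second, the coincident case $x=y$ causes no difficulty: $S=N-G$ solves the homogeneous Helmholtz equation in $\R^2_h$ with zero Dirichlet data on $\Ga_0$ because the $\delta$-sources cancel, so the singular parts of $N$ and $G$ cancel and both $\Im N(x,y)$ and $S(x,y)$ stay finite, allowing the excision argument to be carried out around the single point.
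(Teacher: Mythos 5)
Your proof is correct and is essentially the paper's argument in conjugated form: the paper combines two representation formulas (one for $\Im G$, one for the pair $(G,N)$), which amounts to applying Green's second identity to the pair $\bigl(\overline{G(x,\cdot)},\,N(\cdot,y)\bigr)$ — the complex conjugate of your pairing $\bigl(G(x,\cdot),\,\overline{N(\cdot,y)}\bigr)$ — and then discards the vertical sides $\Ga_R^{\pm}$ exactly as you do, using the $O(R^{-1/2})$ Hankel asymptotics for $G$ together with Lemma \ref{boundness} for $N$, before conjugating at the very end. The only difference is presentational: you conjugate at the outset and handle the coincident case $x=y$ explicitly, which the paper leaves implicit.
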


\debproof Let $x,y\in B_R=(-R,R)\times(0,h)$ for some $R>0$. Since $\Im G(x,\cdot)$ satisfies the Helmholtz equation, by the integral representation formula we obtain
\ben
\Im G(x,y)=\int_{\pa B_R}\left(\frac{\pa\, \Im G(x,\zeta)}{\pa\nu(\zeta)}N(\zeta,y)-\frac{\pa N(\zeta,y)}{\pa\nu(\zeta)}\Im G(x,\zeta)\right)ds(\zeta).
\een
Again by the integral representation formula we have
\ben
\int_{\pa B_R}\left(\frac{\pa G(x,\zeta)}{\pa\nu(\zeta)}N(\zeta,y)-\frac{\pa N(\zeta,y)}{\pa\nu(\zeta)}G(x,\zeta)\right)ds(\zeta)=-N(x,y)+G(x,y)=-S(x,y).
\een
Thus, since $\Im G(x,y)=\frac 1{2\i}(G(x,y)-\overline{G(x,y)})$, we have
\be\label{x1}
2\i\,\Im G(x,y)+S(x,y)&=&-\int_{\pa B_R}\left(\frac{\pa\overline{G(x,\zeta)}}{\pa\nu(\zeta)}N(\zeta,y)-\frac{\pa N(\zeta,y)}{\pa\nu(\zeta)}\overline{G(x,\zeta)}\right)ds(\zeta)\nn\\
&=&-\int_{\Ga_h\cap\pa B_R}\frac{\pa\overline{G(x,\zeta)}}{\pa\nu(\zeta)}N(\zeta,y)ds(\zeta)\nn\\
&-&\int_{\Ga^\pm_R}\left(\frac{\pa \overline{G(x,\zeta)}}{\pa\nu(\zeta)}N(\zeta,y)-\frac{\pa N(\zeta,y)}{\pa \nu(\zeta)}\overline{G(x,\zeta)}\right)ds(\zeta),
\ee
where we have used $\frac{\pa N(\zeta,y)}{\pa\nu(\zeta)}=0$ on $\Ga_h$ and $G(x,\zeta)=N(\zeta,y)=0$ on $\Ga_0$.
By \eqref{G} we know that $|G(x,\zeta)|=O(|x-\zeta|^{-1/2})$ and $|\frac{\pa G(x,\zeta)}{\pa \zeta_1}|=O(|x-\zeta|^{-1/2})$ as $|x-\zeta|\to\infty$. Therefore, by using Lemma \ref{boundness}
we conclude that the integral on $\Ga^\pm_R$ in \eqref{x1} vanishes as $R\to\infty$. This shows by letting $R\to\infty$ that
\ben
2\i\,\Im G(x,y)+S(x,y)=-\int_{\Ga_h}\frac{\pa\overline{G(x,\zeta)}}{\pa\nu(\zeta)}N(\zeta,y)ds(\zeta).
\een
This completes the proof by taking the complex conjugate and noticing $2\i\,\Im G(x,y)+S(x,y)=2\i\,\Im N(x,y)+\overline{S(x,y)}$ .\qquad
\finproof

Now assume that there are $N_s$ sources and $N_r$ receivers uniformly distributed on $\Ga_h^d$, where $\Ga_h^d=\{(x_1,x_2)\in\Ga_h: x_1\in (-d,d)\}$, $d>0$ is the aperture.
We denote by $\Om\subset B_d=(-d,d)\times (0,h)$ the sampling domain in which the obstacle is sought. 
Let $u^i(x,x_s)=N(x,x_s)$ be the incident wave and $u^s(x_r,x_s)=u(x_r,x_s)-u^i(x_r,x_s)$ be the scattered field measured at $x_r$, where $u(x,x_s)$ is the solution of the problem \eqref{waveguide}-\eqref{gammah0} and (\ref{mode2}). Our RTM algorithm consists of two steps. The first step is the back-propagation in which we back-propagate the complex conjugated data $\overline{u^s(x_r,x_s)}$ into the domain using the half space Green function $G(x,y)$. The second step is the cross-correlation in which we compute the imaginary part of the cross-correlation of $\frac{\pa G(x,y)}{\pa y_2}$  and the back-propagated field.

\bigskip
\begin{alg} {\sc (Reverse time migration)} \\
Given the data $u^s(x_r,x_s)$ which is the measurement of the scattered field at $x_r=(x_1(x_r),x_2(x_r))$ when the source is emitted at $x_s=(x_1(x_s),x_2(x_s))$, $s=1,\dots, N_s$, $r=1,\dots,N_r$. \\
$1^\circ$ Back-propagation: For $s=1,\dots,N_s$, compute the back-propagation field
\bee\label{back}
v_b(z,x_s)=\frac{|\Ga_h^d|}{N_r}\sum^{N_r}_{r=1}\frac{\pa G(z,x_r)}{\pa x_2(x_r)}\overline{u^s(x_r,x_s)},\ \ \ \ \forall z\in\Om.
\eee
$2^\circ$ Cross-correlation: For $z\in\Om$, compute
\bee\label{cor1}
I_d(z)=\Im\left\{\frac{|\Ga_h^d|}{N_s}\sum^{N_s}_{s=1} \frac{\pa G(z,x_s)}{\pa x_2(x_s)} v_b(z,x_s)\right\}.
\eee
\end{alg}

The back-propagation field $v_b$ can be viewed as the solution which satisfies the Sommerfeld radiation condition and the following equations:
\ben
& &\Delta v_b(x,x_s)+k^2v_b(x,x_s)=\frac{|\Ga_h^d|}{N_r}\sum^{N_r}_{r=1}\overline{u^s(x_r,x_s)}\frac{\pa}{\pa x_2}\de_{x_r}(x)\ \ \ \ \mbox{in }\R^2_+,\\
& &v_b(x,x_s)=0\ \ \ \ \mbox{on }\Ga_0.
\een
Taking the imaginary part of the cross-correlation of the incident field and the back-propagated field in \eqref{cor1} is motivated by the resolution analysis in the next section.
It is easy to see that
\bee\label{cor2}
I_d(z)=
\Im\left\{\frac{|\Ga_h^d||\Ga_h^d|}{N_sN_r}\sum^{N_s}_{s=1}\sum^{N_r}_{r=1}\frac{\pa G(z,x_r)}{\pa x_2(x_r)}\frac{\pa G(z,x_s)}{\pa x_2(x_s)}\overline{u^s(x_r,x_s)}\right\},\ \ \ \ \forall z\in\Om.
\eee
This formula is used in all our numerical experiments in section 7. By letting $N_s,N_r\to\infty$, we know that \eqref{cor2} can be viewed as an approximation of the following continuous integral:
\bee\label{cord}
\hat I_d(z)=\Im\int_{\Ga_h^d}\int_{\Ga_h^d}\frac{\pa G(z,x_r)}{\pa x_2(x_r)}\frac{\pa G(z,x_s)}{\pa x_2(x_s)}\overline{u^s(x_r,x_s)} ds(x_s)ds(x_r),\ \ \ \ \forall z\in\Om.
\eee

We will study the resolution of the function $\hat I_d(z)$ in the section 5. To this end we will first consider the resolution of the
finite aperture Helmholtz-Kirchhoff function in the next section.

To conclude this section we remark that our definition of the back-propagation field $v_b$ in \eqref{back} is motivated by the generalized Helmholtz-Kirchhoff identity in Lemma \ref{HK}. A straightforward extension of the RTM algorithm in \cite{cch_a, cch_e} would be to use $N(z,x_r)$ instead of $\frac{\pa G(z,x_r)}{\pa x_2(x_r)}$ in \eqref{back}
and $N(z,x_s)$ instead of $\frac{\pa G(z,x_s)}{\pa x_2(x_s)}$ in \eqref{cor1}.  
This would lead to the classical Kirchhoff migration imaging function \cite{bcs, tmp13}
\bee\label{cor3}
\tilde I_d(z)=
\frac{|\Ga_h^d||\Ga_h^d|}{N_sN_r}\sum^{N_s}_{s=1}\sum^{N_r}_{r=1}N(z,x_r)N(z,x_s)\overline{u^s(x_r,x_s)},\ \ \ \ \forall z\in\Om.
\eee
We will compare the performance of our imaging function $\hat I_d(z)$ and $\tilde I_d(z)$ in section 7. We note that $\tilde I_d(z)$ is divergent as $N_s,N_r\to\infty$ and $d\to\infty$.

\section{Resolution of the finite aperture Helmholtz-Kirchhoff function} 
By the Helmholtz-Kirchhoff identity \eqref{hk1} we know that for any $x,y\in\R_h$, 
\bee\label{y1}
\int_{\Ga_h^d}\frac{\pa G(x,\zeta)}{\pa\zeta_2}\overline{N(\zeta,y)}ds(\zeta)=2\i\,\Im N(x,y)-S(x,y)-S_d(x,y),
\eee
where 
\bee\label{y2}
S_d(x,y):=\int_{\Ga_h\bs\bar\Ga_h^d}\frac{\pa G(x,\zeta)}{\pa\zeta_2}\overline{N(\zeta,y)}ds(\zeta),\ \ \forall x,y\in\R_h.
\eee
The integral on the left-hand side of \eqref{y1}, $H_d(x,y)=\int_{\Ga_h^d}\frac{\pa G(x,\zeta)}{\pa\zeta_2}\overline{N(\zeta,y)}ds(\zeta)$,
will be called the finite aperture Helmholtz-Kirchhoff function in the following. In this section we will estimate $S(x,y)$ and $S_d(x,y)$ in \eqref{y1} which provides the resolution of $H_d(x,y)$. 

We assume the obstacle $D \subset\Om$ and there exist positive constants $c_0,c_1,c_2$, where $c_0,c_1\in (0,1)$, such that
\bee\label{cond-omega}
|y_1|\le c_0d,\ \ \ \ |y_2|\le c_1h,\ \ \ \ k|y_1-z_1|\le c_2\sqrt{kh},\ \ \ \ \forall y,z\in\Om.
\eee
The first condition means that the search domain should not be close to the boundary of the aperture. The second condition is rather mild in practical applications as we are interested in finding obstacles far away from the surface of the waveguide where the data is collected. The third condition indicates that the horizontal width of the search domain should not be very large comparing with the thickness of the waveguide. This is reasonable since we are interested in the case when the size of the scatterer is smaller than or comparable with the probe wavelength and the thickness $h$ is large compared with the probe wavelength, i.e., $kh\gg 1$.

We start with the following formula for $S(x,y)$.

\begin{lemma}\label{lem:S}
Let $S(x,y)=N(x,y)-G(x,y)$. Then we have
\ben
    S(x,y) =   \frac{1}{2\pi}\int_{\rm SIP}\hat S_y(\xi,x_2)e^{\i\xi|x_1-y_1|}d \xi,\ \ \hat S_y(\xi,x_2)=-\frac{2\i}{\mu }\frac{\sin(\mu x_2) }{e^{2\i \mu h } +1} \sin(\mu y_2 ) e^{2\i\mu h}.
\een
\end{lemma}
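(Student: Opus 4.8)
The plan is to pass to the partial Fourier transform in the first variable, record the analogous spectral representation of the half-space Green function $G$, and subtract it from that of $N$; the difference collapses to a single term which matches the claimed kernel after an elementary trigonometric rewriting.

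First I would establish the spectral representation of $G(x,y)$ on the SIP. Recall the Weyl (Sommerfeld) representation of the free-space fundamental solution, $\frac{\i}{4}H_0^{(1)}(k|x-y|)=\frac{1}{2\pi}\int_{\rm SIP}\frac{\i}{2\mu}e^{\i\mu|x_2-y_2|}e^{\i\xi(x_1-y_1)}d\xi$, with $\mu=\sqrt{k^2-\xi^2}$ and the branch $\Re\sqrt{z}\ge 0$ fixed throughout the paper; the use of the SIP rather than the real axis is precisely what encodes the Sommerfeld radiation condition through the sign of $\Im\mu$. Note that this leading term is exactly the first term of $\hat N_y$ in \eqref{spectal_green}, so the normalizations are consistent. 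Applying the image formula \eqref{G} and using $x_2,y_2>0$ (so that $|x_2+y_2|=x_2+y_2$), I obtain $\hat G_y(\xi,x_2)=\frac{\i}{2\mu}\bigl(e^{\i\mu|x_2-y_2|}-e^{\i\mu(x_2+y_2)}\bigr)$ and hence $G(x,y)=\frac{1}{2\pi}\int_{\rm SIP}\hat G_y(\xi,x_2)e^{\i\xi(x_1-y_1)}d\xi$.

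Next I would subtract this from the representation \eqref{GreenN0}--\eqref{spectal_green} of $N$. The first two terms of $\hat N_y$ are precisely $\hat G_y$, so they cancel and only the third term survives, giving $\hat S_y(\xi,x_2)=-\frac{\i}{2\mu}\frac{2\sin(\mu x_2)}{\cos(\mu h)}\sin(\mu y_2)e^{\i\mu h}$. A short computation using $\cos(\mu h)=\tfrac12(e^{\i\mu h}+e^{-\i\mu h})$, equivalently $\frac{e^{\i\mu h}}{\cos(\mu h)}=\frac{2e^{2\i\mu h}}{e^{2\i\mu h}+1}$, turns this into the claimed kernel $\hat S_y(\xi,x_2)=-\frac{2\i}{\mu}\frac{\sin(\mu x_2)}{e^{2\i\mu h}+1}\sin(\mu y_2)e^{2\i\mu h}$.

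Finally I would replace the phase $e^{\i\xi(x_1-y_1)}$ by $e^{\i\xi|x_1-y_1|}$. Since $\hat S_y(\xi,x_2)$ depends on $\xi$ only through $\mu=\sqrt{k^2-\xi^2}$, it is even in $\xi$, and the SIP is invariant under $\xi\mapsto-\xi$; the change of variables $\xi\mapsto-\xi$ then shows the integral is even in $x_1-y_1$, so $(x_1-y_1)$ may be replaced by $|x_1-y_1|$. I expect the main obstacle to be the first step, namely carefully justifying that the inverse transform of $\hat G_y$ along the SIP, with the prescribed branch of $\sqrt{z}$, reproduces exactly the radiating half-space Green function \eqref{G} rather than some other solution of the same equations; the associated convergence and contour-deformation issues are the analogues for $G$ of those already invoked for $N$ via the limiting absorption principle, while the remainder of the argument is purely algebraic.
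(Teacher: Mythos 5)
Your proposal follows essentially the same route as the paper's proof: take the partial Fourier transform in $x_1$, observe that $\hat G_y(\xi,x_2)=\frac{\i}{2\mu}\bigl(e^{\i\mu|x_2-y_2|}-e^{\i\mu(x_2+y_2)}\bigr)$ cancels the first two terms of $\hat N_y$ in \eqref{spectal_green}, and invert the remaining term along the SIP. The only differences are that you make explicit three points the paper leaves implicit --- the Weyl-type justification of the SIP representation of $G$, the rewriting $\frac{e^{\i\mu h}}{\cos(\mu h)}=\frac{2e^{2\i\mu h}}{e^{2\i\mu h}+1}$, and the evenness argument allowing $(x_1-y_1)$ to be replaced by $|x_1-y_1|$ --- all of which are correct.
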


\debproof Let
\ben
\hat G_y(\xi,x_2)=\int^\infty_{-\infty}G(x,y)e^{-\i(x_1-y_1)\xi}dx_1,\ \ 
\hat S_y(\xi,x_2)=\int^\infty_{-\infty}S(x,y)e^{-\i(x_1-y_1)\xi}dx_1,
\een
be the Fourier transform of $G(x,y)$ and $S(x,y)$ in the first variable, respectively. It is easy to find that
\ben
    \hat G_y(\xi,x_2) = \frac{\textbf{i}}{2\mu}\bigg( e^{\textbf{i}\mu|x_2-y_2|} - e^{\textbf{i}\mu(x_2+y_2)} \bigg).
\een
Thus by \eqref{spectal_green} we know that
\ben
\hat S_y(\xi,x_2)=-\frac \i\mu \frac{\sin(\mu x_2) }{\cos(\mu h)} \sin(\mu y_2 ) e^{\i\mu h}.
\een
This completes the proof by taking the inverse Fourier transform along SIP. \qquad
\finproof

\begin{theorem}{\label{S}}
Let $kh>\pi/2$ and \eqref{cond-omega} be satisfied. We have
    \begin{equation}\label{s2}
    |S(x,y)|\leq \frac{C}{|\cos(kh)|}\frac{1}{\sqrt{k h}},\ \ |\na_x S(x,y)|\le\frac {Ck}{|\cos(kh)|}\frac{1}{\sqrt{kh}},\ \ \ \ \forall x,y\in\Om,
\end{equation}
where  $C$ is a constant independent of $k,h$ but may depend on $c_1, c_2$. \par
\end{theorem}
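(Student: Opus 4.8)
The plan is to estimate $S(x,y)$ directly from the Sommerfeld-integral formula in Lemma \ref{lem:S}, namely $S(x,y) = \frac{1}{2\pi}\int_{\rm SIP}\hat S_y(\xi,x_2)e^{\i\xi|x_1-y_1|}d\xi$ with $\hat S_y(\xi,x_2) = -\frac{2\i}{\mu}\frac{\sin(\mu x_2)}{e^{2\i\mu h}+1}\sin(\mu y_2)e^{2\i\mu h}$, where $\mu = \sqrt{k^2-\xi^2}$. First I would split the SIP into the propagating part, where $|\xi|\le k$ and $\mu$ is real, and the evanescent tails, where $|\xi|>k$ and $\mu = \i\sqrt{\xi^2-k^2}$ is purely imaginary. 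On the evanescent tails the factor $e^{2\i\mu h}=e^{-2h\sqrt{\xi^2-k^2}}$ decays like a Gaussian-type kernel, so the denominator $e^{2\i\mu h}+1$ is bounded below by roughly $1$ and that piece of the integral will contribute something controlled and small (the exponential decay in $h$ is what ultimately produces the $1/\sqrt{kh}$ factor). The delicate piece is the propagating part $|\xi|<k$.

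On the propagating segment $\mu = \sqrt{k^2-\xi^2}\in(0,k]$ is real, so $|e^{2\i\mu h}+1| = 2|\cos(\mu h)|$ and $|e^{2\i\mu h}|=1$; hence the integrand has modulus comparable to $\frac{1}{\mu}\frac{|\sin(\mu x_2)||\sin(\mu y_2)|}{|\cos(\mu h)|}$. The factor $1/|\cos(\mu h)|$ blows up precisely at the cut-off frequencies $\mu h = (2n-1)\pi/2$, i.e.\ at the poles coming from the vanishing denominator, which is exactly why the bound in \eqref{s2} carries the $1/|\cos(kh)|$ factor out front. The strategy here is to use the substitution $\mu = k\cos\theta$ (or $\xi = k\sin\theta$) to turn the integral into one over a bounded $\theta$-interval, then exploit that $\frac{\sin(\mu x_2)}{\mu}$ is uniformly bounded and that the oscillatory phase $e^{\i\xi|x_1-y_1|}$ together with the condition $k|y_1-z_1|\le c_2\sqrt{kh}$ from \eqref{cond-omega} keeps the contribution integrable. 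I expect the $1/\sqrt{kh}$ scaling to emerge from a stationary-phase or Laplace-type estimate in which the effective width of the region contributing to the integral scales like $(kh)^{-1/2}$.

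The main obstacle is controlling the integral near the cut-off poles where $\cos(\mu h)$ vanishes, since these are genuine singularities of $\hat S_y$ on the real axis. The resolution is that the SIP is designed to detour around these poles (as in Figure \ref{SIP}), so one never integrates through the singularity; instead one bounds the residue-type contributions and shows the remaining principal part is dominated by the single worst factor $1/|\cos(kh)|$ evaluated at the endpoint $\mu = k$. I would factor out $1/|\cos(kh)|$ uniformly and argue that the ratio $|\cos(kh)|/|\cos(\mu h)|$ stays integrable over the propagating segment after the detour, which is where the assumption $kh>\pi/2$ and the cut-off condition \eqref{cut-off} are needed.

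For the gradient bound $|\na_x S(x,y)|$, the plan is identical: differentiating under the integral sign brings down a factor of $\xi$ (from $\pa_{x_1}$) or $\mu\cos(\mu x_2)$ (from $\pa_{x_2}$), each of size $O(k)$, which accounts for the extra factor of $k$ in the second estimate of \eqref{s2}, and the same splitting and pole-avoidance argument then yields the desired bound $\frac{Ck}{|\cos(kh)|}\frac{1}{\sqrt{kh}}$.
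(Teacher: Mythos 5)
Your starting point (the Sommerfeld-integral formula of Lemma \ref{lem:S}) matches the paper's, but your splitting of the SIP into a ``propagating part'' ($|\xi|\le k$, $\mu$ real) and ``evanescent tails'' ($|\xi|>k$, $\mu$ imaginary) presumes integration along the real axis, and this is where the argument breaks down. On the real segment $(-k,k)$ the integrand has simple poles at $\xi=\pm\xi_n$, $n=1,\dots,M$ (where $\cos(\mu h)=0$), so the ``propagating part'' is not an absolutely convergent integral and no pointwise bound on its modulus can be integrated, since $\int 1/|\cos(\mu h)|\,d\xi$ diverges at each pole; in particular your claim that the ratio $|\cos(kh)|/|\cos(\mu h)|$ ``stays integrable over the propagating segment after the detour'' is false (simple poles give $1/|\xi-\xi_n|$ singularities). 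Nor can the detour-plus-residue repair work the way you describe: the (half-)residues at $\xi_n$ are exactly the propagating normal-mode terms $\frac{1}{h\xi_n}\sin(\mu_n x_2)\sin(\mu_n y_2)e^{\i\xi_n|x_1-y_1|}$ of \eqref{GreenN} --- these sum to an $O(1)$ quantity (they are what builds the $O(1)$ function $N(x,y)$ itself), not $O(1/\sqrt{kh})$, so smallness cannot be obtained term by term and you give no mechanism for the required cancellation between the principal-value integral and the residues. A second, independent gap: your proposed source of the $1/\sqrt{kh}$ factor, a stationary-phase estimate in $e^{\i\xi|x_1-y_1|}$, cannot work, because \eqref{cond-omega} allows $x_1=y_1$, in which case that phase is constant and there is no oscillation to exploit.

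What actually drives the paper's proof is a global deformation, not local detours: the SIP descends diagonally from the origin to depth $k\gamma$ with $\gamma=1/\sqrt{2kh}$ (segment $L_1$, of length $\sim\sqrt{k/h}$) and then runs horizontally at that fixed depth (segment $L_2$). Off the real axis one has $\Im\mu>0$, so the numerator factor $e^{2\i\mu h}$ is genuinely damped; on $L_2$ the bound $\mu_2 h\ge 1/(2\sqrt 2)$ makes the denominator $|1+e^{2\i\mu h}|$ uniformly bounded below (no cosine factor needed there at all), and the decay $e^{-\mu_2(2h-x_2-y_2)}$ --- which uses $x_2,y_2\le c_1h$ from \eqref{cond-omega} --- integrates to $O(1/\sqrt{kh})$. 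The factor $1/|\cos(kh)|$ enters only through the short segment $L_1$, by comparing $1+e^{2\i\mu h}$ with $1+e^{2\i kh}=2\cos(kh)e^{\i kh}$ via a derivative estimate, the $1/\sqrt{kh}$ there coming simply from the length of $L_1$. Finally, the depth $k\gamma$ is chosen precisely to balance this gain against the growth of $e^{\pm\i\xi(x_1-y_1)}$ off the real axis, which is where the hypothesis $k|x_1-y_1|\le c_2\sqrt{kh}$ is spent; your proposal never confronts this trade-off, which is the heart of the estimate.
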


We remark that since $\mu_1=\pi/(2h)$, the condition $kh>\pi/2$ means that there exists at least one propagating mode in the received scattering field on $\Ga_h$, which is the minimum requirement that any imaging method could work. We also remark that the decay estimate \eqref{s2} can not hold uniformly for $x,y\in\R^2_h$ since $N(x,y)$ keeps oscillatory and bounded as $|x_1-y_1|\to\infty$ but $G(x,y)$ decays to $0$ as $|x_1-y_1|\to\infty$. 

\debproof
Denote by $\gamma=1/\sqrt{2kh}$ and by the assumption $kh\ge\pi/2$, $\gamma\le 1/\sqrt{\pi}$. Write $\xi=\xi_1+ \i \xi_2$, $\xi_1,\xi_2\in\R$. Let ${\rm SIP}^+$ be the part of the SIP in the fourth quadrant. By taking the coordinate transform $\xi \rightarrow-\xi$ in the second quadrant we know from Lemma \ref{lem:S} that
\ben
S(x,y)=\frac 1{2\pi}\int_{{\rm SIP}^+}\hat S_y(\xi,x_2)(e^{\i\xi(x_1-y_1)}+e^{-\i\xi(x_1-y_1)})d\xi:=S_1(x,y)+S_2(x,y),
\een
where $S_j(x,y)=\frac 1{2\pi}\int_{L_j}\hat S_y(\xi,x_2)(e^{\i\xi(x_1-y_1)}+e^{-\i\xi(x_1-y_1)})d\xi$, $j=1,2$, and $L_1,L_2$ are the sections of ${\rm SIP}^+$ (see Figure \ref{SIP}):
\ben
L_1=\{\xi\in\C:\xi_1\in (0,k\gamma),\xi_2=-\xi_1\},\ \ L_2=\{\xi\in\C:\xi_1\in (k\gamma,\infty),\xi_2=-k\gamma\}.
\een
Let $\mu=\sqrt{k^2-\xi^2}=\mu_1+\i\mu_2$, $\mu_1,\mu_2\in\R$. It is easy to see that
\bee\label{s3}
   |\mu|^2 = \sqrt{(k^2-\xi_1^2+\xi_2^2)^2 + 4\xi_1^2\xi_2^2}, \ \ \ \ \mu_2=\frac{\sqrt 2\,\xi_1|\xi_2|}{\sqrt{|\mu|^2+(k^2-\xi_1^2+\xi_2^2)}}.
\eee
It is clear by using \eqref{cond-omega} that
\ben
|e^{\i\xi(x_1-y_1)}+e^{-\i\xi(x_1-y_1)}|\le 2e^{-\xi_2|x_1-y_1|}\le 2e^{c_2/\sqrt 2},\ \ \ \ \forall\xi\in L_1\cup L_2.
\een
Thus, since $|\sin(\mu x_2)\sin(\mu y_2)e^{\i\mu h}|\le e^{-\mu_2(2h-x_2-y_2)}$, we have
\bee\label{s4}
|S_j(x,y)|\le C\left|\int_{L_j}\frac{1}{|\mu|}\frac 1{|1+e^{2\i\mu h}|}e^{-\mu_2(2h-x_2-y_2)}d\xi\right|.
\eee

We first estimate $S_1(x,y)$ and thus assume $\xi\in L_1$. By \eqref{s3} it is clear that $|\mu|\ge k$. Next
\be\label{s5}
\frac 1{|1+e^{2\i\mu h}|}&\le&\left|\frac 1{1+e^{2\i\mu h}}-\frac 1{1+e^{2\i kh}}\right|+\frac 1{|1+e^{2\i kh}|}\nn\\
&=&\frac 1{2|\cos(kh)|}\left(1+\frac{|e^{2\i\mu h}-e^{2\i kh}|}{|1+e^{2\i\mu h}|}\right).
\ee
By using the elementary inequality $1-e^{-t}\ge t-t^2/2$ for $t\ge 0$, we have $|1+e^{2\i\mu h}|\ge 1-e^{-2\mu_2h}\ge 2\mu_2h(1-\mu_2h)$. By \eqref{s3} we have $\mu_2=\frac{\sqrt 2\xi_1^2}{\sqrt{|\mu|^2+k^2}}$ which implies $\mu_2 h\le \frac{\xi_1^2 h}{k}\le \frac{1}{2}$ for $\xi\in L_1$.
Therefore
$|1+e^{2\i\mu h}|\ge \frac{\sqrt 2\xi_1^2 h}{\sqrt{|\mu|^2+k^2}}$.
On the other hand, since $\mu=\sqrt{k^2+2\i\xi_1^2}$ on $L_1$, $\mu(0)=k$, by using elementary calculus one obtains
\ben
|e^{2\i\mu h}-e^{2\i kh}|&\le&\sqrt 2\max_{t\in (0,\xi_1)}\left|\frac{de^{2\i\mu h}}{d\xi_1}\Big|_{\xi_1=t}\right| \times \xi_1\\
&=&\sqrt 2\,\max_{t\in (0,\xi_1)}\left(\frac{4hte^{-2\mu_2(t)h}}{|\mu(t)|}\right)\,\xi_1
\le 4\sqrt 2\,\frac{\xi^2_1 h}{k}.
\een
Thus by \eqref{s5}
\ben
\frac 1{|1+e^{2\i\mu h}|}\le\frac C{|\cos(kh)|}\frac{\sqrt{|\mu|^2+k^2}}{k}\le \frac C{|\cos(kh)|},
\een
where we have used the fact that $|\mu|=(k^4+4\xi_1^4)^{1/4}\le k(1+4\gamma^4)^{1/4}\le Ck$ for $\xi\in L_1$. Now it follows from \eqref{s4} that
\bee\label{s6}
|S_1(x,y)|\le \frac C{|\cos(kh)|}\gamma\le \frac C{|\cos(kh)|}\frac{1}{\sqrt{kh}}.
\eee

Now we estimate $S_2(x,y)$ and thus let $\xi\in L_2$. By \eqref{s3} we have $|\mu|^2\le k^2+\xi_1^2+\xi_2^2$ and thus
\bee\label{s7}
\mu_2\ge\frac{\xi_1|\xi_2|}{\sqrt{k^2+\xi_2^2}}=\frac{\xi_1\gamma}{\sqrt{1+\gamma^2}}\ge\frac{\xi_1\gamma}{\sqrt 2},
\eee
which implies $\mu_2h\ge kh\gamma^2/\sqrt 2= 1/(2\sqrt 2)$ and consequently $|1+e^{2\i\mu h}|\ge 1-e^{-1/\sqrt 2}$ for $\xi\in L_2$. Now by \eqref{s4} we have
\bee\label{s8}
|S_2(x,y)|\le C\int_{k\gamma}^{+\infty}\frac 1{|\mu|}e^{-\mu_2(2h-x_2-y_2)}d\xi_1.
\eee
For $\xi_1\in (k\gamma,k/\sqrt\pi)$, we know from \eqref{s3} that $|\mu|\ge\sqrt{k^2-\xi_1^2}\ge k\sqrt{1-1/\pi}$. This implies by using \eqref{s7} that
\ben
\int_{k\gamma}^{\frac k{\sqrt\pi}}\frac 1{|\mu|}e^{-\mu_2(2h-x_2-y_2)}d\xi_1&\le&\frac Ck\int_{k\gamma}^{\frac k{\sqrt\pi}}e^{-(1-c_1)\frac{\xi_1h}{\sqrt{kh}}}d\xi_1\\
&=&-\frac{C}{\sqrt{kh}}\Big[e^{-(1-c_1)\frac{\xi_1h}{\sqrt{kh}}}\Big]\Big|^{\frac k{\sqrt\pi}}_{k\gamma}\le C\frac 1{\sqrt{kh}}.
\een
For $\xi_1>\frac k{\sqrt\pi}$, by \eqref{s3}, we have $|\mu|\ge \sqrt{2\xi_1|\xi_2|}\ge \frac{\sqrt{2}}{\sqrt[4]\pi}\,k\gamma^{1/2}$. Thus
\ben
\int_{\frac k{\sqrt\pi}}^{+\infty}\frac 1{|\mu|}e^{-\mu_2(2h-x_2-y_2)}d\xi_1&\le&\frac C{k\gamma^{1/2}}\int_{\frac k{\sqrt\pi}}^{+\infty}e^{-(1-c_1)\frac{\xi_1h}{\sqrt{kh}}}d\xi_1\\
&=&-\frac{C}{\sqrt[4]{kh}}\Big[e^{-(1-c_1)\frac{\xi_1h}{\sqrt{kh}}}\Big]\Big|_{\frac k{\sqrt\pi}}^{+\infty}\le C\frac 1{\sqrt{kh}}.
\een
This shows the first estimate in \eqref{s2} upon substituting into \eqref{s8} and noticing \eqref{s6}. The estimate for $\na_xS(x,y)$ can be proved in a similar way by noticing that
\ben
& &\frac{\pa S(x,y)}{\pa x_1}=\frac 1{2\pi}\int_{{\rm SIP}^+}\frac{\xi}{\mu}\frac{\sin(\mu x_2)}{\cos(\mu h)}\sin(\mu y_2)e^{\i\mu h}(e^{\i\xi(x_1-y_1)}-e^{-\i\xi(x_1-y_1)})d\xi,\\
& &\frac{\pa S(x,y)}{\pa x_2}=\frac 1{2\pi}\int_{{\rm SIP}^+}(-\i)\frac{\cos(\mu x_2)}{\cos(\mu h)}\sin(\mu y_2)e^{\i\mu h}(e^{\i\xi(x_1-y_1)}+e^{-\i\xi(x_1-y_1)})d\xi.
\een
Here we omit the details. This completes the proof.
\finproof

Now we consider the effect of the finite aperture by estimating $S_d(x,y)$ in \eqref{y2}.
We first recall the following estimate for the first Hankel function in \cite[P.197]{watson}.

\begin{lemma}\label{lem:hankel} For any $t>0$, we have 
\ben
H^{(1)}_0(t)=\left(\frac 2{\pi t}\right)^{1/2}e^{\i(t-\pi/4)}+R_0(t),\ \ 
H^{(1)}_1(t)=\left(\frac 2{\pi t}\right)^{1/2}e^{\i(t-3\pi/4)}+R_1(t),
\een
where $|R_j(t)|\le Ct^{-3/2}$, $j=0,1$, for some constant $C>0$ independent of $t$.
\end{lemma}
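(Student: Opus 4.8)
The plan is to prove the stated bound for \emph{all} $t>0$ by splitting the range into a large-argument regime $t\ge 1$ and a bounded regime $0<t\le1$, since the content of the lemma is not merely the leading asymptotics as $t\to\infty$ but a remainder estimate $|R_j(t)|\le Ct^{-3/2}$ holding uniformly down to the origin. For the large-argument regime I would start from the classical leading-order asymptotics of the Hankel function of the first kind, $H^{(1)}_\nu(t)=(2/\pi t)^{1/2}e^{\i(t-\nu\pi/2-\pi/4)}(1+O(t^{-1}))$, specialised to $\nu=0$ and $\nu=1$. This reproduces exactly the two explicit main terms in the statement, and since the overall prefactor has size $t^{-1/2}$, a \emph{relative} error of order $t^{-1}$ becomes an \emph{absolute} remainder of order $t^{-3/2}$, which is precisely the claimed bound.

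To make the relative $O(t^{-1})$ error rigorous and, crucially, uniform in $t$, I would not invoke the asymptotic series as a black box but derive it from the integral representation
\[
H^{(1)}_\nu(t)=\Big(\frac{2}{\pi t}\Big)^{1/2}\frac{e^{\i(t-\nu\pi/2-\pi/4)}}{\Gamma(\nu+1/2)}\int_0^\infty e^{-s}s^{\nu-1/2}\Big(1+\frac{\i s}{2t}\Big)^{\nu-1/2}\rd s,
\]
valid for $\Re\nu>-1/2$ and in particular for $\nu=0,1$. Writing $(1+\i s/2t)^{\nu-1/2}-1$ and estimating it by $C\,s/t$ on the range $s\le 2t$ while bounding it by a constant on $s>2t$, the integral equals $\Gamma(\nu+1/2)$ plus an error controlled by $\frac{C}{t}\int_0^\infty e^{-s}s^{\nu+1/2}\rd s=\frac{C}{t}\Gamma(\nu+3/2)$ (the tail $s>2t$ being exponentially small for $t\ge1$). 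This gives $|R_j(t)|\le Ct^{-3/2}$ on $t\ge1$ with an explicit, $t$-independent constant.

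For the bounded regime $0<t\le1$ the estimate is elementary and does not use any asymptotics. Writing $H^{(1)}_\nu=J_\nu+\i Y_\nu$ and using the known behaviour near the origin, namely $J_0,J_1$ bounded, $Y_0(t)=O(|\ln t|)$ and $Y_1(t)=O(t^{-1})$, we obtain $|H^{(1)}_0(t)|=O(|\ln t|)$ and $|H^{(1)}_1(t)|=O(t^{-1})$, while the explicit main terms have size $t^{-1/2}$. Since each of $|\ln t|$, $t^{-1}$ and $t^{-1/2}$ is dominated by $t^{-3/2}$ on the interval $(0,1]$, the difference $R_j(t)$ again satisfies $|R_j(t)|\le Ct^{-3/2}$ there, and combining the two regimes proves the lemma.

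I expect the main obstacle to be the \emph{uniformity} of the remainder in the large-argument regime: one must certify that the $O(t^{-1})$ error carries a constant that does not blow up as $t$ decreases toward $1$, so that it can be glued to the bounded-regime bound. The integral representation is exactly what makes this transparent, since the error is reduced to a single convergent Gamma integral that is independent of $t$; the only care needed is the split at $s=2t$ to handle the region where $\i s/2t$ is no longer small. Everything else---matching the two regimes against the common envelope $t^{-3/2}$ and reading off the phases $-\pi/4$ and $-3\pi/4$ from $e^{-\i\nu\pi/2}$---is routine bookkeeping.
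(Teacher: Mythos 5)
Your proposal is correct in substance, but note that the paper does not actually prove this lemma at all: it is stated as a recollection of a classical estimate, with the proof being a citation to Watson's treatise (p.~197), where the asymptotics of $H^{(1)}_\nu$ are derived. What you have written is essentially a self-contained reconstruction of that classical argument: the Poisson--Hankel integral representation you quote is precisely the starting point of Watson's derivation, and your observation that a relative $O(t^{-1})$ error against a $t^{-1/2}$ prefactor yields the absolute bound $Ct^{-3/2}$ is the right way to read the lemma. Your second contribution --- checking the regime $0<t\le 1$ via the small-argument behaviour $J_0,J_1=O(1)$, $Y_0=O(|\ln t|)$, $Y_1=O(t^{-1})$, all dominated by $t^{-3/2}$ --- is exactly what is needed to turn the large-argument asymptotics into the uniform statement ``for any $t>0$'' that the paper asserts, and which a bare citation of an asymptotic expansion does not by itself deliver. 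So your route is more explicit and more careful about uniformity than the paper's, at the cost of invoking the integral representation as a known fact.

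One detail in your large-argument step is imprecise: for $\nu=1$ the factor $\bigl(1+\i s/(2t)\bigr)^{\nu-1/2}=\bigl(1+\i s/(2t)\bigr)^{1/2}$ is \emph{not} bounded by a constant on $s>2t$; it grows like $(s/t)^{1/2}$. The tail is still harmless because of the factor $e^{-s}$, but the cleaner fix is to avoid the split altogether: since the derivative of $x\mapsto(1+\i x)^{\pm 1/2}$ has modulus at most $1/2$ for all real $x$, one has the global bound
\begin{equation*}
\Bigl|\bigl(1+\tfrac{\i s}{2t}\bigr)^{\pm 1/2}-1\Bigr|\le \frac{s}{4t}\qquad\text{for all }s,t>0,
\end{equation*}
so the error in the integral is at most $\frac{1}{4t}\Gamma(\nu+3/2)$ uniformly in $t>0$. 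With this bound your argument in fact proves the lemma for \emph{all} $t>0$ in one stroke, and the case split at $t=1$ (while perfectly valid) becomes unnecessary.
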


For any $x,y\in\Om$, by the normal mode expression of the Green function $N(x,y)$ in \eqref{GreenN} we know that
\bee\label{y3}
S_d(x,y)=\sum^\infty_{n=1}\frac {-\i}{h\bar\xi_n}\sin(\mu_nh)\sin(\mu_ny_2)\int_{\Ga_h\bs\bar\Ga_h^d}\frac{\pa G(x,\zeta)}{\pa\zeta_2}e^{-\i\bar\xi_n|\zeta_1-y_1|}d\zeta_1.
\eee
For $\zeta=(\zeta_1,h)\in\Ga_h$, $G(x,\zeta)=\frac{\i}{4}H^{(1)}_0(k|x-\zeta|)-\frac{\i}{4}H^{(1)}_0(k|x-\zeta'|)$, where $\zeta'=(\zeta_1,-h)$. Thus, since $H^{(1)'}_0(\xi)=-H^{(1)}_1(\xi)$ for any $\xi\in\C$, 
\bee\label{y6}
\frac{\pa G(x,\zeta)}{\pa \zeta_2} =  f(x_1,\zeta_1,h+x_2)-f(x_1,\zeta_1,h-x_2),\ \ \forall\zeta\in\Ga_h,
\eee
where
\ben
f(x_1,\zeta_1,t)=\frac{\i}{4}H^{(1)}_1(k\Theta)\frac{kt}{\Theta},\ \ \Theta=\sqrt{(\zeta_1-x_1)^2+t^2},\ \ t \in [h-x_2,h+x_2].
\een
By Lemma \ref{lem:hankel} we have
\ben
f(x_1,\zeta_1,t)=\frac{\i}{4}e^{-\i\frac{3\pi}4}\left(\frac 2{\pi k\Theta}\right)^{1/2}\frac{kt}{\Theta}e^{\i k\Theta}
+\frac{\i}{4}R_1(k\Theta)\frac{kt}{\Theta}.
\een
Inserting the above equation into \eqref{y3} we obtain
\be\label{y4}
|S_d(x,y)|&\le&\max_{t\in [h-x_2,h+x_2]}\sum^\infty_{n=1}\frac {k^{1/2}t}{h|\xi_n|}\left|\int_{\Ga_h\bs\bar\Ga_h^d}
\Theta^{-3/2}e^{\i f_n(x_1,\zeta_1,t)}d\zeta_1\right|\nn\\
&+&\max_{t\in [h-x_2,h+x_2]}\sum^\infty_{n=1}\frac 1{h|\xi_n|}\left|\int_{\Ga_h\bs\bar\Ga_h^d}R_1(k\Theta)\frac{kt}{\Theta}e^{-\i\bar\xi_n|\zeta_1-y_1|}d\zeta_1\right|,
\ee
where $f_n(x_1,\zeta_1,t)=k\Theta-\bar\xi_n|\zeta_1-y_1|$. We note that for $1\le n\le M$, in which case $\xi_n=\sqrt{k^2-\mu_n^2}$ is real, $f_n(x_1,\zeta_1,t)$ has a critical point at $\zeta_1=x_1+p_n$:
\ben
\frac{\pa f_n}{\pa\zeta_1}(x_1,x_1+p_n,t)=0,\ \ \mbox{where }p_n=t\frac{\xi_n}{\mu_n},\ \ 1\le n\le M.
\een

\begin{lemma}\label{fn}
Let $1\le n\le M$ and \eqref{cond-omega} be satisfied. Then there exists a constant $C>0$ independent of $k,h,d$ such that
for any $x,y\in\Om$ and $t\in [h-x_2,h+x_2]$, we have,
\bee
\left|\int^\infty_\eta\Theta^{-3/2}e^{\i f_n(x_1,\zeta_1,t)}d\zeta_1\right|\le C\left(\frac{kt^2}{p_n^2}\right)^{-1}(\eta-x_1)^{-3/2},\ \ 
\forall\eta\ge x_1+2p_n,\label{f1}
\eee
and if $x_1+p_n/2\ge d$,
\bee
\left|\int^{x_1+p_n/2}_d\Theta^{-3/2}e^{\i f_n(x_1,\zeta_1,t)}d\zeta_1\right|\le C\left(\frac{kt^2}{p_n^2}\right)^{-1}(d-x_1)^{-3/2}.\label{f2}
\eee
\end{lemma}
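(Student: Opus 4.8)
The plan is to treat \eqref{f1} and \eqref{f2} as non-stationary phase integrals and extract the decay by one integration by parts. First I would check that the phase is smooth on each range: since $\Om\subset B_d$ and \eqref{cond-omega} holds, one has $\zeta_1>y_1$ throughout (for \eqref{f2} because $\zeta_1\ge d>c_0d\ge|y_1|$, and for \eqref{f1} because the gap $\zeta_1-x_1\ge 2p_n$ exceeds the horizontal width of $\Om$, which is controlled by $k|y_1-z_1|\le c_2\sqrt{kh}$), so $|\zeta_1-y_1|=\zeta_1-y_1$ and $f_n=k\Theta-\xi_n(\zeta_1-y_1)$ is differentiable in $\zeta_1$. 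Its only stationary point is $\zeta_1=x_1+p_n$, and both ranges are separated from it by a fixed factor, so I expect genuine decay rather than a stationary-phase contribution.

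The key algebraic step is to expose the zero of $\pa_{\zeta_1}f_n$ by rationalizing. Starting from $\pa_{\zeta_1}f_n=(k(\zeta_1-x_1)-\xi_n\Theta)/\Theta$ and multiplying through by $k(\zeta_1-x_1)+\xi_n\Theta$, and using $k^2-\xi_n^2=\mu_n^2$ together with $\xi_n^2t^2/\mu_n^2=p_n^2$, I obtain
\bee
\pa_{\zeta_1}f_n=\frac{\mu_n^2\big((\zeta_1-x_1)^2-p_n^2\big)}{\Theta\,\big(k(\zeta_1-x_1)+\xi_n\Theta\big)}.\nn
\eee
On the range of \eqref{f1} I then use $(\zeta_1-x_1)^2-p_n^2\ge\tfrac34(\zeta_1-x_1)^2$, and on the range of \eqref{f2} I use $p_n^2-(\zeta_1-x_1)^2\ge\tfrac34 p_n^2$; combined with $\Theta\ge t$ and $\Theta\ge|\zeta_1-x_1|$ these give lower bounds for $|\pa_{\zeta_1}f_n|$ in which the ratio $kt^2/p_n^2$ appears, and $|\pa_{\zeta_1}f_n|$ is moreover monotone on each range.

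With these bounds in hand I would integrate by parts via $e^{\i f_n}=(\i\,\pa_{\zeta_1}f_n)^{-1}\pa_{\zeta_1}e^{\i f_n}$, the amplitude being $\Theta^{-3/2}$ with $\pa_{\zeta_1}\Theta^{-3/2}=-\tfrac32(\zeta_1-x_1)\Theta^{-7/2}$. The boundary term at the finite left endpoint ($\zeta_1=\eta$ for \eqref{f1}, $\zeta_1=d$ for \eqref{f2}) is $\Theta^{-3/2}/|\pa_{\zeta_1}f_n|$, and using $\Theta\ge|\zeta_1-x_1|$ this is where the factors $(\eta-x_1)^{-3/2}$, resp. $(d-x_1)^{-3/2}$, and the prefactor $(kt^2/p_n^2)^{-1}$ should come from; the endpoint at $\infty$ in \eqref{f1} vanishes, and the endpoint at $x_1+p_n/2$ in \eqref{f2} is dominated by the contribution at $d$ since $\Theta$ is smallest there. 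The remaining integral of $\pa_{\zeta_1}\big(\Theta^{-3/2}/\pa_{\zeta_1}f_n\big)$ I would control using $\pa^2_{\zeta_1}f_n=kt^2\Theta^{-3}$ and the monotonicity just noted, showing it is of the same order as the boundary term.

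The main obstacle is the sharpness and uniformity of the lower bound for $|\pa_{\zeta_1}f_n|$, precisely because the endpoints lie only a constant multiple of $p_n$ away from the stationary point $x_1+p_n$: there the factor $(\zeta_1-x_1)^2-p_n^2$ (resp. $p_n^2-(\zeta_1-x_1)^2$) is small, so the crude bound $k(\zeta_1-x_1)+\xi_n\Theta\le 2k\Theta$ loses a factor and must be replaced by the full denominator wherever tightness is needed. This degeneracy is most severe for the propagating mode closest to cut-off, where $\xi_n$ and $p_n$ are smallest; establishing the stated prefactor $(kt^2/p_n^2)^{-1}$ with a constant $C$ uniform in $1\le n\le M$, and verifying that the interior integral does not overwhelm the boundary term in that regime, is the delicate part of the argument.
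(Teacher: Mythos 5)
Your plan follows the same route as the paper's own proof: a single integration by parts via $e^{\i f_n}=(\i\,\pa_{\zeta_1}f_n)^{-1}\pa_{\zeta_1}e^{\i f_n}$, a pointwise lower bound for $|\pa_{\zeta_1}f_n|$ on each range, with the boundary term producing the stated estimate. Your rationalized formula
\[
\pa_{\zeta_1}f_n=\frac{\mu_n^2\big((\zeta_1-x_1)^2-p_n^2\big)}{\Theta\,\big(k(\zeta_1-x_1)+\xi_n\Theta\big)}
\]
is correct and is an equivalent repackaging of the paper's \eqref{f3}, where the same quantity is bounded below using the monotonicity of $s\mapsto ks/\sqrt{s^2+t^2}$ and evaluation at $s=2p_n$ (resp.\ comparison of $s=p_n/2$ with $s=p_n$ for \eqref{f2}). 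Your final worry about the interior integral is disposed of in the paper in one line: since $\pa^2_{\zeta_1}f_n=kt^2/\Theta^3>0$, the function $\Theta^{-3/2}/\pa_{\zeta_1}f_n$ is monotone on each range, so $\int\big|\pa_{\zeta_1}\big(\Theta^{-3/2}/\pa_{\zeta_1}f_n\big)\big|\,d\zeta_1$ equals the absolute value of the corresponding telescoping integral and is dominated by the boundary term.

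The genuine gap is that the decisive inequality $|\pa_{\zeta_1}f_n|\ge C\,kt^2/p_n^2$ is never proved: you defer it as ``the delicate part'', and it carries the entire content of the lemma. Moreover, your suspicion about uniformity near cut-off is not a technicality but a true obstruction. Setting $a=p_n/t$, the endpoint value that both you and the paper must use satisfies
\[
\frac{2a}{\sqrt{4a^2+1}}-\frac{a}{\sqrt{a^2+1}}
=\frac{3a}{\sqrt{(4a^2+1)(a^2+1)}\,\big(2\sqrt{a^2+1}+\sqrt{4a^2+1}\big)}\ \sim\ a\qquad(a\to0),
\]
whereas the claimed lower bound is $C/a^2$; hence \eqref{f3} (and its analogue for \eqref{f2}) holds with a uniform constant only when $a=\xi_n/\mu_n$ is bounded away from zero, i.e.\ away from cut-off. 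In fact \eqref{f1} itself cannot hold with a constant independent of $k$: take $x_1=y_1=0$, $t=h$, $\eta=2p_M$, and let $k\downarrow\mu_M$; then the left-hand side of \eqref{f1} converges (by dominated convergence) to $\big|\int_0^\infty(\zeta_1^2+h^2)^{-3/4}e^{\i\mu_M\sqrt{\zeta_1^2+h^2}}\,d\zeta_1\big|$, which is nonzero for $\mu_M h$ large by stationary phase at $\zeta_1=0$, while the right-hand side is of order $p_M^{1/2}/(kh^2)$ and tends to zero. So what you call delicate is unprovable as stated; both your factorization and the paper's \eqref{f3} do yield the claimed prefactor under the extra hypothesis $p_n\ge c\,t$ (equivalently $\xi_n\ge c\,\mu_n$), which is guaranteed where the paper applies \eqref{f1} with $\eta=x_1+2p_n$ (there $n\le n_0$, hence $\xi_n\ge Ck$ by \eqref{f6}), but not uniformly over $1\le n\le M$. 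The same degeneracy undercuts your preliminary claim that $\zeta_1>y_1$ on the range of \eqref{f1} (near cut-off $2p_n$ need not exceed the width of $\Om$), though that kink in the phase is a minor issue compared with the missing lower bound.
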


\debproof It is clear that for any $\zeta_1\ge\eta\ge x_1+2p_n$,
\bee\label{f3}
\frac{\pa f_n}{\pa\zeta_1}(x_1,\zeta_1,t)=k\frac{\zeta_1-x_1}{\Theta}-\xi_n\ge k\left(\frac{2p_n}{\sqrt{4p_n^2+t^2}}-\frac{p_n}{\sqrt{p_n^2+t^2}}\right)\ge C\frac{kt^2}{p_n^2}.
\eee
Thus $\frac{\Theta(x_1,\zeta_1,t)^{-3/2}}{\i\pa_{\zeta_1}f_n(x_1,\zeta_1,t)}\to 0$ 
as $\zeta_1\to\infty$. By integration by parts we have then
\be\label{yy0}
\int^\infty_\eta\Theta^{-3/2}e^{\i f_n(x_1,\zeta_1,t)}d\zeta_1&=&-\frac{\Theta(x_1,\eta,t)^{-3/2}}{\i \pa_{\zeta_1} f_n(x_1,\eta,t)}e^{\i f_n(x_1,\eta,t)}\nn\\
&-&\int^\infty_\eta\frac{\pa}{\pa\zeta_1}\left(\frac{\Theta(x_1,\zeta_1,t)^{-3/2}}{\i \pa_{\zeta_1}f_n(x_1,\zeta_1,t)}\right)e^{\i f_n(x_1,\zeta_1,t)}d\zeta_1.
\ee
Since $\frac{\pa^2f_n(x_1,\zeta_1,t)}{\pa{\zeta_1^2}}=\frac{kt^2}{\Theta^3}$,
$\frac{\pa}{\pa\zeta_1}\left(\frac{\Theta(x_1,\zeta_1,t)^{-3/2}}{\pa_{\zeta_1}f_n(x_1,\zeta_1,t)}\right)\le 0$.
Thus
\ben
\left|\int^\infty_\eta\frac{\pa}{\pa\zeta_1}\left(\frac{\Theta(x_1,\zeta_1,t)^{-3/2}}{\i \pa_{\zeta_1}f_n(x_1,\zeta_1,t)}\right)e^{\i f_n(x_1,\zeta_1,t)}d\zeta_1\right|&\le&\int^\infty_\eta\left|\frac{\pa}{\pa\zeta_1}\left(\frac{\Theta(x_1,\zeta_1,t)^{-3/2}}{\pa_{\zeta_1}f_n(x_1,\zeta_1,t)}\right)\right|d\zeta_1\\
&=&\left|\int^\infty_\eta\frac{\pa}{\pa\zeta_1}\left(\frac{\Theta(x_1,\zeta_1,t)^{-3/2}}{\pa_{\zeta_1}f_n(x_1,\zeta_1,t)}\right)d\zeta_1\right|\\
&\le&[\pa_{\zeta_1}f_n(x_1,\eta,t)]^{-1}(\eta-x_1)^{-3/2}.
\een
This shows \eqref{f1} by using \eqref{yy0} and \eqref{f3}. The estimate \eqref{f2} can be proved similarly since for $d\le\zeta_1\le x_1+p_n/2$,
\ben
\left|\frac{\pa f_n}{\pa\zeta_1}(x_1,\zeta_1,t)\right|=\xi_n-k\frac{\zeta_1-x_1}{\Theta}\ge 
k\left(\frac{p_n}{\sqrt{p_n^2+t^2}}-\frac{p_n}{\sqrt{p_n^2+4t^2}}\right)\ge C\frac{kt^2}{p_n^2}.
\een
This completes the proof.
\finproof

We will use the following Van der Corput lemma, see e.g. in \cite[Corollary 2.6.8]{grafakos}, to estimate the 
oscillatory integral around the critical point.

\begin{lemma}\label{van}
There is a constant $C>0$ such that for any $-\infty<a<b<\infty$, for every real-valued $C^2$ function $u$ that 
satisfies $u''(t)\ge 1$ for $t\in (a,b)$, for any function $\psi$ defined on $(a,b)$ with an integrable derivative, and for any $\lambda>0$,
\ben
\left|\int^b_a e^{\i\lambda u(t)}\psi(t)dt\right|\le C\lambda^{-1/2}\left[|\psi(b)|+\int^b_a|\psi'(t)|dt\right],
\een
where the constant $C$ is independent of the constants $a,b,\lambda$ and the functions $u,\psi$.
\end{lemma}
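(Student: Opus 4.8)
The plan is to prove the estimate in two stages: first establish the special case $\psi\equiv 1$, and then recover the general statement by an integration-by-parts (summation-by-parts) argument. The heart of the matter is the stationary-phase bound
$$\left|\int^b_a e^{\i\lam u(t)}\,dt\right|\le C\lam^{-1/2},$$
valid whenever $u''\ge 1$ on $(a,b)$, with $C$ an absolute constant.

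To prove this bound I would first use that $u''\ge 1$ forces $u'$ to be strictly increasing, so $u'$ vanishes at most at one point $c\in[a,b]$; if it has no zero the argument below only simplifies. For a parameter $\de>0$ to be chosen, I would split $[a,b]$ into the central piece $I_0=(c-\de,c+\de)\cap(a,b)$ and its complement. On $I_0$ I would use the trivial bound $|\int_{I_0}e^{\i\lam u}\,dt|\le|I_0|\le 2\de$. On each component of the complement the monotonicity of $u'$ together with $u''\ge 1$ gives $|u'(t)|=|u'(t)-u'(c)|\ge|t-c|\ge\de$, so I may integrate by parts by writing $e^{\i\lam u}=(\i\lam u')^{-1}\frac{d}{dt}e^{\i\lam u}$. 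The boundary terms are controlled by $(\lam\de)^{-1}$, and the remaining integral is $\int\frac{d}{dt}\big((\lam u')^{-1}\big)e^{\i\lam u}\,dt$; since $u'$ keeps a constant sign on each such component, $1/u'$ is monotone there, so the total variation of $1/u'$ telescopes to an endpoint difference bounded again by $(\lam\de)^{-1}$. Collecting the pieces yields a bound of the form $2\de+C(\lam\de)^{-1}$, and the choice $\de=\lam^{-1/2}$ optimizes this to $C\lam^{-1/2}$.

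For the general weight $\psi$, I would set $F(t)=\int_a^t e^{\i\lam u(s)}\,ds$ and observe that the special case, applied on every subinterval $(a,t)$, gives $|F(t)|\le C\lam^{-1/2}$ uniformly in $t\in[a,b]$. Integration by parts then gives
$$\int_a^b e^{\i\lam u(t)}\psi(t)\,dt=F(b)\psi(b)-\int_a^b F(t)\psi'(t)\,dt,$$
whence $|\int_a^b e^{\i\lam u}\psi\,dt|\le|F(b)||\psi(b)|+\int_a^b|F(t)||\psi'(t)|\,dt\le C\lam^{-1/2}\big[|\psi(b)|+\int_a^b|\psi'(t)|\,dt\big]$, which is exactly the claimed inequality.

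I expect the main obstacle to be the careful treatment of the central region and the telescoping of the variation of $1/u'$: one must verify that $u'$ keeps a constant sign on each component of the complement of $I_0$ (so that $1/u'$ is genuinely monotone and the integral of its derivative reduces to the endpoint difference), and one must handle cleanly the degenerate cases in which $c$ coincides with an endpoint or lies outside $[a,b]$. Once the uniform bound $|F(t)|\le C\lam^{-1/2}$ is secured, the passage to general $\psi$ is routine.
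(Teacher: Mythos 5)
Your proof is correct, and it is exactly the argument behind the result the paper invokes: the paper gives no proof of this lemma at all, quoting it directly from \cite[Corollary 2.6.8]{grafakos}. Your two-stage argument---the stationary-phase bound for $\psi\equiv 1$ obtained by isolating a window of width $\de=\lam^{-1/2}$ around the (at most one) zero of $u'$, using $|u'(t)|\ge |t-c|$ and integration by parts with the monotone amplitude $1/u'$ off that window, followed by summation by parts with $F(t)=\int_a^t e^{\i\lam u(s)}ds$ to handle general $\psi$---is precisely the classical van der Corput proof found in that reference, with the degenerate cases (no zero of $u'$, or $c$ at an endpoint) correctly identified and disposed of.
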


We remark that if the function $\psi$ in Lemma \ref{van} is monotonic decreasing and non-negative in $(a,b)$, then we have
\bee\label{f5}
|\psi(b)|+\int^b_a|\psi'(t)|dt=|\psi(b)|+\left|\int^b_a\psi'(t)dt\right|=|\psi(a)|.
\eee

\begin{lemma}\label{vn}
Let $1\le n\le M$ and \eqref{cond-omega} be satisfied. Then there exists a constant $C>0$ independent of $k,h,d$ such that
for any $x,y\in\Om$ and $t\in [h-x_2,h+x_2]$, we have
\bee\label{f4}
\left|\int^{x_1+2p_n}_\eta\Theta^{-3/2}e^{\i f_n(x_1,\zeta_1,t)}d\zeta_1\right|\le Ckt^{-1}\xi_n^{-3/2},\ \ \forall\eta\ge x_1+p_n/2.
\eee
\end{lemma}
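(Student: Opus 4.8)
The plan is to estimate the oscillatory integral directly with the Van der Corput Lemma \ref{van}, which is tailored to intervals containing (or close to) the stationary point $\zeta_1=x_1+p_n$ of the phase $f_n$, precisely the regime in which the integration-by-parts estimates of Lemma \ref{fn} break down. Since $1\le n\le M$, $\xi_n$ is real, so $\bar\xi_n=\xi_n$ and $f_n(x_1,\zeta_1,t)=k\Theta-\xi_n(\zeta_1-y_1)$ with $\pa^2 f_n/\pa\zeta_1^2=kt^2\Theta^{-3}>0$ (as computed in the proof of Lemma \ref{fn}). First I would obtain a uniform lower bound for this second derivative on $[\eta,x_1+2p_n]$: there $\zeta_1-x_1\le 2p_n$, hence $\Theta\le(4p_n^2+t^2)^{1/2}$ and therefore $\pa^2 f_n/\pa\zeta_1^2\ge\lambda:=kt^2(4p_n^2+t^2)^{-3/2}$. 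Writing $u=f_n/\lambda$ gives $u''\ge 1$, so Lemma \ref{van} applies with amplitude $\psi=\Theta^{-3/2}$ and parameter $\lambda$.

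Second, I would simplify the Van der Corput bracket. On $[\eta,x_1+2p_n]$ one has $\zeta_1>x_1$, so $\Theta$ is increasing and $\psi=\Theta^{-3/2}$ is positive and monotone decreasing; by the remark \eqref{f5} the bracket collapses to $\psi(\eta)=\Theta(x_1,\eta,t)^{-3/2}$. Using $\Theta(x_1,\eta,t)\ge\eta-x_1\ge p_n/2$ gives $\psi(\eta)\le(p_n/2)^{-3/2}$, so the left-hand side of \eqref{f4} is bounded by $C\lambda^{-1/2}(p_n/2)^{-3/2}$; it then only remains to convert this into the stated bound.

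The final step, and the one carrying the real content, is the cancellation of the $\mu_n$ powers. Using $\xi_n^2+\mu_n^2=k^2$ and $p_n=t\xi_n/\mu_n$, I would write $4p_n^2+t^2=t^2(3\xi_n^2+k^2)/\mu_n^2\le 4k^2t^2/\mu_n^2$, where the inequality uses $\xi_n\le k$; hence $\lambda^{-1/2}\le C\,k\,t^{1/2}\mu_n^{-3/2}$. On the other hand $(p_n/2)^{-3/2}=C\,\mu_n^{3/2}t^{-3/2}\xi_n^{-3/2}$. Multiplying the two, the factors $\mu_n^{\pm 3/2}$ cancel and the powers of $t$ combine to $t^{-1}$, leaving $C k t^{-1}\xi_n^{-3/2}$, which is \eqref{f4}. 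The main obstacle is not any single estimate but keeping this bookkeeping honest: the bound on $\Theta$ that fixes $\lambda$, the monotonicity of $\psi$, and the lower bound $\Theta(x_1,\eta,t)\ge p_n/2$ must all be arranged so that the $\mu_n$ dependence produced by $\lambda^{-1/2}$ is exactly undone by that produced by $\psi(\eta)$; it is the elementary identity $\xi_n^2+\mu_n^2=k^2$ (together with $\xi_n\le k$) that makes this cancellation possible and yields the $\xi_n^{-3/2}$ rather than a $\mu_n$ power.
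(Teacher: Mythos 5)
Your proof is correct and takes essentially the same route as the paper's: both apply the Van der Corput Lemma \ref{van} together with the monotonicity remark \eqref{f5} to get the bracket $\Theta(x_1,\eta,t)^{-3/2}\le C p_n^{-3/2}$, both use a lower bound of order $\mu_n^3/(k^2t)$ for $\pa^2_{\zeta_1}f_n$ on the interval, and both close with the identities $p_n=t\xi_n/\mu_n$, $\xi_n^2+\mu_n^2=k^2$ to cancel the $\mu_n$ powers. The only cosmetic difference is that the paper normalizes by the second derivative at the critical point $\zeta_1=x_1+p_n$ (showing the ratio is bounded below by $1/8$), whereas you take the uniform lower bound $kt^2(4p_n^2+t^2)^{-3/2}$ directly; these are the same estimate up to constants.
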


\debproof It is easy to see that for any $x_1+p_n/2\le \zeta_1\le x_1+2p_n$, 
\ben
\frac{\pa^2_{\zeta_1}f_n(x_1,\zeta_1,t)}{\pa^2_{\zeta_1}f_n(x_1,x_1+p_n,t)}=\frac{\Theta(x_1,x_1+p_n,t)^3}{\Theta(x_1,\zeta_1,t)^3}
\ge \left(\frac{p_n^2+t^2}{4p_n^2+t^2}\right)^{3/2}\ge \frac 18.
\een
We can use Lemma \ref{van} and \eqref{f5} for $\lambda=\pa^2_{\zeta_1}f_n(x_1,x_1+p_n,t)/8=\mu_n^3/(8k^2t)$ and $u(\zeta_1)=\frac{8\pa^2_{\zeta_1}f_n(x_1,\zeta_1,t)}{\pa^2_{\zeta_1}f_n(x_1,x_1+p_n,t)}$ to obtain, for any $\eta\ge x_1+p_n/2$,
\ben
\left|\int^{x_1+2p_n}_\eta\Theta^{-3/2}e^{\i f_n(x_1,\zeta_1,t)}d\zeta_1\right|\le C\left(\frac{\mu_n^3}{8k^2t}\right)^{-1/2}p_n^{-3/2},
\een
This completes the proof by using $p_n=t\xi_n/\mu_n$.
\finproof

\begin{lemma}\label{fn1}
Let $n\ge M+1$ and \eqref{cond-omega} be satisfied. Let the aperture $d\ge c_3 h$ for some constant $c_3>0$ independent of $k,h,d$. Then there exists a constant $C>0$ independent of $k,h,d$ such that for any $x,y\in\Om$ and $t\in [h-x_2,h+x_2]$, we have
\ben
\left|\int^\infty_d\Theta^{-3/2}e^{\i f_n(x_1,\zeta_1,t)}d\zeta_1\right|\le 
Ck^{-1}d^{-3/2}\chi_n,
\een
where $\chi_n=e^{-(1-c_0)|\xi_n|d}$ for $n\ge M+1$ and $\chi_n=1$ for $n\le M$.
\end{lemma}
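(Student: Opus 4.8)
The plan is to exploit the fact that for the evanescent modes $n\ge M+1$ the quantity $\xi_n=\i\sqrt{\mu_n^2-k^2}$ is purely imaginary. Since $\bar\xi_n=-\i|\xi_n|$ with $|\xi_n|=\sqrt{\mu_n^2-k^2}$, and $|\zeta_1-y_1|=\zeta_1-y_1$ for $\zeta_1\ge d>y_1$, one has $e^{\i f_n(x_1,\zeta_1,t)}=e^{\i k\Theta}e^{-|\xi_n|(\zeta_1-y_1)}$, so the integrand carries a built-in exponential decay. Writing $\zeta_1-y_1=(\zeta_1-d)+(d-y_1)$ and using $d-y_1\ge(1-c_0)d$ from \eqref{cond-omega}, the factor $e^{-|\xi_n|(d-y_1)}\le e^{-(1-c_0)|\xi_n|d}=\chi_n$ is exactly the claimed gain, so it only remains to produce the factor $Ck^{-1}d^{-3/2}$.

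First I would observe that the phase $k\Theta$ has no stationary point on $[d,\infty)$: since $\pa_{\zeta_1}(k\Theta)=k(\zeta_1-x_1)/\Theta$ and, for $\zeta_1\ge d$, the conditions \eqref{cond-omega} together with $d\ge c_3h$ give $\zeta_1-x_1\ge(1-c_0)d$ and $t\le(1+c_1)h\le(1+c_1)d/c_3$, the ratio $(\zeta_1-x_1)/\Theta$ is bounded below by a constant $c_4>0$ depending only on $c_0,c_1,c_3$. This allows me to integrate by parts against $e^{\i k\Theta}$ to gain the factor $k^{-1}$. Writing $\phi=k\Theta$ and $g=\Theta^{-3/2}e^{-|\xi_n|(\zeta_1-y_1)}$, integration by parts gives
\ben
\int_d^\infty g\,e^{\i\phi}\,d\zeta_1=\Big[\frac{g}{\i\phi'}e^{\i\phi}\Big]_d^\infty-\int_d^\infty\Big(\frac{g}{\i\phi'}\Big)'e^{\i\phi}\,d\zeta_1.
\een
The boundary term at infinity vanishes, while the one at $\zeta_1=d$ is bounded by $|g(d)|/(c_4k)\le Ck^{-1}d^{-3/2}\chi_n$ after using $\Theta(d)\ge(1-c_0)d$ and $e^{-|\xi_n|(d-y_1)}\le\chi_n$.

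For the remaining integral I would write $\frac{g}{\i\phi'}=\frac{1}{\i k}AB$ with $A=\Theta^{-1/2}/(\zeta_1-x_1)$ and $B=e^{-|\xi_n|(\zeta_1-y_1)}$, and estimate $\int_d^\infty|(AB)'|\,d\zeta_1\le\int_d^\infty|A'|B\,d\zeta_1+\int_d^\infty A|B'|\,d\zeta_1$. Here $A$ is positive and decreasing, so $\int_d^\infty|A'|\,d\zeta_1=A(d)\le Cd^{-3/2}$; bounding $B\le B(d)\le\chi_n$ handles the first term. For the second, $|B'|=|\xi_n|B$ and $A\le A(d)\le Cd^{-3/2}$, while $\int_d^\infty B\,d\zeta_1=|\xi_n|^{-1}e^{-|\xi_n|(d-y_1)}\le|\xi_n|^{-1}\chi_n$, so the stray factor $|\xi_n|$ cancels and this term is again $\le Cd^{-3/2}\chi_n$. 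With the prefactor $k^{-1}$ this yields $\le Ck^{-1}d^{-3/2}\chi_n$.

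The main obstacle is exactly this last cancellation: for $n=M+1$ the value $|\xi_n|=\sqrt{\mu_{M+1}^2-k^2}$ can be arbitrarily small, so any bound must avoid dividing by $|\xi_n|$. The point is that the slowly decaying exponential is compensated by the algebraic decay $\Theta^{-3/2}$: in the $A'B$ term I factor out the monotone $B$ and integrate only $|A'|$, which converges on its own, whereas in the $AB'$ term the $|\xi_n|$ produced by differentiating $B$ is precisely cancelled by the $|\xi_n|^{-1}$ from integrating $B$. Collecting the boundary term together with the two integral contributions gives the stated bound $Ck^{-1}d^{-3/2}\chi_n$.
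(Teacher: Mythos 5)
Your proof is correct and follows essentially the same route as the paper: a single integration by parts exploiting that the phase has no stationary point on $[d,\infty)$ (its derivative is bounded below by $Ck$ because $\zeta_1-x_1\ge(1-c_0)d\ge(1-c_0)c_3h$ dominates $t\le(1+c_1)h$), with the factor $\chi_n$ extracted from the evanescent decay $|e^{-\i\bar\xi_n|\zeta_1-y_1|}|\le\chi_n$ for $\zeta_1\ge d$. The only difference is bookkeeping: the paper applies \eqref{yy0} with the full complex phase $f_n$, using $|\pa_{\zeta_1}f_n|\ge k(\zeta_1-x_1)/\Theta\ge Ck$, whereas you move $e^{-|\xi_n|(\zeta_1-y_1)}$ into the amplitude and integrate by parts against $e^{\i k\Theta}$ alone — a cosmetic variant whose merit is that it supplies precisely the details the paper omits (monotonicity of $A$ and $B$, and the cancellation of $|\xi_n|$ that avoids any division by the possibly tiny $\sqrt{\mu_{M+1}^2-k^2}$).
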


\debproof Since $\xi_n=\i\sqrt{\mu_n^2-k^2}$ for $n\ge M+1$, by \eqref{cond-omega} we know that $|e^{-\i\bar\xi_n|\zeta_1-y_1|}|\le\chi_n$ for any $\zeta_1\ge d$. The proof of this lemma is essentially the same as that of Lemma \ref{fn} by using \eqref{yy0} and noticing that now we have $|\pa_{\zeta_1}f_n(x_1,\zeta_1,t)|\ge k\frac{|\zeta-x_1|}{\Theta}\ge Ck$ since $|\zeta_1-x_1|\ge (1-c_0)d\ge (1-c_0)c_3h$ and $|t|\le (1+c_1)h$. We omit the details.
\finproof

\begin{theorem}\label{aperture}
Let $kh>\frac {\pi}2$ and \eqref{cond-omega} be satisfied. Let $d\ge c_3 h$ for some constant $c_3>0$ independent of $k,h,d$.
Then there exists a constant $C>0$ independent of $k,h,d$ such that for any $x,y\in\Om$,
\ben
|S_d(x,y)|\le C\left(\frac 1{\sqrt{kd}}+\frac{h}d\right),\ \ 
|\na_x S_d(x,y)|\le Ck\left(\frac 1{\sqrt{kd}}+\frac hd\right).
\een
\end{theorem}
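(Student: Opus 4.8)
The plan is to estimate $S_d(x,y)$ from its normal mode representation \eqref{y3} together with the decomposition \eqref{y4}, which, via the leading Hankel asymptotics of Lemma \ref{lem:hankel} and formula \eqref{y6}, writes each mode's contribution as a \emph{principal} oscillatory integral $\int\Theta^{-3/2}e^{\i f_n}\,d\zeta_1$ with phase $f_n=k\Theta-\bar\xi_n|\zeta_1-y_1|$, plus a \emph{remainder} integral governed by $R_1$. By \eqref{cond-omega} we have $|y_1|\le c_0d<d$, so the portion $\zeta_1<-d$ of $\Ga_h\bs\bar\Ga_h^d$ is the reflection $\zeta_1\to-\zeta_1$ of the portion $\zeta_1>d$ and may be treated analogously; I therefore estimate only $\zeta_1>d$. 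I then split the mode sum into the $M$ propagating modes ($1\le n\le M$, $\xi_n$ real, $f_n$ stationary at $\zeta_1=x_1+p_n$ with $p_n=t\xi_n/\mu_n$) and the evanescent modes ($n\ge M+1$).

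The evanescent modes and the remainder are the routine part. For $n\ge M+1$, Lemma \ref{fn1} bounds the principal integral by $Ck^{-1}d^{-3/2}\chi_n$ with $\chi_n=e^{-(1-c_0)|\xi_n|d}$; since $|\xi_n|$ grows linearly in $n$, the series $\sum_{n\ge M+1}\frac{k^{1/2}t}{h|\xi_n|}k^{-1}d^{-3/2}\chi_n$ converges geometrically and is dominated by its first term $n=M+1$, which is at most of order $1/\sqrt{kd}$. The remainder integral is controlled by absolute values, using $|R_1(k\Theta)|\le C(k\Theta)^{-3/2}$ and $\int_d^\infty\Theta^{-5/2}\,d\zeta_1\le C(d-x_1)^{-3/2}$, and summed with the help of Lemma \ref{boundness}; it is of lower order.

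The heart of the proof is the principal part of the propagating modes. For each $1\le n\le M$ I decompose $[d,\infty)$ according to the position of the aperture edge $d$ relative to the stationary window $[x_1+p_n/2,\,x_1+2p_n]$: on the window I apply the Van der Corput estimate \eqref{f4} (together with \eqref{f5}), and outside it I integrate by parts as in \eqref{yy0}. The decisive observation is that the stationary window can meet $[d,\infty)$ only when $x_1+2p_n>d$, i.e.\ $p_n=t\xi_n/\mu_n>\tfrac12(d-x_1)$; by \eqref{cond-omega} and $d\ge c_3h$ this forces $\xi_n/\mu_n\ge C>0$, hence $\xi_n\ge C'k$ and $\mu_n\le Ckh/d$, so only the lowest $O(kh^2/d)$ modes ever carry a stationary contribution. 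Each such Van der Corput term is $O(1/(hk))$ since $\xi_n\gtrsim k$, and summing $O(kh^2/d)$ of them produces at most the $h/d$ term. The non-stationary pieces — the tail beyond $x_1+2p_n$ and the stretch between $d$ and the window — are handled by integration by parts, and their sum over $n$ produces the $1/\sqrt{kd}$ term.

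The \textbf{main obstacle} is this last summation. The bounds \eqref{f1}--\eqref{f2} rest on the minimal slope $|\partial_{\zeta_1}f_n|\ge Ckt^2/p_n^2$ valid \emph{near} the stationary point, and using them verbatim far from $x_1+p_n$ is too lossy to give $1/\sqrt{kd}$; what is needed is the \emph{true} slope at the aperture edge, $|\partial_{\zeta_1}f_n|=|k(d-x_1)/\Theta-\xi_n|$ at $\zeta_1=d$, which is of order $kt^2/(d-x_1)^2$ when $d$ sits inside a long non-stationary stretch and of order $k-\xi_n=\mu_n^2/(k+\xi_n)$ when $d$ lies far beyond the stationary point. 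Carrying the boundary term of \eqref{yy0} with this sharp slope, multiplying by the prefactor $k^{1/2}t/(h\xi_n)$, and summing against $\sum_n\mu_n^{-2}\le Ch^2$ (conveniently by converting the mode sum into an angular integral through $\xi_n=k\cos\theta_n$, $\mu_n=k\sin\theta_n$) is what delivers the sharp $1/\sqrt{kd}$. The remaining delicate point — paralleling the factor $1/|\cos(kh)|$ in Theorem \ref{S} — is the uniform control of modes with $\xi_n$ or $|\xi_{M+1}|$ close to a cut-off frequency, where the prefactor $1/|\xi_n|$ is large and the corresponding term must be estimated separately; this is where the analysis requires the most care. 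Finally, the gradient bound follows along identical lines, differentiation of $G$ in $x$ enlarging each amplitude by a factor $O(k)$ (compare the expressions for $\partial_{x_1}S$ and $\partial_{x_2}S$ in the proof of Theorem \ref{S}), which accounts for the extra factor $k$ in the estimate for $|\na_xS_d|$.
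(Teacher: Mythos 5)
Your overall plan coincides with the paper's proof in most respects: the splitting \eqref{y4} of each mode into a principal oscillatory integral and an $R_1$-remainder, Lemma \ref{fn1} for the evanescent modes, the Van der Corput bound \eqref{f4} on the stationary window $[x_1+p_n/2,\,x_1+2p_n]$, and your count that only modes with $\mu_n\lesssim kh/d$ (hence $O(kh^2/d)$ of them, each contributing $O(1/(hk))$) can carry a stationary point in $[d,\infty)$, giving the $h/d$ term, are exactly what the paper does. The genuine gap is at the one point where you abandon the paper's route. You claim that \eqref{f1}--\eqref{f2} ``used verbatim are too lossy to give $1/\sqrt{kd}$'' and that one must instead use the true slope of $f_n$ at the aperture edge. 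This is not so: the factor $(kt^2/p_n^2)^{-1}=\xi_n^2/(k\mu_n^2)$ in \eqref{f1}--\eqref{f2} cancels exactly the dangerous prefactor $1/\xi_n$ carried by each propagating mode in \eqref{y4}. Concretely, for the modes whose window lies to the left of $d$ (those with $x_1+2p_n\le d$, i.e. $\mu_n\gtrsim kt/d$, which is how $n_0$ is defined in \eqref{f6}), applying \eqref{f1} with $\eta=d$ gives per-mode contributions $\frac{k^{1/2}t}{h\xi_n}\cdot\frac{\xi_n^2}{k\mu_n^2}\,d^{-3/2}\le \frac{k^{1/2}t}{h\mu_n^2}\,d^{-3/2}$, and the mode-sum estimate $\sum_{n>n_0}h^{-1}\mu_n^{-2}\lesssim d/(kh)$ yields precisely $C/\sqrt{kd}$; the tails beyond $x_1+2p_n$ for $n\le n_0$ are summed the same way using $\sum_{n\le n_0}h^{-1}\mu_n^{-1/2}\lesssim (kh/d)^{1/2}$. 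No edge-slope refinement is needed.

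Your substitute, on the other hand, destroys this cancellation and leaves a hole you do not close. Integration by parts with the slope at $\zeta_1=d$, which far beyond the stationary point is of order $k-\xi_n\sim\mu_n^2/k$, produces per-mode bounds of order $\frac{k^{1/2}t}{h\xi_n}\cdot\frac{k}{\mu_n^2}\,d^{-3/2}$, in which the factor $1/\xi_n$ survives; as $\xi_n\to 0$ (a propagating mode near cut-off), or as $|\xi_{M+1}|\to 0$ on the evanescent side, where your ``dominated by the first term'' argument degenerates for the same reason, these terms blow up. You acknowledge exactly this at the end (``the uniform control of modes with $\xi_n$ or $|\xi_{M+1}|$ close to a cut-off frequency \dots requires the most care''), but you do not supply that argument, and your suggestion that it parallels the $1/|\cos(kh)|$ factor of Theorem \ref{S} cannot be right: the statement you must prove asserts a constant independent of $k,h,d$ with no such factor. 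So the proposal is missing precisely the step that produces the $1/\sqrt{kd}$ bound. A smaller bookkeeping slip: the stretch $[d,x_1+p_n/2]$ does not feed the $1/\sqrt{kd}$ term; in the paper it is handled by a further splitting at $n_2=kh/\sqrt{kd}$ (trivial bounds for $n\le n_2$, \eqref{f2} for $n>n_2$) and contributes $h/d$.
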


\debproof The starting point is \eqref{y4}. We first estimate the second term. Since $|R_1(k\Theta)|\le C(k\Theta)^{-3/2}$ by Lemma \ref{lem:hankel}, we have
\ben
\left|\int_{\Ga_h\bs\bar\Ga_h^d}R_1(k\Theta)\frac{kt}{\Theta}e^{\i\xi_n|\zeta_1-y_1|}d\zeta_1\right|
\le C\chi_nk^{-1/2}h\max_{|x_1|\le c_0d}\int^\infty_d\Theta^{-5/2}d\zeta_1\le\frac{C\chi_n}{\sqrt{kd}},
\een
where $\chi_n$ is defined in Lemma \ref{fn1} and we have used $d-x_1\ge (1-c_0)d\ge (1-c_0)c_3h$. This implies
\bee\label{y5}
\sum^\infty_{n=1}\frac 1{h|\xi_n|}\left|\int_{\Ga_h\bs\bar\Ga_h^d}R_1(k\Theta)\frac{kt}{\Theta}e^{\i\xi_n|\zeta_1-y_1|}d\zeta_1\right|\le \frac{C}{\sqrt{kd}},
\eee
where have used the fact that $\sum^\infty_{n=1}\frac{\chi_n}{h|\xi_n|}\le C$ by the argument in the proof of Lemma \ref{boundness}.
For estimating the first term in \eqref{y4}, we first use Lemma \ref{fn1} to obtain that 
\bee\label{yy7}
\sum^\infty_{n=M+1}\frac{k^{1/2}t}{h|\xi_n|}\left|\int_{\Ga_n\bs\bar\Ga_h^d}\Theta^{-3/2}e^{\i f_n(x_1,\zeta_1,t)}d\zeta_1\right|
\le\sum^\infty_{n=M+1}\frac{k^{1/2}t}{h|\xi_n|}\frac{C\chi_n}{kd^{3/2}}\le\frac{C}{\sqrt{kd}}.
\eee
It remains to estimate
\be\label{yy8}
& &\sum^M_{n=1}\frac{k^{1/2}t}{h|\xi_n|}\left|\int_{\Ga_n\bs\bar\Ga_h^d}\Theta^{-3/2}e^{\i f_n(x_1,\zeta_1,t)}d\zeta_1\right|\nn\\
&\le&\sum^M_{n=1}\frac{2}{h|\xi_n|}\max_{|x_1|\le c_0d}\left|\int_d^\infty k^{1/2}t\Theta^{-3/2}e^{\i f_n(x_1,\zeta_1,t)}d\zeta_1\right|.
\ee
Let $n_0\ge 1$ be such that $x_1+2p_{n_0}>d$ and $x_1+2p_{n_0+1}\le d$, which is equivalent to
\bee\label{f6}
k^{-1}\mu_{n_0}<\frac{2t}{\sqrt{d_1^2+4t^2}},\ \ k^{-1}\mu_{n_0+1}\ge\frac{2t}{\sqrt{d_1^2+4t^2}},\ \ d_1:=d-x_1.
\eee
Clearly $n_0\le M$ since $k^{-1}\mu_{n_0}< 1$. For $n\ge n_0+1$, we have $x_1+2p_n\le d$ and thus by \eqref{f1} with 
$\eta=d$ we obtain
\be
\sum^M_{n=n_0+1}\frac{k^{1/2}t}{h\xi_n}\left|\int_d^\infty \Theta^{-3/2}e^{\i f_n(x_1,\zeta_1,t)}d\zeta_1\right|
&\le&C\sum^M_{n=n_0+1}
\frac{k^{1/2}t}{h\xi_n}\left(\frac{kt^2}{p_n^2}\right)^{-1}d^{-3/2}\nn\\
&\le&C\frac{k^{1/2}t}{d^{3/2}}\sum^M_{n=n_0+1}\frac{\mu_n^{-2}}{h}\le \frac C{\sqrt{kd}},\label{yy3}
\ee
where we have used $\xi_n\le k$ for $n\le M$ and the fact that by \eqref{f6},
$\sum^M_{n=n_0+1}h^{-1}\mu_n^{-2}\le h^{-1}\mu_{n_0+1}^{-2}+\pi^{-1}\mu_{n_0+1}^{-1}\le Cd/(kh)$.

For $n\le n_0$, by \eqref{f6}, we have $\xi_n\ge \frac{kd_1}{\sqrt{d_1^2+4t^2}}\ge Ck$. By \eqref{f1} with $\eta=x_1+2p_n$ we obtain
\be\label{yy4}
\sum^{n_0}_{n=1}\frac{k^{1/2}t}{h\xi_n}\left|\int^\infty_{x_1+2p_n}\Theta^{-3/2}e^{\i f_n(x_1,\zeta_1,t)}d\zeta_1\right|
&\le&C\sum^{n_0}_{n=1}\frac{k^{1/2}t}{h\xi_n}\frac{p^{1/2}_n}{kt^2}\nn\\
&\le&C\sum^{n_0}_{n=1}\frac{1}{kt^{1/2}}\frac{\mu_n^{-1/2}}{h}\le\frac{C}{\sqrt{kd}},
\ee
where we have used the fact that $\sum^{n_0}_{n=1}h^{-1}\mu^{-1/2}_n\le \frac 2\pi\mu_{n_0}^{1/2}\le C(kh)^{1/2}/d^{1/2}$ by
\eqref{f6}.

To proceed, let $n_1\ge 1$ be such that $x_1+p_{n_1}/2>d$ and $x_1+p_{n_1+1}/2\le d$, which is equivalent to
\bee\label{f7}
k^{-1}\mu_{n_1}<\frac{t}{\sqrt{4d_1^2+t^2}},\ \ k^{-1}\mu_{n_1+1}\ge\frac{t}{\sqrt{4d_1^2+t^2}},\ \ d_1:=d-x_1.
\eee
Clearly $n_1\le n_0$. We write
\be\label{f8}
& &\sum^{n_0}_{n=1}\frac{k^{1/2}t}{h\xi_n}\left|\int^{x_1+2p_n}_d\Theta^{-3/2}e^{\i f_n(x_1,\zeta_1,t)}d\zeta_1\right|\nn\\
&\le&
\sum^{n_1}_{n=1}\frac{k^{1/2}t}{h\xi_n}\left|\int^{x_1+p_n/2}_d\Theta^{-3/2}e^{\i f_n(x_1,\zeta_1,t)}d\zeta_1\right|\nn\\
&+&
\sum^{n_1}_{n=1}\frac{k^{1/2}t}{h\xi_n}\left|\int^{x_1+2p_n}_{x_1+p_n/2}\Theta^{-3/2}e^{\i f_n(x_1,\zeta_1,t)}d\zeta_1\right|\nn\\
&+&\sum^{n_0}_{n=n_1+1}\frac{k^{1/2}t}{h\xi_n}\left|\int^{x_1+2p_n}_d\Theta^{-3/2}e^{\i f_n(x_1,\zeta_1,t)}d\zeta_1\right|
:={\rm I}+{\rm II}+{\rm III}.
\ee
Let $n_2=\frac{kh}{\sqrt{kd}}$. If $n_1\le n_2$, then since $\xi_n\ge Ck$ for $n\le n_0$,
\ben
{\rm I}\le C\sum^{n_2}_{n=1}\frac{k^{1/2}t}{h\xi_n}d^{-1/2}\le C\frac{n_2}{\sqrt{kd}}\le C\frac hd.
\een
Otherwise, if $n_1\ge n_2+1$, we split the sum and use \eqref{f2} to have
\ben
{\rm I}&\le&C\frac hd+\sum^{n_1}_{n=n_2+1}\frac{k^{1/2}t}{h\xi_n}\left|\int^{x_1+p_n/2}_d\Theta^{-3/2}e^{\i f_n(x_1,\zeta_1,t)}d\zeta_1\right|\\&\le&C\frac hd+C\sum^{n_1}_{n=n_2+1}\frac{k^{1/2}t}{h\xi_n}\left(\frac{kt^2}{p_n^2}\right)^{-1}d^{-3/2}\le C\frac hd,
\een
where we have used the fact that 
\ben
\sum^{n_1}_{n=n_2+1}h^{-1}\mu_n^{-2}\le h^{-1}\mu_{n_2+1}^{-2}+\pi^{-1}\mu_{n_2+1}^{-1}\le Ch n_2^{-1}=Ck^{-1/2}d^{1/2}.
\een
Therefore, we have ${\rm I}\le Ch/d$. By using \eqref{f4} with $\eta=x_1+p_n/2$ for the term ${\rm II}$ and with 
$\eta=d\ge x_1+p_n/2$ for the term ${\rm III}$, we have
\ben
{\rm II}+{\rm III}\le C\sum^{n_0}_{n=1}\frac{k^{1/2}t}{h\xi_n}kt^{-1}\xi_n^{-3/2}\le C\sum^{n_0}_{n=1}\frac 1{h\xi_n}
\le C\frac hd,
\een
where we have used $n_0\le Ckh^2/d$ by \eqref{f6}. Therefore, by \eqref{f8},
\ben
\sum^{n_0}_{n=1}\frac{k^{1/2}t}{h\xi_n}\left|\int^{x_1+2p_n}_d\Theta^{-3/2}e^{\i f_n(x_1,\zeta_1,t)}d\zeta_1\right|
\le C\frac hd+\frac{C}{\sqrt{kd}}.
\een
This shows the estimate for $|S_d(x,y)|$ by \eqref{y4}, \eqref{y5}, \eqref{yy7}-\eqref{yy8}, \eqref{yy3}-\eqref{yy4}, and the above estimate.

The estimate for $\na_x S_d(x,y)$ can be proved similarly by noticing that
\ben
& &\frac{\pa^2 G(x,\zeta)}{\pa x_2\pa\zeta_2}=\frac{\pa f}{\pa t}(x_1,\zeta_1,h+x_2)+\frac{\pa f}{\pa t}(x_1,\zeta_1,h-x_2),\\
& &\frac{\pa^2 G(x,\zeta)}{\pa x_1\pa\zeta_2}=\frac{\pa f}{\pa x_1}(x_1,\zeta_1,h+x_2)-\frac{\pa f}{\pa x_1}(x_1,\zeta_1,h-x_2),
\een
where after using the identity $H^{(1)'}_1(\xi)=H^{(1)}_0(\xi)-\frac{1}{\xi}H^{(1)}_1(\xi)$ for any $\xi\in\C$, 
\ben
& &\frac{\pa f}{\pa t}(x_1,\zeta_1,t)=\frac{\i}{4}H^{(1)}_0(k\Theta)\frac{k^2t^2}{\Theta^2}+\frac{\i}4H^{(1)}_1(k\Theta)
\left(\frac k\Theta-\frac{2kt^2}{\Theta^3}\right),\\
& &\frac{\pa f}{\pa x_1}(x_1,\zeta_1,t)=\frac{\i}{4}H^{(1)}_0(k\Theta)\frac{k^2(x_1-\zeta_1)t}{\Theta^2}
-\frac{\i}4H^{(1)}_1(k\Theta)\frac{2k(x_1-\zeta_1)t}{\Theta^3}.
\een
We omit the details. This completes the proof.
\finproof

We remark that by Theorem \ref{S} and Theorem \ref{aperture}, the resolution of the finite aperture Helmholtz-Kirchhoff function $H_d(x,y)$ is the same as the resolution of $\Im N(x,y)$ for $x,y\in\Om$ when $kh\gg 1$ and $kd/(kh)\gg 1$.

\section{The resolution analysis of the RTM algorithm}

In this section we study the resolution of the imaging function in (\ref{cor2}). We first notice that
since $S(\cdot,z)$ satisfies the Helmholtz equation in $\R^2_h$, it follows from Theorem \ref{S} that
\be\label{SS1}
\|S(\cdot,z)\|_{H^{1/2}(\Ga_D)}+\Big\|\frac{\pa S(\cdot,z)}{\pa\nu}\Big\|_{H^{-1/2}(\Ga_D)}
&\le&C\|S(\cdot,z)\|_{H^1(D)}+\|\Delta S(\cdot,z)\|_{L^2(D)}\nn\\
&\le&\frac{C}{|\cos(kh)|}\frac 1{\sqrt{kh}},\ \ \ \ \forall z\in\Om,
\ee
for some constant $C$ independent of $h$. Similarly, since $S_d(\cdot,z)$ also satisfies the Helmholtz equation, by Theorem \ref{aperture}, for any $z\in\Om$,
\be\label{SS2}
\|S_d(\cdot,z)\|_{H^{1/2}(\Ga_D)}+\Big\|\frac{\pa S_d(\cdot,z)}{\pa\nu}\Big\|_{H^{-1/2}(\Ga_D)}
&\le&C\left(\frac 1{\sqrt{kh}}+\frac hd\right),
\ee
for some constant $C$ independent of $h,d$. Here we have used the assumption $d\ge c_3h$.

\begin{theorem}\label{resolution}
Let $kh>\pi/2$, $d\ge c_3 h$, and \eqref{cond-omega} be satisfied. For any $z\in\Om$, let $\psi(x,z)$ be the radiation solution of the problem
\be\label{ps1}
& &\De\psi(x,z)+k^2\psi(x,z)=0\ \ \ \ \mbox{in }\R^2_h\bks\bar D,\\
    & &    \psi=0\ \ \mbox{on }\Ga_0,\ \ \ \ \frac{\partial \psi}{\partial x_2} = 0 \ \ \mbox{on }\Ga_h, \label{ps2}\\
    & &     \frac{\pa  \psi}{\pa \nu} + \i k \eta  \psi = -\bigg(\frac{\pa  \,\Im N(x,z)}{\pa \nu} + \i k \eta  \,\Im N(x,z) \bigg) \ \ \ \ \mbox{on }\Ga_D. \label{ps3}
\ee
Then we have, for any $z\in\Om$,
\ben
\hat I_d(z)=2h \sum_{n=1}^{M}\xi_n(|\psi_n^{+}|^2+|\psi_n^{-}|^2)+ 4 k\int_{\Ga_D}\eta(\zeta)\left|\psi(\xi,z)+\Im N(\zeta,z)\right|^2ds(\xi)+ w_{\hat I}(z),
\een
where $\psi^\pm_n$, $n=1,2,\cdots,M$, are the far-field pattern of the radiation solution of $\psi(\cdot,z)$ and $\|w_{\hat I}\|_{L^\infty(\Om)}\le \frac {C}{|\cos(kh)|}\frac 1{\sqrt{kh}}+C\frac hd$.
\end{theorem}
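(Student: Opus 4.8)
The plan is to turn the double surface integral \eqref{cord} into a single boundary integral over $\Ga_D$, collapse the two $\Ga_h^d$–integrations with the finite–aperture Helmholtz--Kirchhoff identity \eqref{y1}, and then read off the two nonnegative terms by an energy identity of the type used in Lemma \ref{uniqueness}. First I note that $u^s(\cdot,x_s)$ is the radiating solution of \eqref{l1}--\eqref{l3} with $g=-(\pa_\nu N(\cdot,x_s)+\i k\eta N(\cdot,x_s))$, so Green's representation with the waveguide Green function $N$ (the boundary terms on $\Ga_0,\Ga_h$ and at infinity drop out because $u^s$ and $N$ share the conditions \eqref{gammah0} and the radiation condition) gives, with the orientation of $\nu$ carefully fixed,
\[
u^s(x_r,x_s)=\int_{\Ga_D}\Big(u^s(\zeta,x_s)\frac{\pa N(\zeta,x_r)}{\pa\nu(\zeta)}-\frac{\pa u^s(\zeta,x_s)}{\pa\nu(\zeta)}N(\zeta,x_r)\Big)ds(\zeta).
\]
I insert the conjugate of this into \eqref{cord} and perform the $x_r$–integral first. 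Using $N(\zeta,x_r)=N(x_r,\zeta)$ and \eqref{y1}, the two inner integrals collapse to $H_d(z,\zeta)=2\i\,\Im N(z,\zeta)-S(z,\zeta)-S_d(z,\zeta)$ and to $\pa_{\nu(\zeta)}H_d(z,\zeta)$. Writing $\Psi(\zeta,z)=\int_{\Ga_h^d}\frac{\pa G(z,x_s)}{\pa x_2(x_s)}\overline{u^s(\zeta,x_s)}\,ds(x_s)$ for the remaining $x_s$–integral, this produces the exact identity
\[
\hat I_d(z)=\Im\int_{\Ga_D}\Big(\Psi(\zeta,z)\,\frac{\pa H_d(z,\zeta)}{\pa\nu(\zeta)}-\frac{\pa\Psi(\zeta,z)}{\pa\nu(\zeta)}\,H_d(z,\zeta)\Big)ds(\zeta).
\]

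The core of the argument is to identify $\Psi$. Differentiating under the integral and conjugating the impedance condition satisfied by $u^s(\cdot,x_s)$ in its first variable shows that $\Psi(\cdot,z)$ solves the Helmholtz equation in $\R^2_h\bs\bar D$, vanishes on $\Ga_0$, is Neumann on $\Ga_h$, satisfies the \emph{incoming} radiation condition, and obeys $\pa_\nu\Psi-\i k\eta\Psi=-(\pa_\nu H_d(z,\cdot)-\i k\eta H_d(z,\cdot))$ on $\Ga_D$, where \eqref{y1} has again been used on the $x_s$–integral of $\overline{N(\zeta,x_s)}$. Replacing $H_d$ by its leading part $2\i\,\Im N$ and comparing with \eqref{ps1}--\eqref{ps3}, I will verify directly that $2\i\,\overline{\psi(\cdot,z)}$ solves precisely this problem; hence $\tilde\Psi:=\Psi-2\i\,\overline\psi$ solves the same boundary value problem with right-hand side $(\pa_\nu-\i k\eta)(S+S_d)(z,\cdot)$ only. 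Conjugation turns this into the well-posed problem of Theorem \ref{LAP} (outgoing radiation, $+\i k\eta$), so by \eqref{SS1}--\eqref{SS2},
\[
\|\tilde\Psi(\cdot,z)\|_{H^{1/2}(\Ga_D)}+\|\pa_\nu\tilde\Psi(\cdot,z)\|_{H^{-1/2}(\Ga_D)}\le \frac{C}{|\cos(kh)|}\frac{1}{\sqrt{kh}}+C\frac hd.
\]

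With $P:=\Im N(z,\cdot)$ real and $w:=\psi+\Im N$, I then substitute $H_d=2\i P-(S+S_d)$ and $\Psi=2\i\,\overline\psi+\tilde\Psi$. The principal contribution is $4\,\Im\int_{\Ga_D}(P\,\pa_\nu\overline\psi-\pa_\nu P\,\overline\psi)\,ds=-4\,\Im\int_{\Ga_D}(P\,\pa_\nu\psi-\pa_\nu P\,\psi)\,ds$. Green's first identity for $\psi$ on $B_R\bs\bar D$, exactly as in Lemma \ref{uniqueness}, gives $\Im\int_{\Ga_D}\overline\psi\,\pa_\nu\psi\,ds=\tfrac h2\sum_{n=1}^{M}\xi_n(|\psi_n^+|^2+|\psi_n^-|^2)$; inserting the impedance condition \eqref{ps3} and the pointwise identity $\Re(\overline\psi w)=|w|^2-P\,\Re w$ (from $\overline\psi=\overline w-P$) converts the principal term into $2h\sum_{n=1}^{M}\xi_n(|\psi_n^+|^2+|\psi_n^-|^2)+4k\int_{\Ga_D}\eta|w|^2\,ds$, the two claimed nonnegative terms. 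Every remaining term carries at least one factor of $S(z,\cdot)$, $S_d(z,\cdot)$, $\tilde\Psi$, or a normal derivative of these, paired in the $H^{1/2}$--$H^{-1/2}$ duality on $\Ga_D$ against the bounded traces of $\psi$ and $\Im N$; collecting them into $w_{\hat I}$ and using \eqref{SS1}, \eqref{SS2} together with the $\tilde\Psi$–bound yields $\|w_{\hat I}\|_{L^\infty(\Om)}\le \frac{C}{|\cos(kh)|}\frac{1}{\sqrt{kh}}+C\frac hd$.

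I expect the main obstacle to be the identification $\Psi=2\i\,\overline\psi+\tilde\Psi$ and the stability estimate for $\tilde\Psi$: one must recognize that the source–aperture integral generates the scattering problem with the \emph{conjugated} impedance $-\i k\eta$ and the incoming radiation condition, and that its well-posedness—though not covered by Theorem \ref{LAP} as stated—follows from it after complex conjugation (equivalently, from the uniqueness in Lemma \ref{uniqueness}). A secondary but necessary point is the careful bookkeeping confirming that each error term indeed carries a factor controlled by Theorems \ref{S} and \ref{aperture}, so that the whole remainder $w_{\hat I}$ satisfies the asserted bound.
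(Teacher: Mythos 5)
Your proposal is correct and follows essentially the same route as the paper's proof: Green's representation of $u^s$ over $\Ga_D$, collapsing the receiver integral via the finite-aperture Helmholtz--Kirchhoff identity \eqref{y1}, identifying the source integral (your $\Psi$, the paper's $v_s$) with $2\i\,\overline{\psi}$ up to a remainder controlled by Theorem \ref{LAP} together with Theorems \ref{S} and \ref{aperture}, and finally using the energy identity \eqref{DtN} plus the impedance condition \eqref{ps3} to extract the two nonnegative terms. The only difference is cosmetic: the paper works with the outgoing conjugate $\overline{v_s}$ so that Theorem \ref{LAP} applies verbatim, whereas you keep the incoming field $\Psi$ and pass to the conjugate problem for the remainder estimate.
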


\begin{proof} By the integral representation formula we know that
\ben
u^s(x_r,x_s)=\int_{\Ga_D}\left(u^s(\zeta,x_s)\frac{\pa N(x_r,\zeta)}{\pa\nu(\zeta)}-\frac{\pa u^s(\zeta,x_s)}{\pa\nu(\zeta)}N(x_r,\zeta)\right)ds(\zeta).
\een
By using Lemma \ref{S} we obtain that, for any $z\in\Om$,
\ben
\hat v_b(z,x_s)&=&\int_{\Ga_h^d} \frac{\pa G(z,x_r)}{\pa x_2(x_r)} \overline{u^s(x_r,x_s)}ds(x_r)\\
&=&\int_{\Ga_D}\overline{ u^s(\zeta,x_s)}\frac{\pa}{\pa\nu(\zeta)}\big(2\i \Im{N(\zeta,z)} -  S(\zeta,z)-S_d(\zeta,z)\big)ds(\zeta)\\
&-&\int_{\Ga_D}\frac{\pa \overline{ u^s(\zeta,x_s)} }{\pa\nu(\zeta)}\big(2\i \Im{N(\zeta,z)} -  S(\zeta,z)-S_d(\zeta,z)\big)ds(\zeta),
\een
where we have used the reciprocity relation $N(\zeta,z)=N(z,\zeta), G(\zeta,z)=G(z,\zeta)$.
By (\ref{cor2}) we obtain then
\be\label{cor4}
\hat I_d(z)&=& \Im\int_{\Ga_D}v_s(\zeta,z)\frac{\pa}{\pa\nu(\zeta)}\big(2\i\,\Im{N(\zeta,z)} -  S(\zeta,z)-S_d(\zeta,z)\big)ds(\zeta)\nn\\
&-&\,\Im\int_{\Ga_D}\frac{\pa v_s(\zeta,z)}{\pa\nu(\zeta)}\big(2\i \Im{N(\zeta,z)} -  S(\zeta,z)-S_d(\zeta,z)\big)ds(\zeta),
\ee
where $v_s(\zeta,z)=\int_{\Ga_h^d} \frac{\pa G(z,x_s)}{\pa x_2(x_s)}\overline{u^s(\zeta,x_s)}ds(x_s)$. By taking the complex conjugate, we have
\ben
\overline{v_s(\zeta,z)}=\int_{\Ga_h^d} \frac{\pa\overline{G(z,x_s)}}{\pa x_2(x_s)}u^s(\zeta,x_s)ds(x_s).
\een
Thus $\overline{v_s(\zeta,z)}$ is a weighted superposition of the scattered waves $u^s(\zeta,x_s)$. Therefore, $\overline{v_s(\zeta,z)}$ is the radiation solution of the Helmholtz equation
\ben
& & \De_\zeta\overline{v_s(\zeta,z)}+k^2\overline{v_s(\zeta,z)}=0\ \ \ \ \mbox{in } \R^2_h \bks\bar D, \\
 & &    \overline{v_s(\zeta,z)} = 0 \ \ \mbox{on }\Ga_0,\ \ \ \ \frac{\partial \overline{v_s(\zeta,z)}}{\partial \zeta_2} = 0\ \ \mbox{on }\Ga_h,
\een
satisfying the impedance boundary condition
\ben
& &\left(\frac{\pa}{\pa\nu(\zeta)}+\i k\eta(\zeta)\right)\overline{v_s(\zeta,z)}\\
&=&\int_{\Ga_h^d} \frac{\pa\overline{G(z,x_s)}}{\pa x_2(x_s)}\left(\frac{\pa}{\pa\nu(\zeta)}+\i k\eta(\zeta)\right)u^s(\zeta,x_s)ds(x_s)\\
&=&\int_{\Ga_h^d} \frac{\pa\overline{G(z,x_s)}}{\pa x_2(x_s)}\left(\frac{\pa}{\pa\nu(\zeta)}+\i k\eta(\zeta)\right)(-N(\zeta,x_s))ds(x_s) \\
&=&  \left(\frac{\pa}{\pa\nu(\zeta)}+\i k\eta(\zeta)\right)( 2\i\,\Im{N(\zeta,z)} + \overline{ S(\zeta,z)}+\overline{S_d(\zeta,z)}) \ \ \ \ \mbox{on }\Ga_D,
\een
where we have used \eqref{y1} in the last equality. This implies by using (\ref{ps1})-(\ref{ps3}) that $\overline{v_s(\zeta,z)}=-2\i\psi(\zeta,z)+w(\zeta,z)$, where
$w(\cdot,z)$ satisfies the impedance scattering problem in Theorem \ref{LAP} with $g(\cdot)= \left(\frac{\pa}{\pa\nu}+\i k\eta(\cdot)\right)( \overline{ S(\cdot,z) }+\overline{S_d(\cdot,z)})$.

By Theorem \ref{LAP}, \eqref{SS1}-\eqref{SS2}, and the boundary condition satisfied
by $w$ on $\Ga_D$, we know that $w$ satisfies
\ben
\|w(\cdot,z)\|_{H^{1/2}(\Ga_D)}+\Big\|\frac{\pa w(\cdot,z)}{\pa\nu}\Big\|_{H^{-1/2}(\Ga_D)}\le \frac{C}{|\cos(kh)|}\frac 1{\sqrt{kh}}+C\frac hd.
\een
Moreover, since $N(\cdot,z)=G(\cdot,z)+S(\cdot,z)$, we also have
\ben
\|N(\cdot,z)\|_{H^{1/2}(\Ga_D)}+\Big\|\frac{\pa N(\cdot,z)}{\pa\nu}\Big\|_{H^{-1/2}(\Ga_D)}\le C, \ \ \ \ \forall z\in\Om.
\een
Now substituting $v_s(\zeta,z)=2\i\overline{\psi(\zeta,z)}+\overline{w(\zeta,z)}$ into (\ref{cor4}) we obtain
\ben
 \hat I(z)
&=&-4\,\Im\int_{\Ga_D}\left(\overline{\psi(\zeta,z)}\frac{\pa\Im N(\zeta,z)}{\pa\nu(\zeta)}-\frac{\pa\overline{\psi(\zeta,z)}}
{\pa\nu(\zeta)}\Im N(\zeta,z)\right)ds(\zeta)+w_{\hat I}(z)\\
&=&4\,\Im\int_{\Ga_D}\left(\psi(\zeta,z)\frac{\pa\Im N(\zeta,z)}{\pa\nu(\zeta)}-\frac{\pa\psi(\zeta,z)}{\pa\nu(\zeta)}\Im N(\zeta,z)\right)ds(\zeta)
+w_{\hat I}(z),
\een
where
\ben
|w_{\hat I}(z)|&\le&2\|\psi(\cdot,z)\|_{H^{1/2}(\Ga_D)}\Big\|\frac{\pa (S(\cdot,z)+S_d(\cdot,z))}{\pa\nu}\Big\|_{H^{-1/2}(\Ga_D)}\\
&&+2\|S(\cdot,z)+S_d(\cdot,z)\|_{H^{1/2}(\Ga_D)}\Big\|\frac{\pa \psi(\cdot,z)}{\pa\nu}\Big\|_{H^{-1/2}(\Ga_D)}\\
&&+\|w(\cdot,z)\|_{H^{1/2}(\Ga_D)}\Big\|\frac{\pa (2\i\,\Im N(\cdot,z)-S(\cdot,z)-S_d(\cdot,z))}{\pa\nu}\Big\|_{H^{-1/2}(\Ga_D)}\\
&&+\|2\i\,\Im N(\cdot,z)-S(\cdot,z)-S_d(\cdot,z)\|_{H^{1/2}(\Ga_D)}\Big\|\frac{\pa w(\cdot,z)}{\pa\nu}\Big\|_{H^{-1/2}(\Ga_D)}\\
&\le&\frac{C}{|\cos(kh)|}\frac 1{\sqrt{kh}}+C\frac hd.
\een
By (\ref{ps2}) we have
\ben
&& \Im\int_{\Ga_D}\left(\psi(\zeta,z)\frac{\pa\Im N(\zeta,z)}{\pa\nu(\zeta)}-\frac{\pa\psi(\zeta,z)}{\pa\nu(\zeta)}\Im N(\zeta,z)\right)ds(\zeta)\\
& =&\Im\int_{\Ga_D}\Big[\psi(\zeta,z)\left(\frac{\pa\Im N(\zeta,z)}{\pa\nu(\zeta)}-\i k\eta(\zeta)\Im N(\zeta,z)\right)\\
&&-\left(\frac{\pa\psi(\zeta,z)}{\pa\nu(\zeta)}+\i k\eta(\zeta)\psi(\zeta,z)\right)\Im N(\zeta,z)+2\i k\eta(\zeta)\Im N(\zeta,z)\psi(\zeta,z)\Big]ds(\zeta)\\
&=&\Im\int_{\Ga_D}\Big[-\psi(\zeta,z)\cdot \left(\frac{\pa\overline{\psi(\zeta,z)}}{\pa\nu(\zeta)}-\i k\eta(\zeta)\overline{\psi(\zeta,z)}\right)\\
&&+\left(\frac{\pa\Im N(\zeta,z)}{\pa\nu(\zeta)}+\i k\eta(\zeta)\Im N(\zeta,z)\right)\Im N(\zeta,z)+2\i k\eta(\zeta)\Im N(\zeta,z)\psi(\zeta,z)\Big]ds(\zeta)\\
&=&-\Im\int_{\Ga_D}\psi(\zeta,z)\frac{\pa\overline{\psi(\zeta,z)}}{\pa\nu(\zeta)}ds(\zeta)+k\int_{\Ga_D}\eta(\zeta)\left|\psi(\zeta,z)+\Im N(\zeta,z)\right|^2ds(\zeta).
\een
By \eqref{DtN} we know that the far-field pattern $\psi^\pm_n$, $n=1,2,\cdots,M$, satisfy
\ben
-\,\Im{ \int_{\Ga_D}\psi\frac{\pa\bar \psi}{\pa\nu}ds }=\frac h2\sum_{n=1}^{M}\xi_n(|\psi_n^{+}|^2+|\psi_n^{-}|^2).
\een
This completes the proof. \qquad
\finproof

We remark that $\psi(x,z)$ is the scattering solution of the Helmholtz equation in the waveguide with the incoming field $\Im N(x,z)$. Since
\ben
\Im N(x,z)&=&\Im G(x,z)+\Im S(x,z)\\
&=& \frac 14 J_0(k|x-z|)-\frac 14 J_0(k|x-z'|)+\Im S(x,z),
\een
where $J_0(t)$ is the first kind Bessel function of zeroth order and $z'=(z_1,-z_2)$ is the imagine point of $z=(z_1,z_2)$. It is well-known that $J_0(t)$ peaks at $t=0$ and decays like $t^{-1/2}$ away from the origin. By Theorem \ref{S}, $S(x,z)$ is small when $kh\gg 1$ which implies $\Im N(x,z)$ of the problem \eqref{ps1}-\eqref{ps3} will peak at the boundary of the scatterer $D$ and becomes small when $z$ moves away from $\pa D$. Thus we expect that the imaging function $\hat I_d(z)$ will have contrast at the boundary of the scatterer $D$ and decay outside the boundary $\pa D$ if $kh\gg 1$ and $kd/(kh)\gg 1$. This is indeed confirmed in our numerical experiments.

\section{Extensions}{\label{extensions}}
In this section we consider the reconstruction of the sound soft and penetrable obstacles in the planar waveguide by our RTM algorithm.
For the sound soft obstacle, the measured data $u^s(x_r,x_s)=u(x_r,x_s)-u^i(x_r,x_s)$, where $u(x,x_s)$ is the radiation solution of the following problem
\be
& & \Delta u + k^2 u =-\delta_{x_s}(x) \qquad \mbox{in } \R^2_h\bs\bar D, \label{h1}\\
& & u = 0 \ \ \ \ \mbox{ on } \Ga_D, \label{h2}\\
& & u= 0\ \ \mbox{on }\Ga_0,\ \ \ \ \frac{\partial u}{\partial x_2}= 0\ \ \mbox{on }\Ga_h. \label{h3}
\ee
The well-posedness of the problem under some geometric condition of the obstacle $D$ is known \cite{MW1, RAMM}. Here we assume that the scattering problem
\eqref{h1}-\eqref{h3} has a unique solution. By modifying
the argument in Theorem \ref{resolution} we can show the following result whose proof is omitted.

\begin{theorem}\label{res_dir}
Let $kh>\pi/2$, $d\ge c_3h$, and \eqref{cond-omega} be satisfied. For any $z\in\Om$, let $\psi(x,z)$ be the radiation solution of the problem
\ben
& &\De\psi(x,z)+k^2\psi(x,z)=0\ \ \ \ \mbox{in }\R^2_h\bks\bar D,\\
    & &    \psi=0\ \ \mbox{on }\Ga_0,\ \ \ \ \frac{\partial \psi}{\partial x_2} = 0 \ \ \mbox{on }\Ga_h, \\
    & &     \psi(x,z) = - \Im N(x,z) \ \ \ \ \mbox{on }\Ga_D.
\een
Then we have, for any $z\in\Om$,
\ben
\hat I_d(z)=2h \sum_{n=1}^{M}\xi_n(|\psi_n^{+}|^2+|\psi_n^{-}|^2)+ w_{\hat I}(z),
\een
where $\psi^\pm_n$, $n=1,2,\cdots,M$, are the far-field pattern of the radiation solution of $\psi(\cdot,z)$ and $\|w_{\hat I}\|_{L^\infty(\Om)}\le \frac {C}{|\cos(kh)|}\frac 1{\sqrt{kh}}+C\frac hd$.
\end{theorem}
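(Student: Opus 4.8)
The plan is to run the argument of Theorem~\ref{resolution} essentially unchanged up to the point where the boundary condition on $\Ga_D$ is invoked, and then to observe that the sound-soft condition causes the impedance volume term to collapse. The first half of the proof of Theorem~\ref{resolution}---the integral representation of $u^s(x_r,x_s)$ over $\Ga_D$, the back-propagation step rewritten through the generalized Helmholtz--Kirchhoff identity \eqref{y1}, and the cross-correlation \eqref{cord}---never uses the type of boundary condition on the obstacle. Hence I would reproduce it verbatim to arrive at
\ben
\hat I_d(z)&=&\Im\int_{\Ga_D}v_s(\zeta,z)\frac{\pa}{\pa\nu(\zeta)}\big(2\i\,\Im N(\zeta,z)-S(\zeta,z)-S_d(\zeta,z)\big)ds(\zeta)\\
&&-\,\Im\int_{\Ga_D}\frac{\pa v_s(\zeta,z)}{\pa\nu(\zeta)}\big(2\i\,\Im N(\zeta,z)-S(\zeta,z)-S_d(\zeta,z)\big)ds(\zeta),
\een
where $v_s(\zeta,z)=\int_{\Ga_h^d}\frac{\pa G(z,x_s)}{\pa x_2(x_s)}\overline{u^s(\zeta,x_s)}ds(x_s)$.

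The first genuinely new step is the identification of $\overline{v_s}$. For the sound-soft problem \eqref{h1}--\eqref{h3} one has $u^s(\cdot,x_s)=-N(\cdot,x_s)$ on $\Ga_D$; substituting this into $\overline{v_s}$, using reciprocity and the complex conjugate of \eqref{y1}, I expect to find $\overline{v_s}(\zeta,z)=2\i\,\Im N(\zeta,z)+\overline{S(\zeta,z)}+\overline{S_d(\zeta,z)}$ on $\Ga_D$, exactly the boundary datum that already appeared in Theorem~\ref{resolution}. Writing $\overline{v_s}=-2\i\psi+w$, where $\psi$ is the radiation solution with Dirichlet data $\psi=-\Im N$ on $\Ga_D$, matches the leading part $-2\i\psi=2\i\,\Im N$ and leaves $w$ as the radiation solution of the sound-soft scattering problem with the small Dirichlet datum $\overline{S(\cdot,z)}+\overline{S_d(\cdot,z)}$. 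Invoking the (assumed) well-posedness of \eqref{h1}--\eqref{h3} together with the decay estimates \eqref{SS1}--\eqref{SS2} then gives $\|w(\cdot,z)\|_{H^{1/2}(\Ga_D)}+\|\pa_\nu w(\cdot,z)\|_{H^{-1/2}(\Ga_D)}\le\frac{C}{|\cos(kh)|}\frac{1}{\sqrt{kh}}+C\frac hd$.

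Substituting $v_s=2\i\overline{\psi}+\overline{w}$, the contributions containing $w$, $S$ and $S_d$ are collected into $w_{\hat I}(z)$ and bounded exactly as in Theorem~\ref{resolution}, leaving the principal term $4\,\Im\int_{\Ga_D}\big(\psi\,\pa_\nu\Im N-\pa_\nu\psi\,\Im N\big)ds$. At this point the Dirichlet condition makes the reduction cleaner than the impedance one: on $\Ga_D$ the trace of $\psi$ equals the real function $-\Im N$, so $\psi=\bar\psi$ there and $\psi+\Im N=0$. Therefore $\psi\,\pa_\nu\Im N$ is real and drops out of the imaginary part, $-\pa_\nu\psi\,\Im N$ equals $\bar\psi\,\pa_\nu\psi$ on $\Ga_D$, and the volume-type term $4k\int_{\Ga_D}\eta|\psi+\Im N|^2ds$ present in Theorem~\ref{resolution} simply vanishes. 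The principal term thus reduces to $4\,\Im\int_{\Ga_D}\bar\psi\,\pa_\nu\psi\,ds$, and applying identity \eqref{DtN} to the radiation solution $\psi$ (whose derivation uses only the equation and the conditions on $\Ga_0,\Ga_h$) yields $2h\sum_{n=1}^{M}\xi_n(|\psi_n^{+}|^2+|\psi_n^{-}|^2)$, the claimed expression.

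The main obstacle I anticipate is not this final algebra but the analogue of Theorem~\ref{LAP} for the sound-soft problem: the argument needs a solution operator bounded from Dirichlet data in $H^{1/2}(\Ga_D)$ into $H^1_{\rm loc}(\R^2_h\bks\bar D)$, so that the smallness of $\overline{S}+\overline{S_d}$ transfers to $w$. Since the excerpt only assumes uniqueness for \eqref{h1}--\eqref{h3}, I would supply existence and the a priori bound by repeating the limiting absorption argument of the Appendix with the Dirichlet trace in place of the impedance operator, taking care only with the correct trace spaces. The remaining work is bookkeeping: checking that conjugation and reciprocity in the $\overline{v_s}$ computation reproduce $+\overline{S}+\overline{S_d}$ with the right signs, so that $w$ inherits precisely the small datum.
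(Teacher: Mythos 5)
Your proposal is correct and is essentially the paper's own (omitted) proof: the paper states that Theorem \ref{res_dir} follows "by modifying the argument in Theorem \ref{resolution}", and your modification is exactly the intended one — the boundary-condition-independent first half carries over verbatim, the Dirichlet trace of $\overline{v_s}$ on $\Ga_D$ is identified as $2\i\,\Im N+\overline{S}+\overline{S_d}$, and the real trace $\psi=-\Im N$ on $\Ga_D$ reduces the principal term to $4\,\Im\int_{\Ga_D}\bar\psi\,\pa_\nu\psi\,ds$, which \eqref{DtN} converts to $2h\sum_{n=1}^M\xi_n(|\psi_n^+|^2+|\psi_n^-|^2)$ with no impedance term. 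You are also right to flag that the needed stability bound (the Dirichlet analogue of Theorem \ref{LAP}) goes beyond the paper's bare uniqueness assumption and must be supplied, e.g.\ by rerunning the limiting absorption argument of the Appendix with Dirichlet data.
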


For the penetrable obstacle, the measured data $u^s(x_r,x_s)=u(x_r,x_s)-u^i(x_r,x_s)$, where $u(x,x_s)$ is the radiation solution of the following problem
\be
& & \Delta u + k^2n(x)u =-\delta_{x_s}(x) \qquad \mbox{in } \R^2_h, \label{hh1}\\
& & u= 0\ \ \mbox{on }\Ga_0,\ \ \ \ \frac{\partial u}{\partial x_2}= 0\ \ \mbox{on }\Ga_h, \label{hh2}
\ee
where $n(x)\in L^\infty(\R^2_h)$ is a positive function which is equal to $1$ outside the scatterer $D$.
The well-posedness of the problem under some condition on $n(x)$ is known \cite{Li}. Here we assume that the scattering problem \eqref{hh1}-\eqref{hh2} has a unique solution. By modifying the argument in Theorem \ref{resolution}, the following theorem can be proved. We refer to \cite[Theorem 3.1]{cch_a} for a similar result. Here we
omit the details.

\begin{theorem}\label{res_penetrable}
Let $kh>\pi/2$, $d\ge c_3h$, and \eqref{cond-omega} be satisfied. For any $z\in\Om$, let $\psi(x,z)$ be the radiation solution of the problem
\ben
& &\De\psi(x,z)+k^2n(x)\psi(x,z) = -k^2(n(x)-1)\Im N(x,z)\ \ \ \ \mbox{in }\R^2_h,\\
    & &    \psi=0\ \ \mbox{on }\Ga_0,\ \ \ \ \frac{\partial \psi}{\partial x_2} = 0 \ \ \mbox{on }\Ga_h.
\een
Then we have, for any $z\in\Om$,
\ben
\hat I_d(z)=2h \sum_{n=1}^{M}\xi_n(|\psi_n^{+}|^2+|\psi_n^{-}|^2)+ w_{\hat I}(z),
\een
where $\psi^\pm_n$, $n=1,2,\cdots,M$, are the far-field pattern of the radiation solution of $\psi(\cdot,z)$ and $\|w_{\hat I}\|_{L^\infty(\Om)}\le \frac {C}{|\cos(kh)|}\frac 1{\sqrt{kh}}+C\frac hd$.
\end{theorem}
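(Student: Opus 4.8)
The plan is to mirror the proof of Theorem \ref{resolution}, replacing the layer representation on $\Ga_D$ used there by a volume (Lippmann--Schwinger) representation over $D$, since the penetrable scatterer is now described by the contrast $n(x)-1$ supported in $D$. First I would observe that the scattered field $u^s=u-u^i$ with $u^i(\cdot,x_s)=N(\cdot,x_s)$ solves $\De u^s+k^2u^s=-k^2(n-1)u$ in $\R^2_h$ together with the homogeneous boundary conditions on $\Ga_0,\Ga_h$, so that integrating against the Green function $N$ gives
\[
u^s(x,x_s)=k^2\int_D(n(y)-1)\,u(y,x_s)\,N(x,y)\,dy.
\]
Substituting this into the back-propagation field \eqref{back} and using the reciprocity $N(x_r,y)=N(y,x_r)$ together with the generalized Helmholtz--Kirchhoff identity \eqref{y1}, the inner integral over $\Ga_h^d$ collapses to $2\i\,\Im N(y,z)-S(y,z)-S_d(y,z)$; this is the step that links the algorithm to $\Im N$.

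Carrying the same substitution through the cross-correlation \eqref{cor2} yields
\[
\hat I_d(z)=\Im\Big[k^2\int_D(n(y)-1)\,V(y,z)\big(2\i\,\Im N(y,z)-S(y,z)-S_d(y,z)\big)\,dy\Big],
\]
where $V(y,z)=\int_{\Ga_h^d}\frac{\pa G(z,x_s)}{\pa x_2(x_s)}\overline{u(y,x_s)}\,ds(x_s)$ is the weighted superposition of the total fields. Exactly as in Theorem \ref{resolution}, $\overline{V(\cdot,z)}$ is a radiation solution of $\De\overline V+k^2n\overline V=0$, and splitting $u=N(\cdot,x_s)+u^s$ and applying the conjugate of \eqref{y1} to the incident part identifies its incoming field as $-2\i\,\Im N(\cdot,z)-\overline{S(\cdot,z)}-\overline{S_d(\cdot,z)}$. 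Comparing with the defining problem for $\psi$ (whose incoming field is $\Im N$) I would then establish the decomposition
\[
\overline{V(\cdot,z)}=-2\i\big(\Im N(\cdot,z)+\psi(\cdot,z)\big)+w(\cdot,z),
\]
where $w(\cdot,z)$ is the total field of the penetrable medium whose incoming field is $-\overline{S(\cdot,z)}-\overline{S_d(\cdot,z)}$.

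Next I would separate the main term from the remainder. Inserting $V=2\i(\Im N+\overline\psi)+\overline w$ and retaining only the product $2\i(\Im N+\overline\psi)\cdot 2\i\,\Im N$ (using that $n-1$ and $\Im N$ are real) produces the leading contribution
\[
M_0=4k^2\int_D(n(y)-1)\,\Im N(y,z)\,\Im\psi(y,z)\,dy,
\]
while every remaining product carries a factor $S$, $S_d$ or $w$. By Theorems \ref{S} and \ref{aperture} the pointwise bounds on $S,S_d$ give $\|S(\cdot,z)\|_{L^2(D)}+\|S_d(\cdot,z)\|_{L^2(D)}\le C\big(\frac{1}{|\cos(kh)|}\frac{1}{\sqrt{kh}}+\frac hd\big)$, and the stability estimate for the penetrable problem (the analogue of Theorem \ref{LAP}, assumed well posed for \eqref{hh1}--\eqref{hh2}) controls $\|w(\cdot,z)\|_{L^2(D)}$ by the same quantity; together with the uniform $L^2(D)$ bounds on $\Im N$ and $\psi$ this shows that the collected remainder $w_{\hat I}(z)$ obeys $\|w_{\hat I}\|_{L^\infty(\Om)}\le \frac{C}{|\cos(kh)|}\frac{1}{\sqrt{kh}}+C\frac hd$.

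Finally I would convert $M_0$ into the far-field sum by an energy identity: multiplying $\De\psi+k^2n\psi=-k^2(n-1)\Im N$ by $\overline\psi$, integrating over $B_R$, taking the imaginary part, and letting $R\to\infty$, the boundary terms on $\Ga_0\cup\Ga_h$ vanish while the contributions on $\Ga_R^\pm$ reduce to the propagating modes exactly as in the derivation of \eqref{DtN} in Lemma \ref{uniqueness}, giving $k^2\int_D(n-1)\Im N\,\Im\psi\,dy=\frac h2\sum_{n=1}^M\xi_n(|\psi_n^+|^2+|\psi_n^-|^2)$ and hence $M_0=2h\sum_{n=1}^M\xi_n(|\psi_n^+|^2+|\psi_n^-|^2)$. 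The main obstacle is the third step: unlike the impedance case, where Theorem \ref{LAP} supplies the required stability directly, here one must invoke a well-posedness and stability estimate for the penetrable scattering problem \eqref{hh1}--\eqref{hh2} and translate the sup-norm bounds of Theorems \ref{S} and \ref{aperture} into the volume $L^2(D)$ norms that govern the remainder $w_{\hat I}$.
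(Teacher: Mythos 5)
Your proposal is correct and follows exactly the route the paper intends: the paper omits the proof of this theorem, stating only that it follows "by modifying the argument in Theorem \ref{resolution}" (citing the analogous penetrable-obstacle result in \cite{cch_a}), and your Lippmann--Schwinger volume representation, the collapse of the inner integral via the generalized Helmholtz--Kirchhoff identity \eqref{y1}, the decomposition $\overline{V(\cdot,z)}=-2\i\big(\Im N(\cdot,z)+\psi(\cdot,z)\big)+w(\cdot,z)$ with the remainder controlled by Theorems \ref{S} and \ref{aperture} plus the (assumed) well-posedness of \eqref{hh1}--\eqref{hh2}, and the closing energy identity $k^2\int_D(n-1)\Im N\,\Im\psi\,dy=\frac h2\sum_{n=1}^M\xi_n(|\psi_n^+|^2+|\psi_n^-|^2)$ constitute precisely that modification. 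The only blemish is terminological: $\overline{V(\cdot,z)}$ is not a radiation solution (it contains the incoming fields you subsequently identify); it merely satisfies $\De\overline V+k^2n\overline V=0$, and your own argument correctly treats it as a total field with incoming part $-2\i\,\Im N(\cdot,z)-\overline{S(\cdot,z)}-\overline{S_d(\cdot,z)}$.
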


We remark that for the penetrable scatterers, $\psi(x,z)$ is again the scattering solution with the incoming field $\Im N(x,z)$. Therefore we again expect the imaging function $\hat I_d(z)$ will have contrast on the boundary of the scatterer and decay outside the scatterer if
$kh\gg 1$ and $kd/(kh)\gg 1$.

\section{Numerical experiments}

In this section we present several numerical examples to demonstrate the effectiveness of our RTM method
for planar acoustic waveguide. To synthesize the scattering data we compute the solution $u^s(x_r,x_s)$ of the scattering problem by representing the ansatz solution as the double layer potential with the Green function $N(x,y)$ as the kernel and discretizing the integral equation by standard Nystr\"{o}m methods \cite{colton_kress}. The boundary integral equations on $\Ga_D$ are solved on a uniform mesh over the boundary with ten points per probe wavelength. The sources and receivers are both placed on the surface $\Ga_h^d$ with equal-distribution, where $d$ is the aperture. The boundaries of the obstacles used in our numerical experiments are parameterized as follows:
\ben
&\mbox{Circle:}\ \  \ &x_1=\rho\cos(\theta),\ \ x_2=\rho\sin(\theta),\ \ \theta\in (0,2\pi],\\
&\mbox{Kite:}\ \ \ &x_1=\cos(\theta) + 0.65\cos(2\theta) - 0.65,\ \ x_2=1.5 \sin (\theta),\ \ \theta\in (0,2\pi],\\
&\mbox{Rounded Square:} \ \ \ \ &x_1=0.5(\cos^3(\theta) + \cos(\theta)),\   x_2=0.5(\sin^3(\theta) + \sin(\theta)),\ \theta\in (0,2\pi].
\een

\textbf{Example 1}.
In this example we consider the imaging of a sound soft circle of radius $\rho=1$. We compare the results by
using our RTM function \eqref{cor2} and the Kirchhoff migration imaging function \eqref{cor3} for different values of
the aperture $d$. We take the probe wavelength $\lam = 0.5$, where $\lam=2\pi/k$, the thickness $h=10$, 
and $N_s=N_r=401$. We choose the aperture $d=10,20,30,50$ for the tests.

\begin{figure}
    \centering
    \includegraphics[width=0.24\textwidth, height=1.2in]{./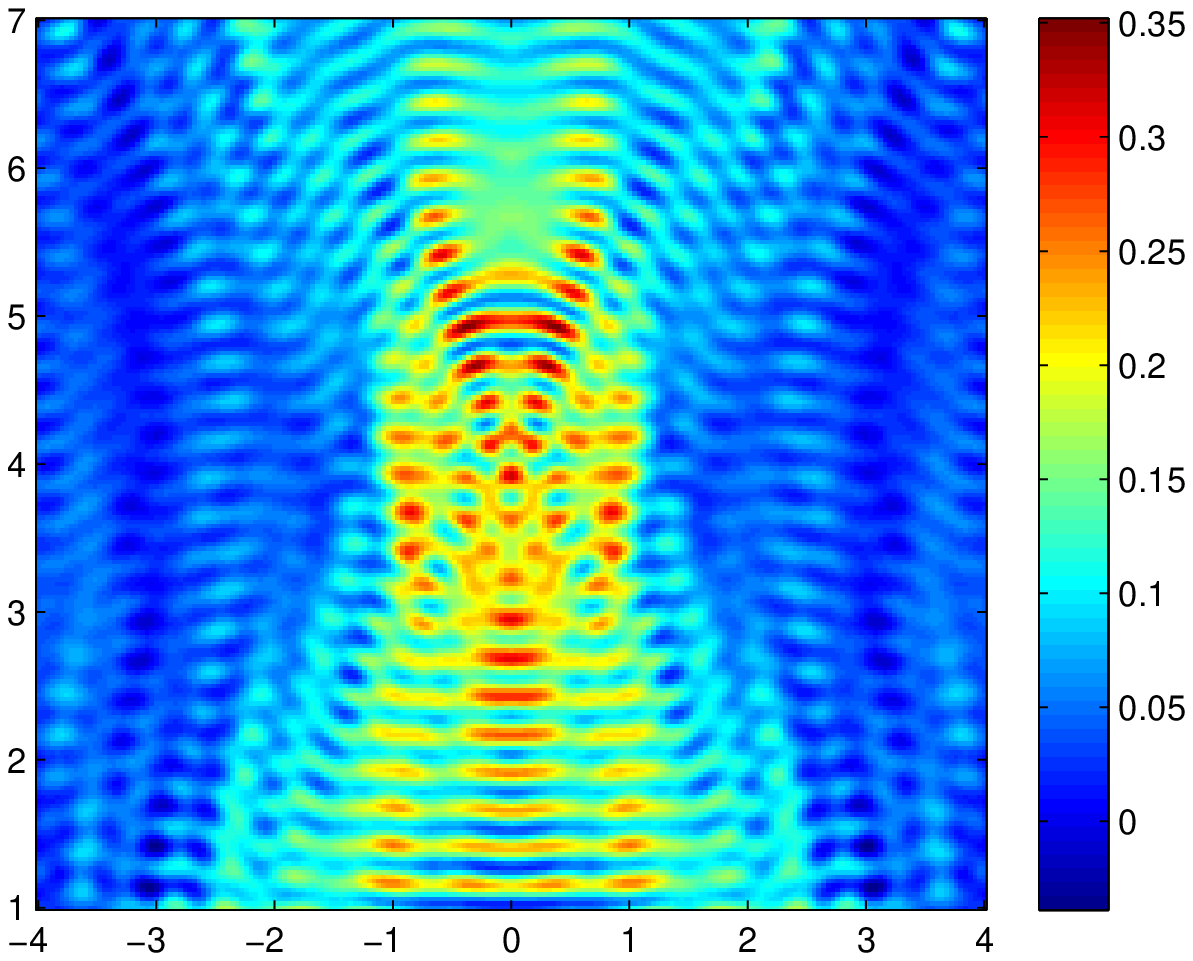}
    \includegraphics[width=0.24\textwidth, height=1.2in]{./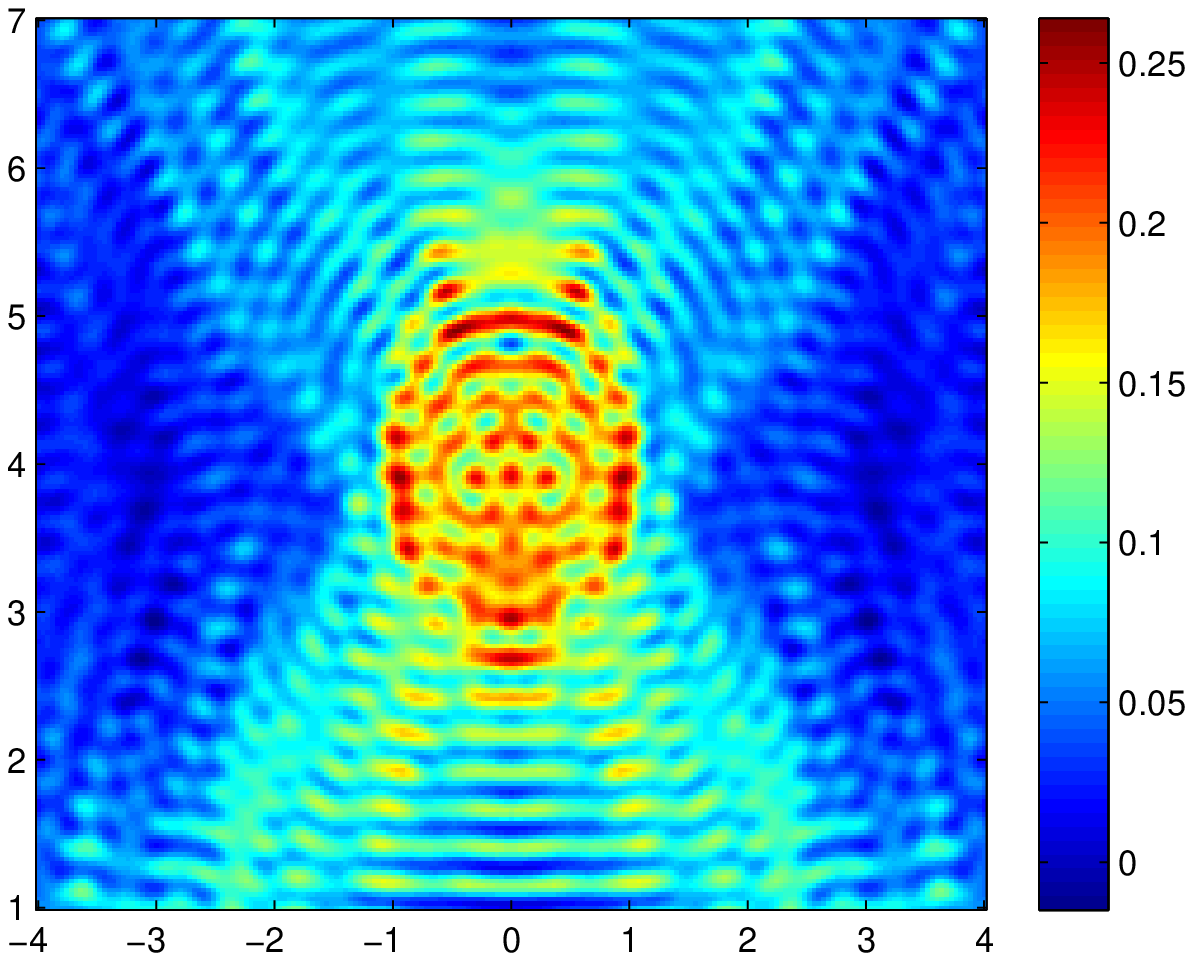}
    \includegraphics[width=0.24\textwidth, height=1.2in]{./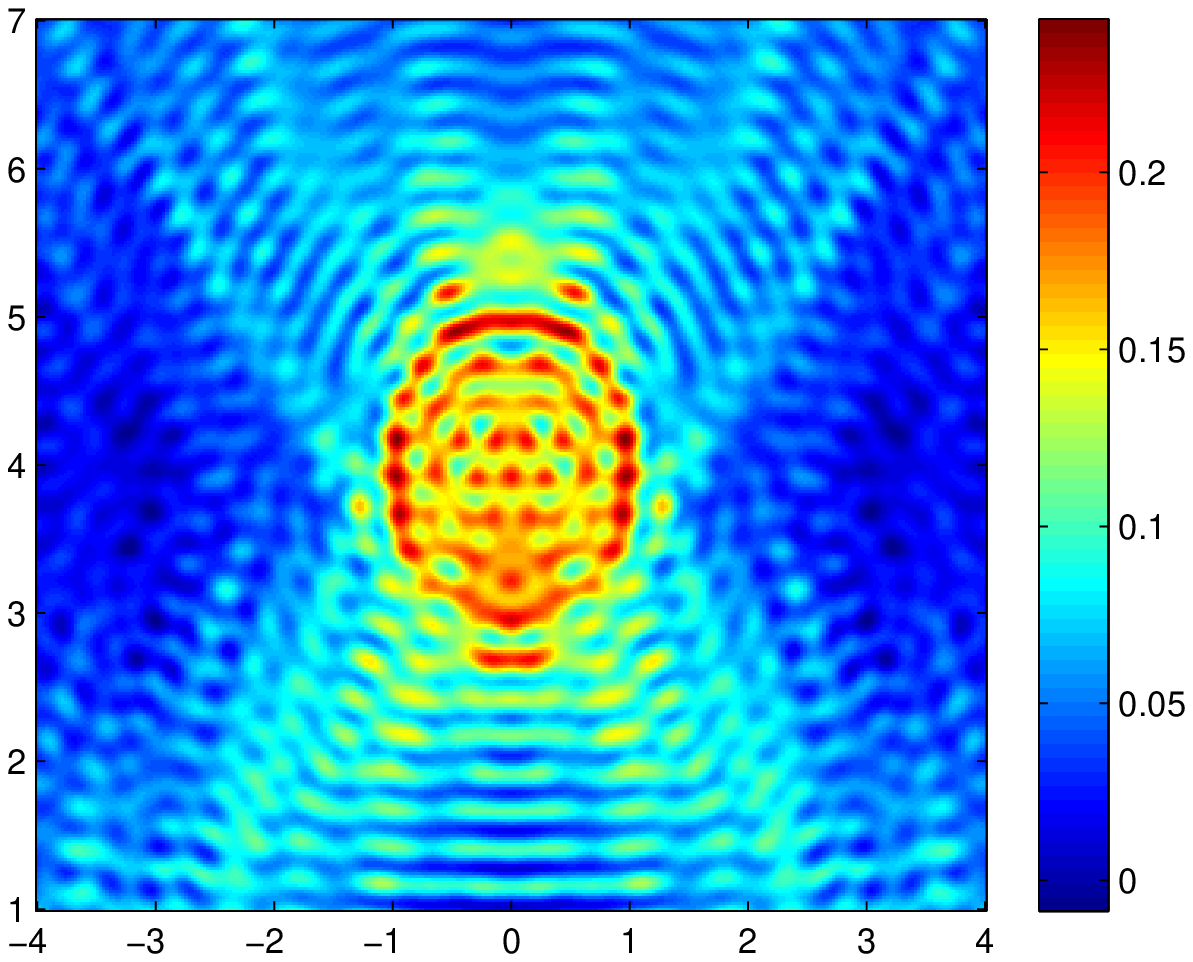}
    \includegraphics[width=0.24\textwidth, height=1.2in]{./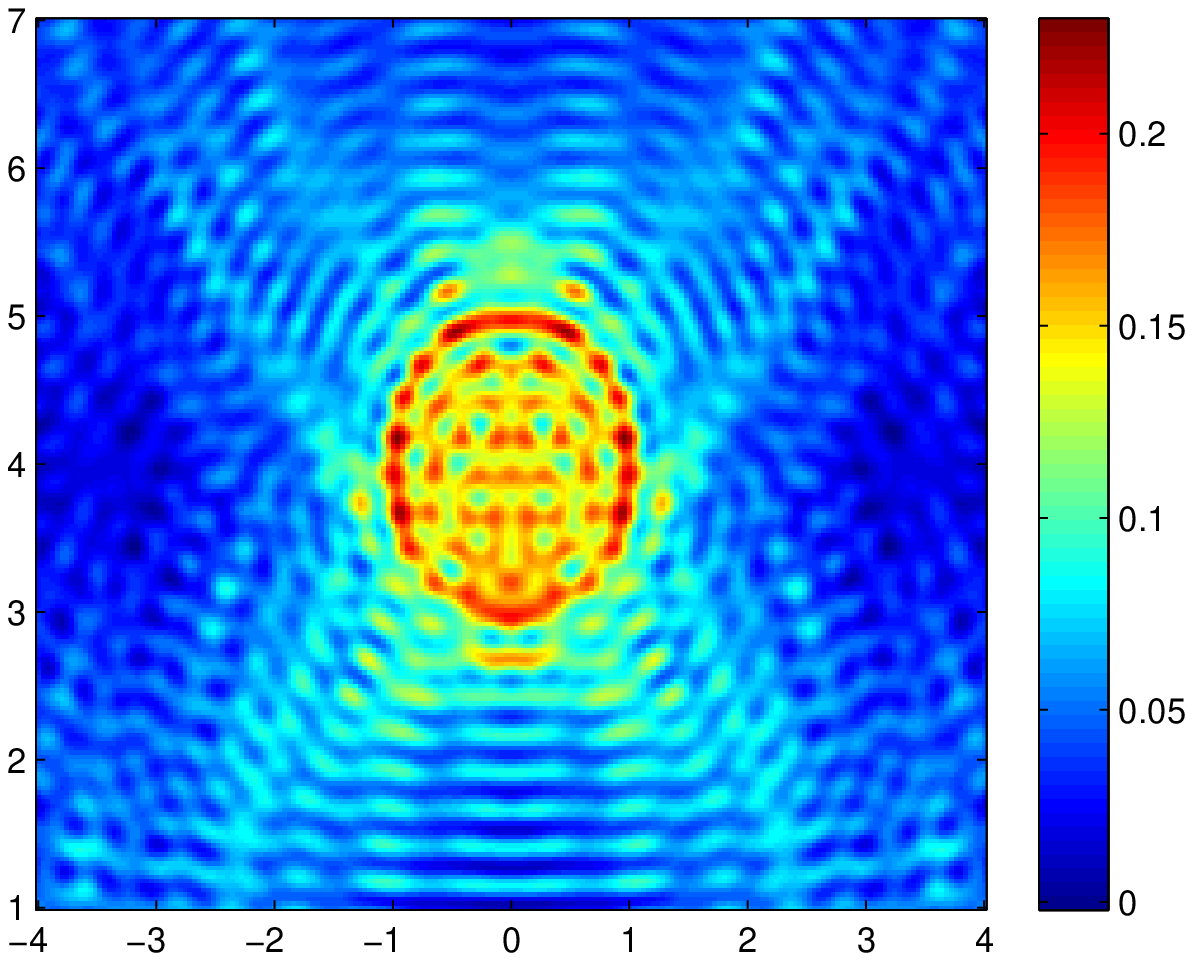}
    \includegraphics[width=0.24\textwidth, height=1.2in]{./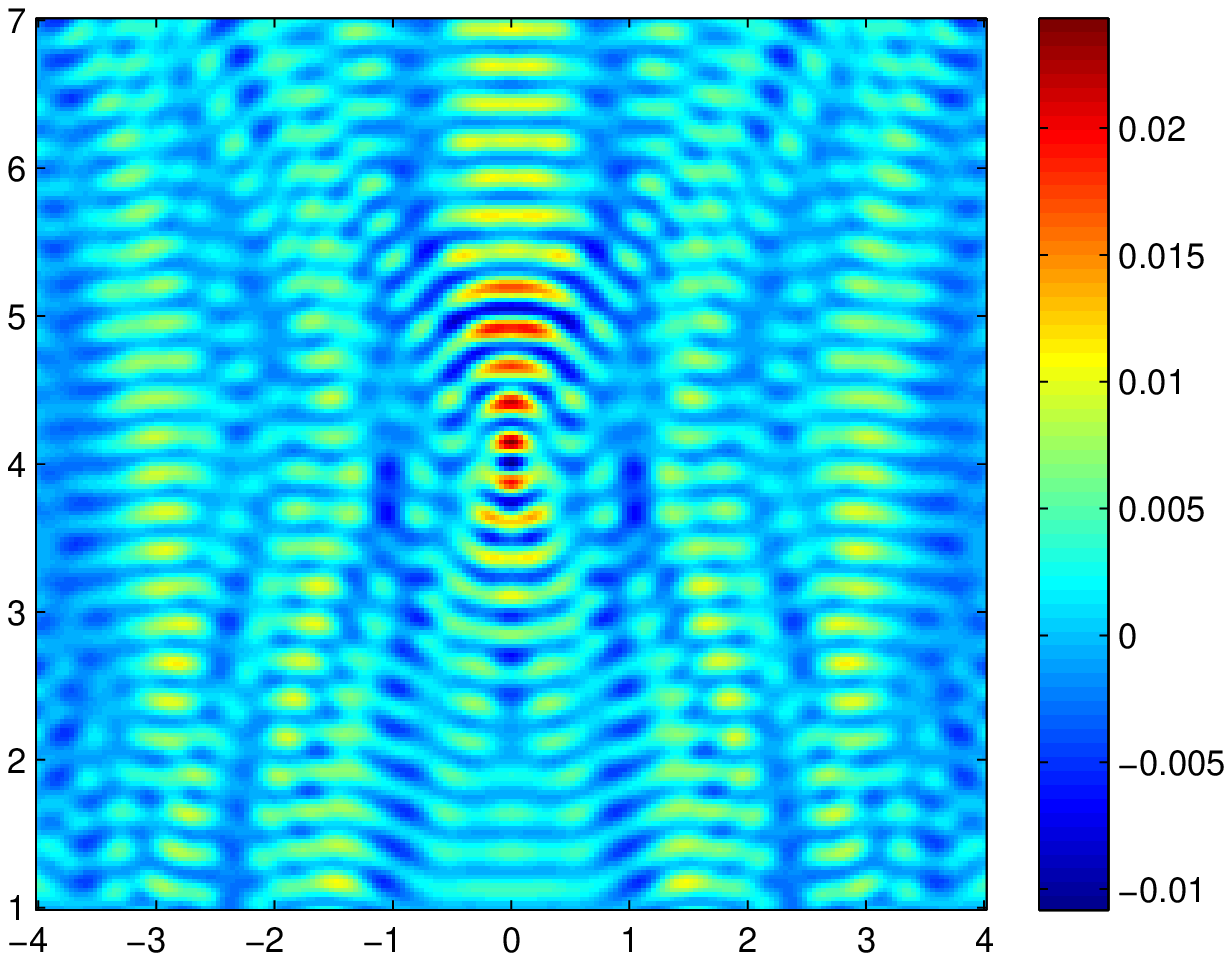}
    \includegraphics[width=0.24\textwidth, height=1.2in]{./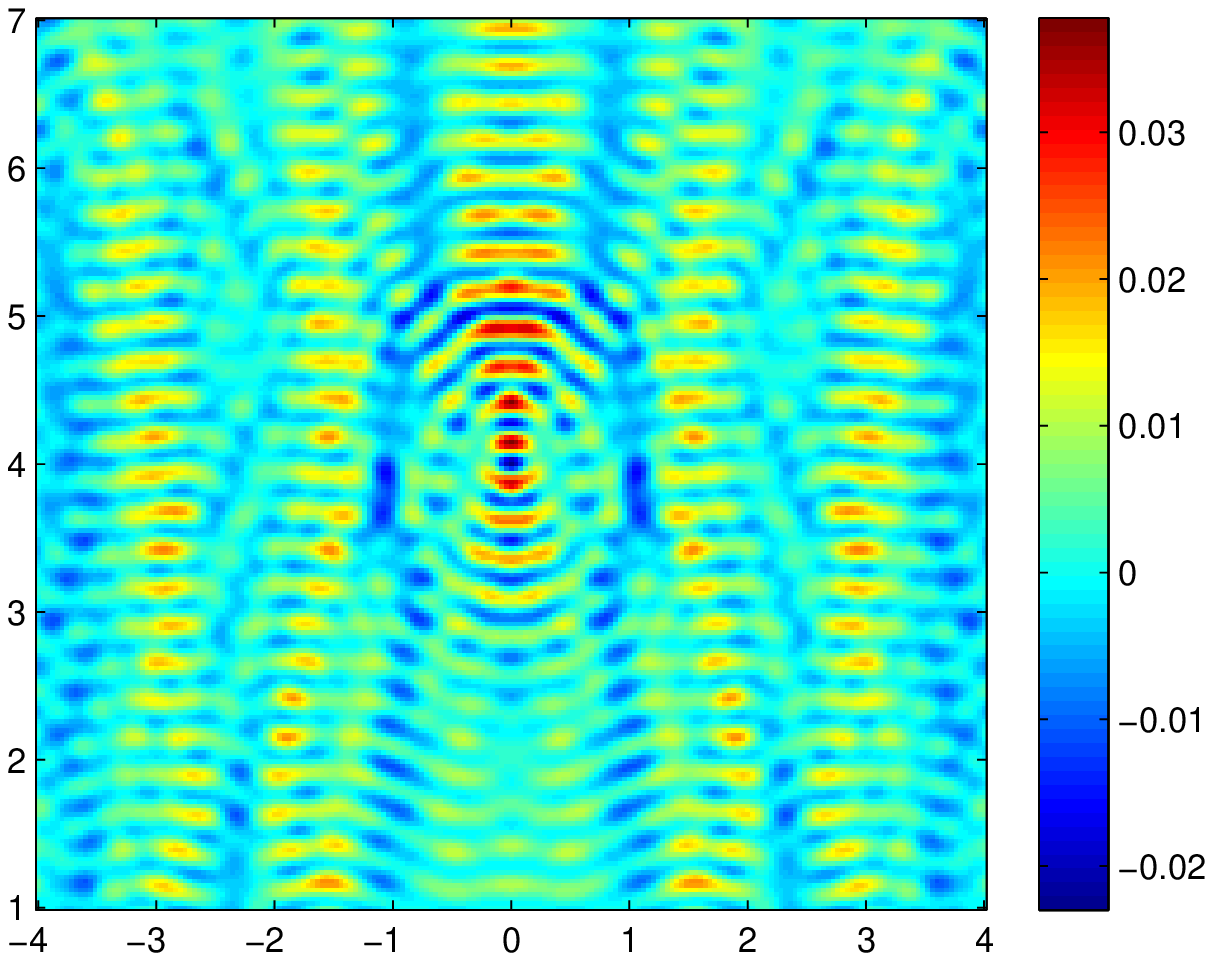}
    \includegraphics[width=0.24\textwidth, height=1.2in]{./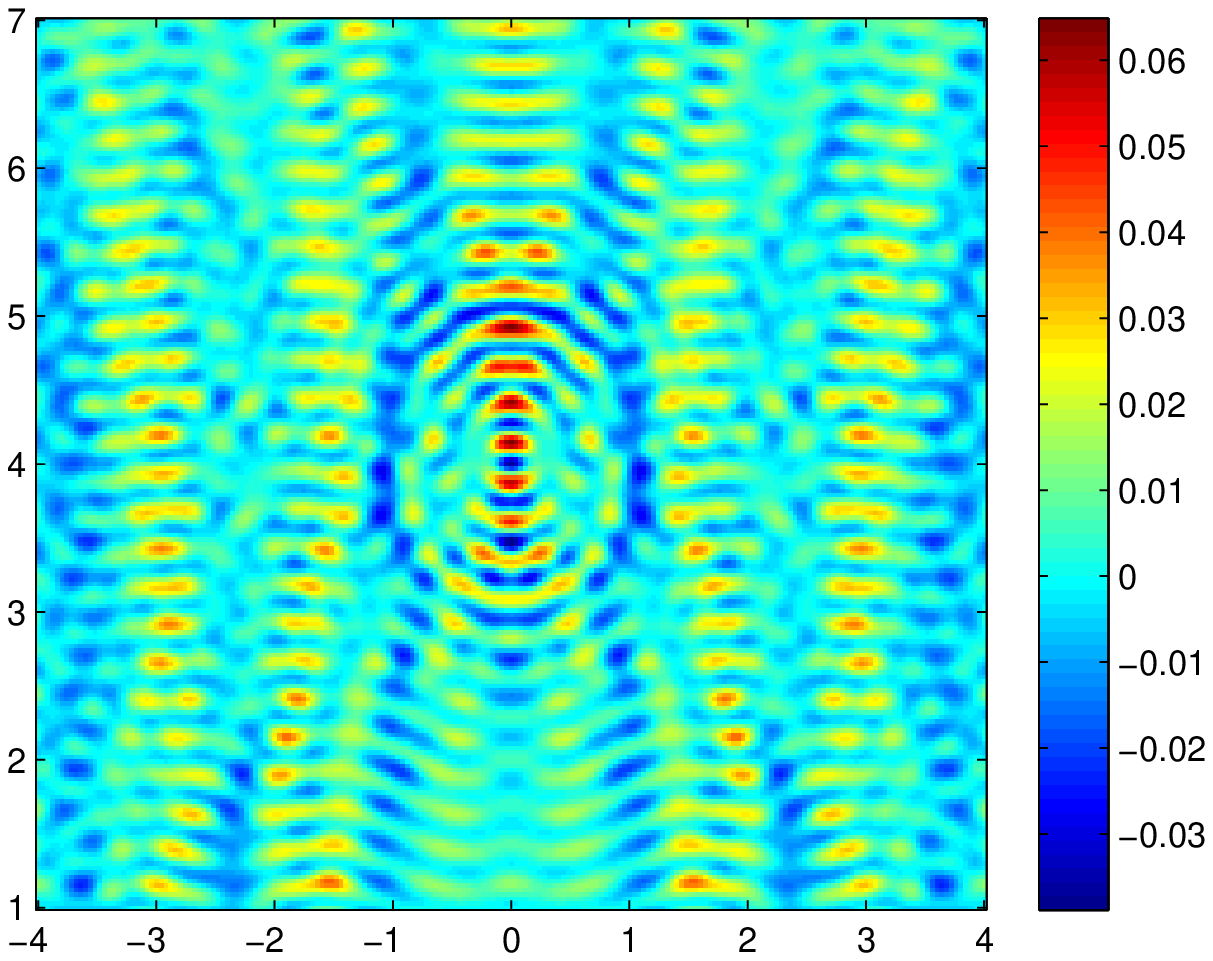}
    \includegraphics[width=0.24\textwidth, height=1.2in]{./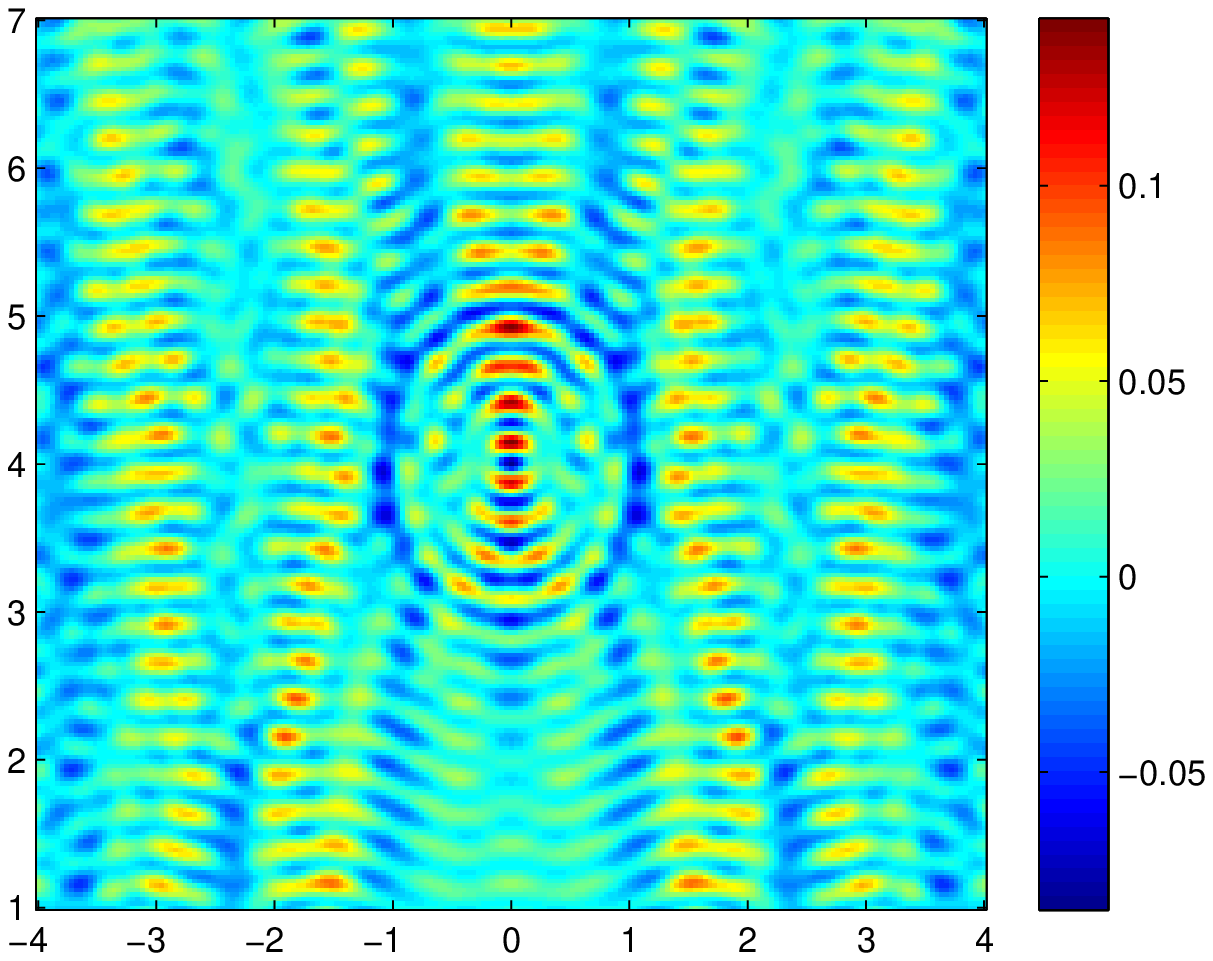}
    \includegraphics[width=0.24\textwidth, height=1.2in]{./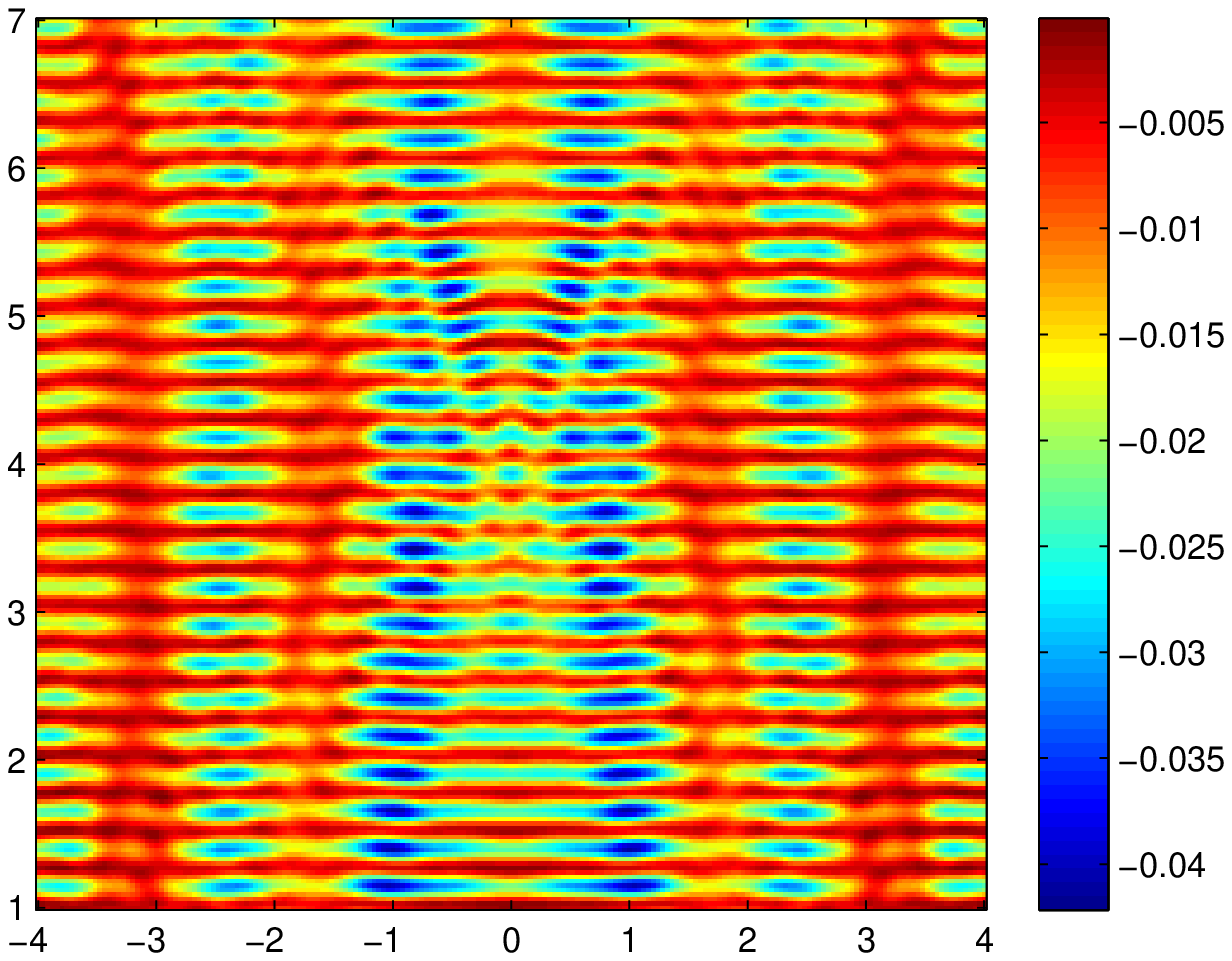}
    \includegraphics[width=0.24\textwidth, height=1.2in]{./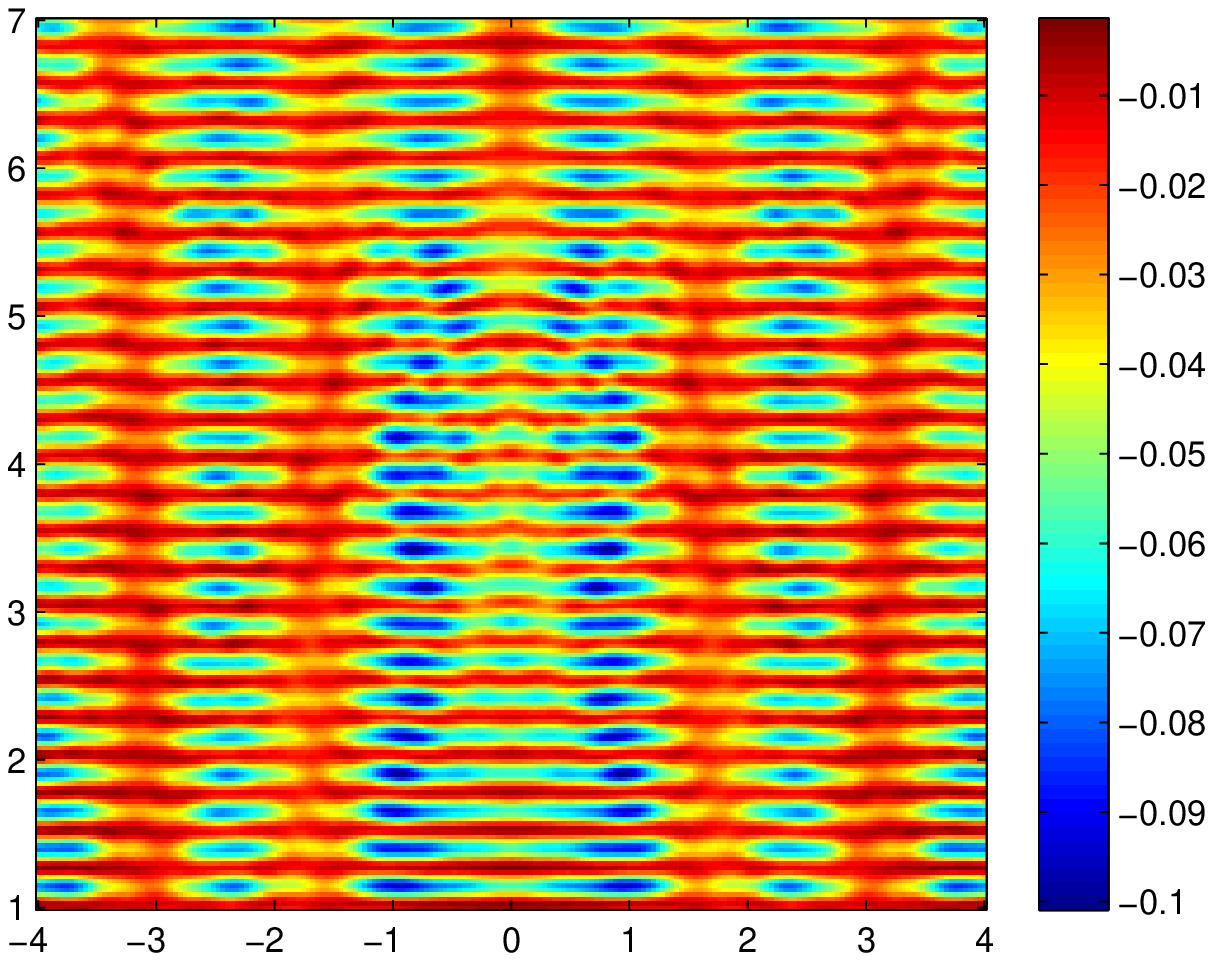}
    \includegraphics[width=0.24\textwidth, height=1.2in]{./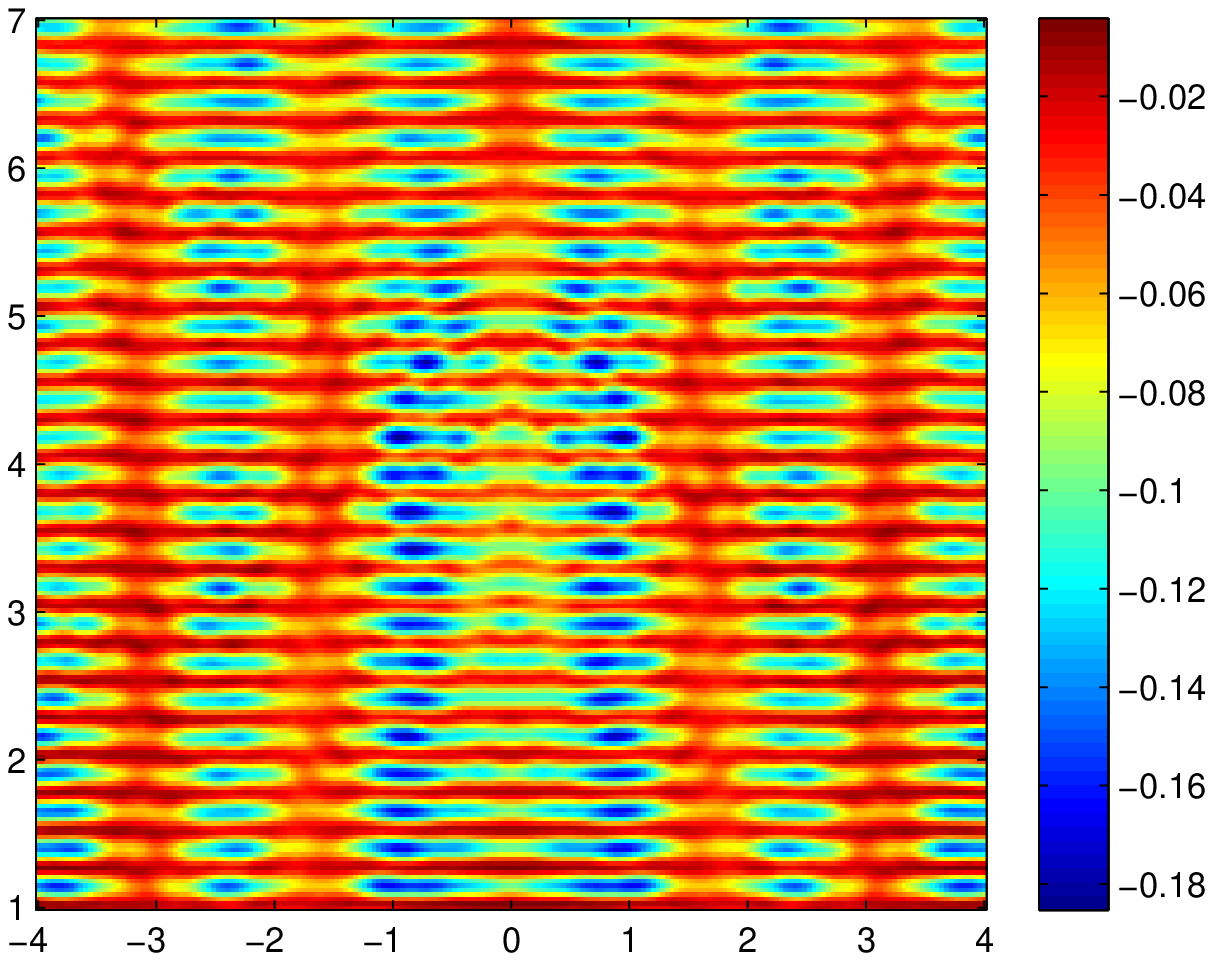}
    \includegraphics[width=0.24\textwidth, height=1.2in]{./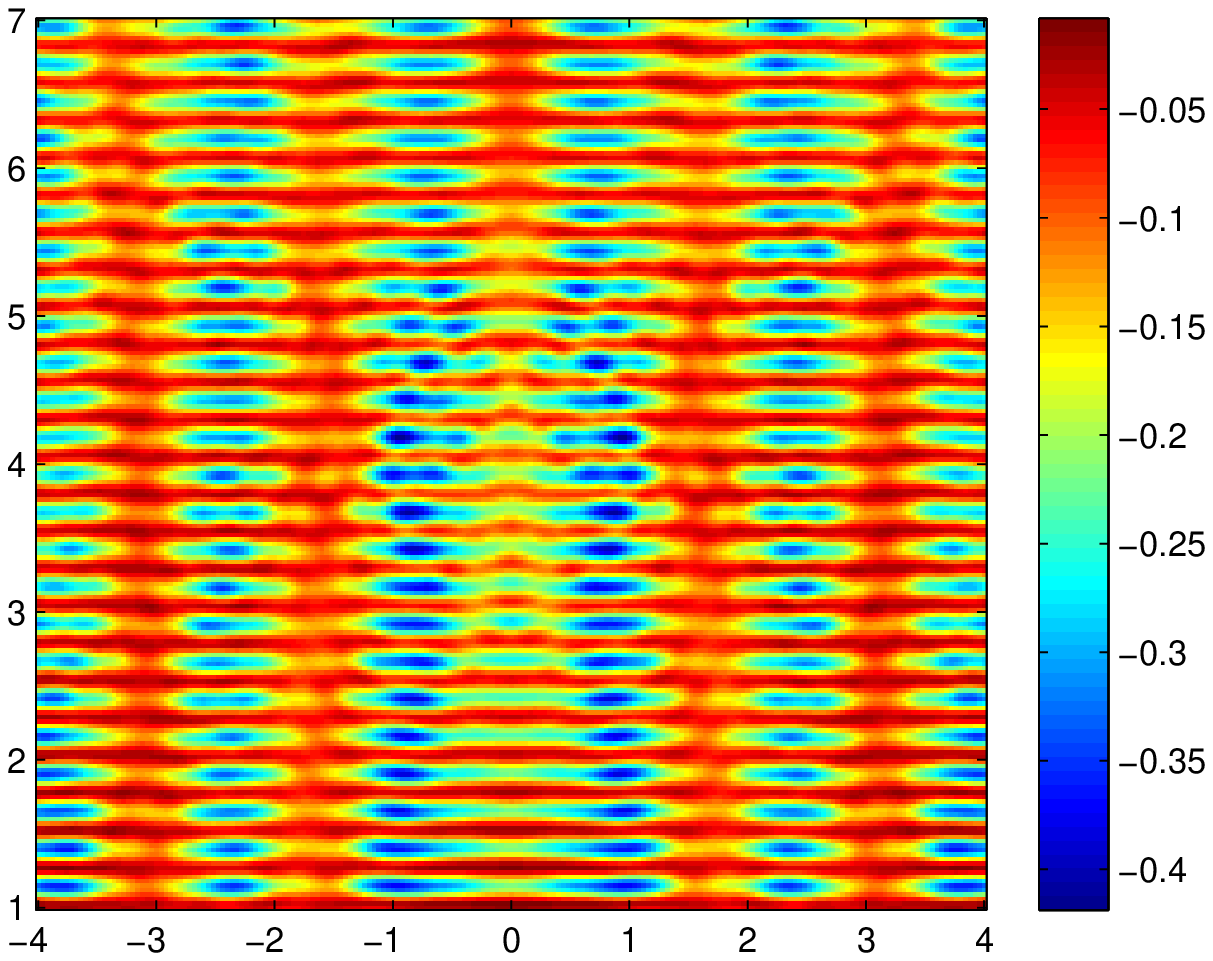}
    \caption{The imaging results with the aperture $d=10,20,30,50$ from left to right, respectively.
     The top row shows the imaging results of our RTM algorithm \eqref{cor2}.
     The middle and the bottom rows show the real and imaginary part of the Kirchhoff migration function \eqref{cor3}. } \label{figure_nn}
\end{figure}

The imaging results are shown in Figure \ref{figure_nn}. We observe that our RTM imaging function peaks at the boundary of the obstacle, while the imaging function $\tilde I_d(z)$ in \eqref{cor3} does not have this property. We remark that in \cite{tmp13}
the Kirchhoff migration type imaging algorithm is successfully used in a setting different from ours:
the sources and receivers in \cite{tmp13} span the full lateral direction of the waveguide which 
is perpendicular to the waveguide boundaries.

\bigskip
\textbf{Example 2}.
In this example we first consider the imaging of a circle of radius $\rho=1$, a kite, and  a rounded square with the impedance boundary condition with $\eta=1$ or $\eta=1000$ on $\Ga_D$. Let $\Om=(-4,4)\times(1,7)$ be the search region. The imaging function is computed at the nodal points
of a uniform $201\times 201$ mesh with the probed wavelength $\lam=0.5$. The imaging results on the top and bottom row shown in Figure \ref{figure_1}
correspond to the surface impedance $\eta=1$ and $\eta=1000$, respectively. We observe our imaging algorithm is quite robust with respect to the magnitude of the surface impedance $\eta$. 

\begin{figure}
    \centering
    \includegraphics[width=0.32\textwidth, height=1.4in]{./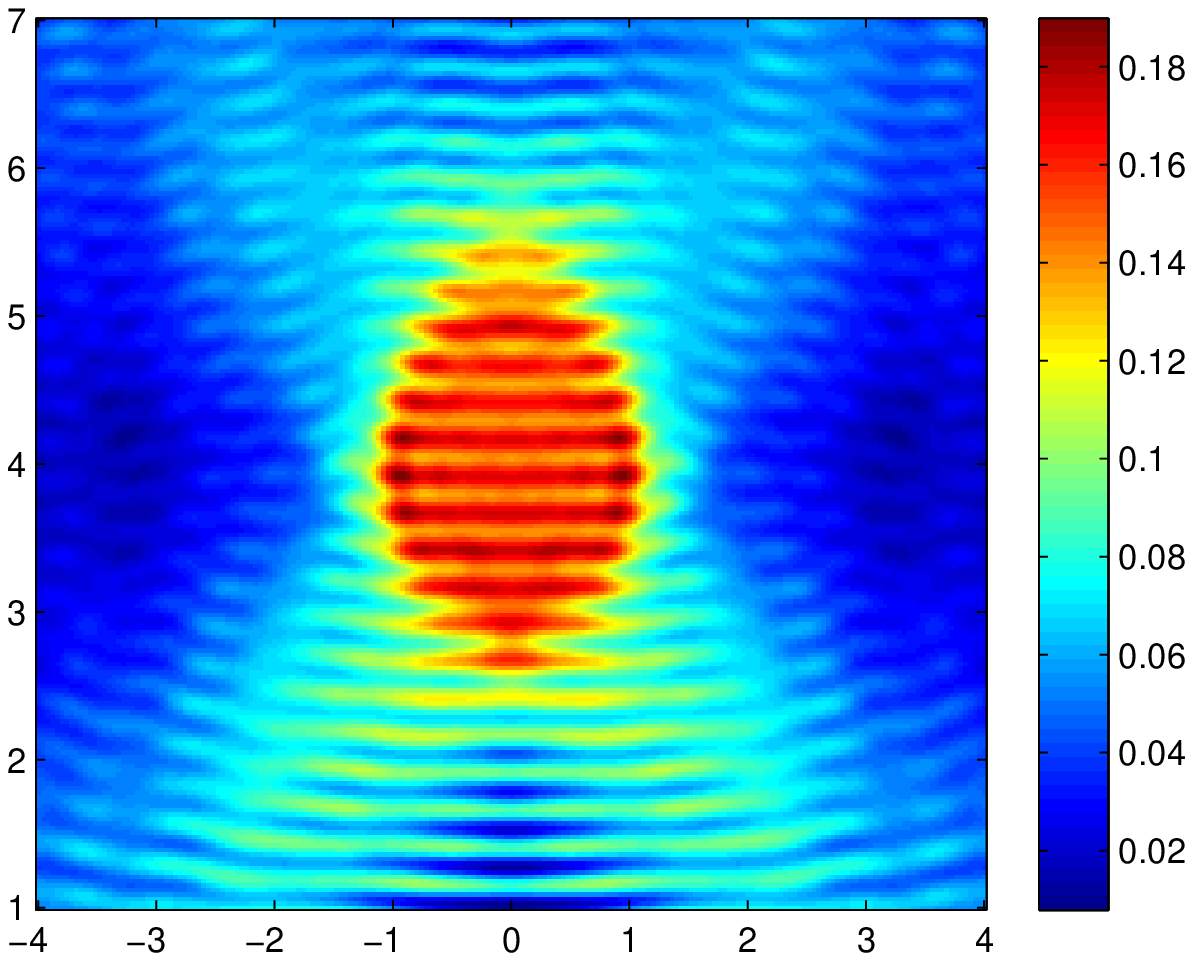}
    \includegraphics[width=0.32\textwidth, height=1.4in]{./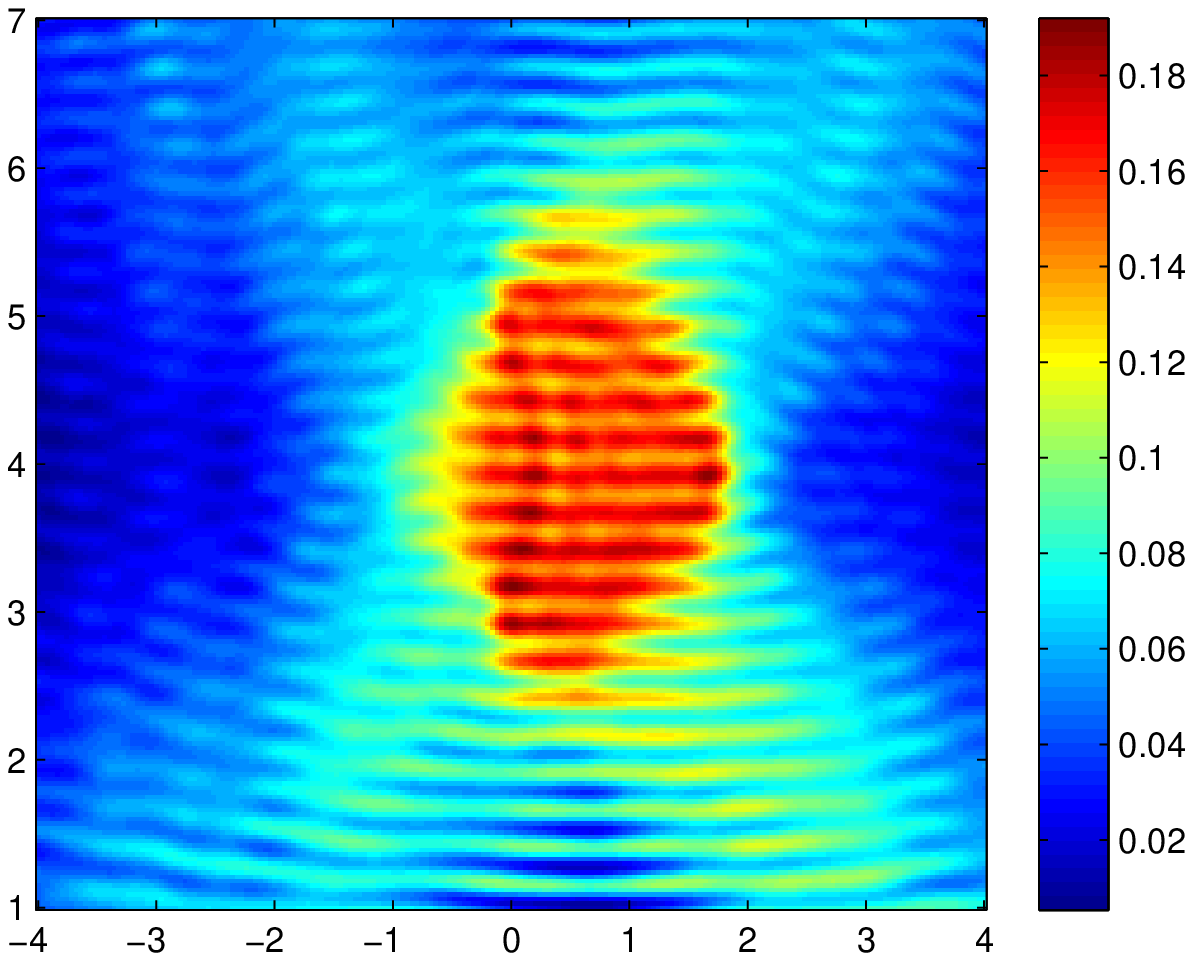}
    \includegraphics[width=0.32\textwidth, height=1.4in]{./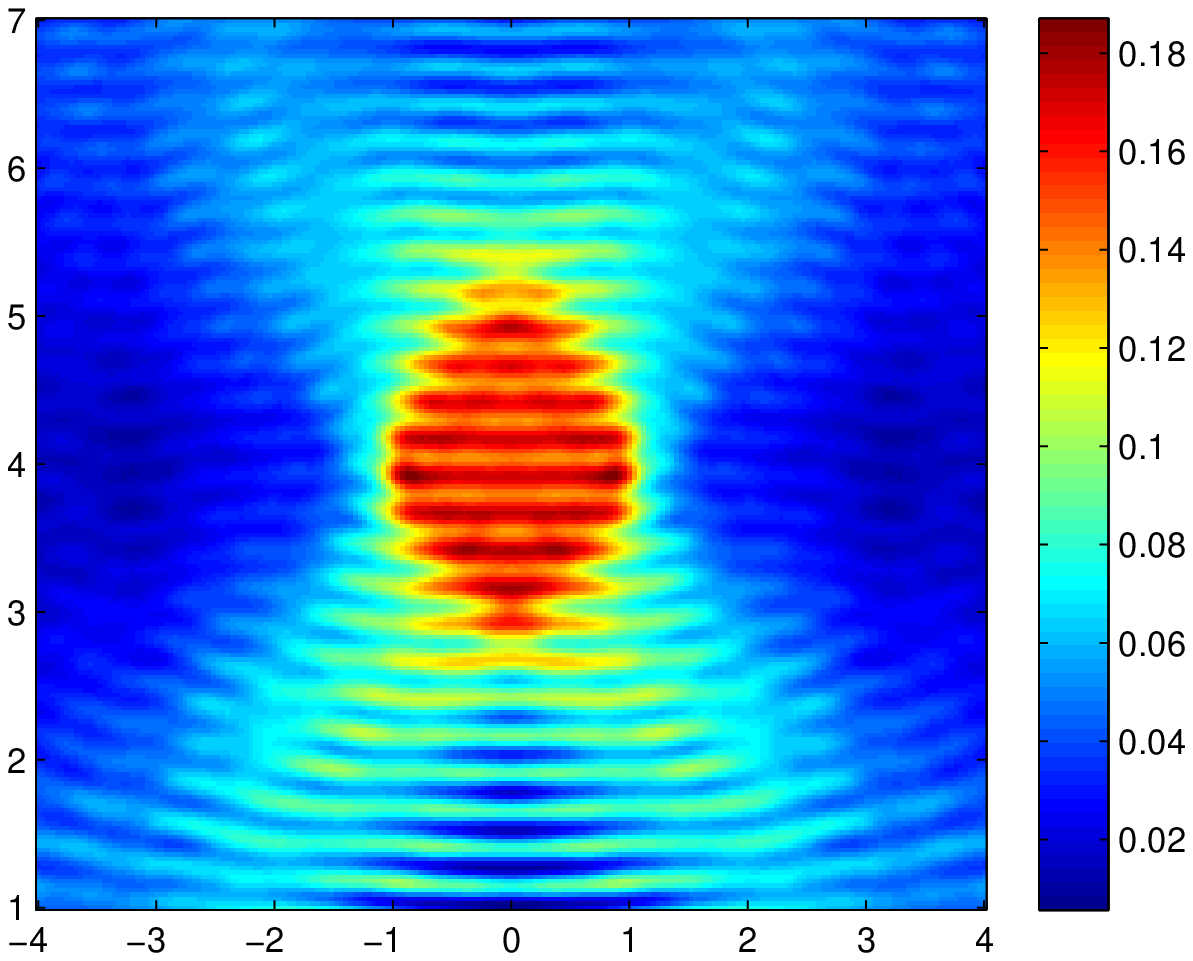}
    \includegraphics[width=0.32\textwidth, height=1.4in]{./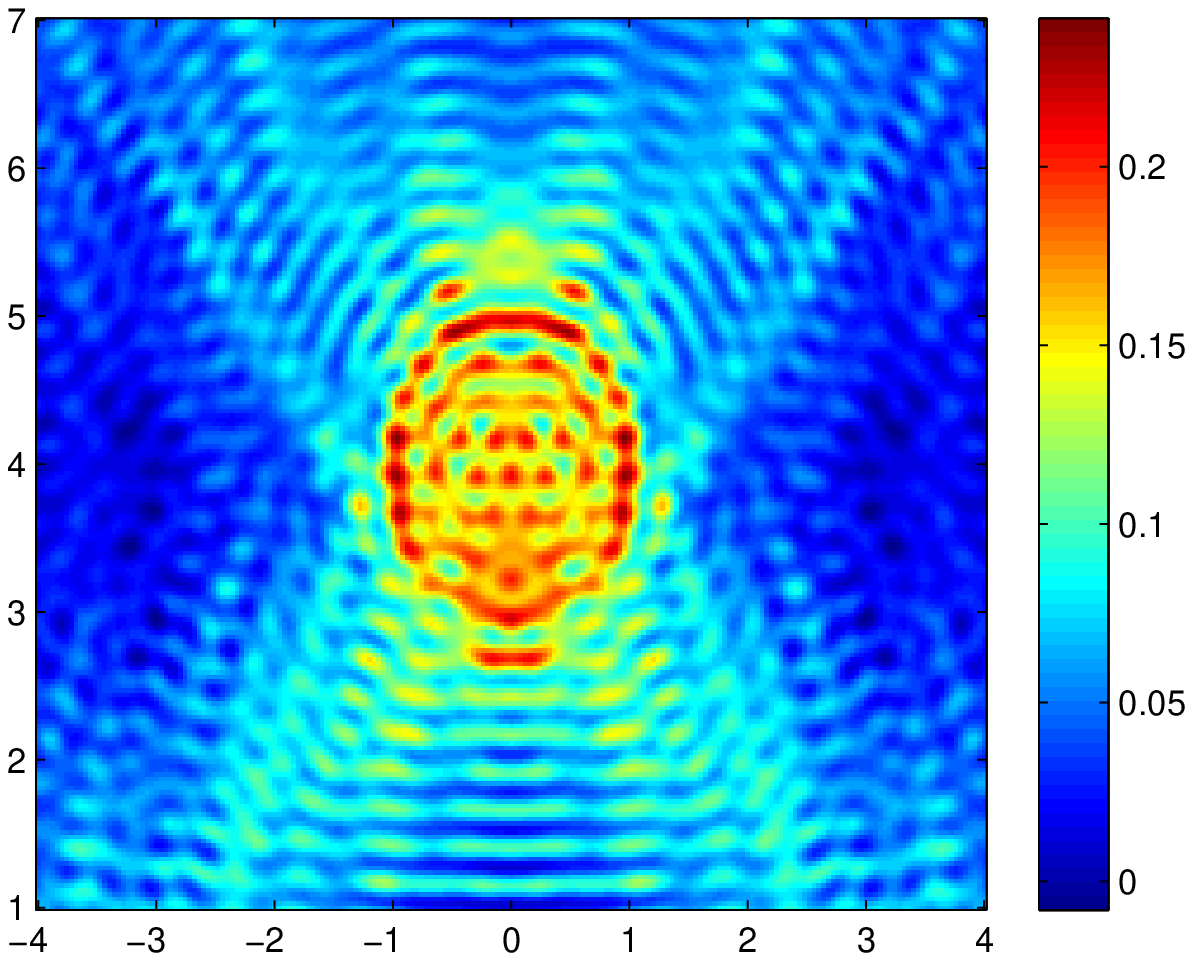}
    \includegraphics[width=0.32\textwidth, height=1.4in]{./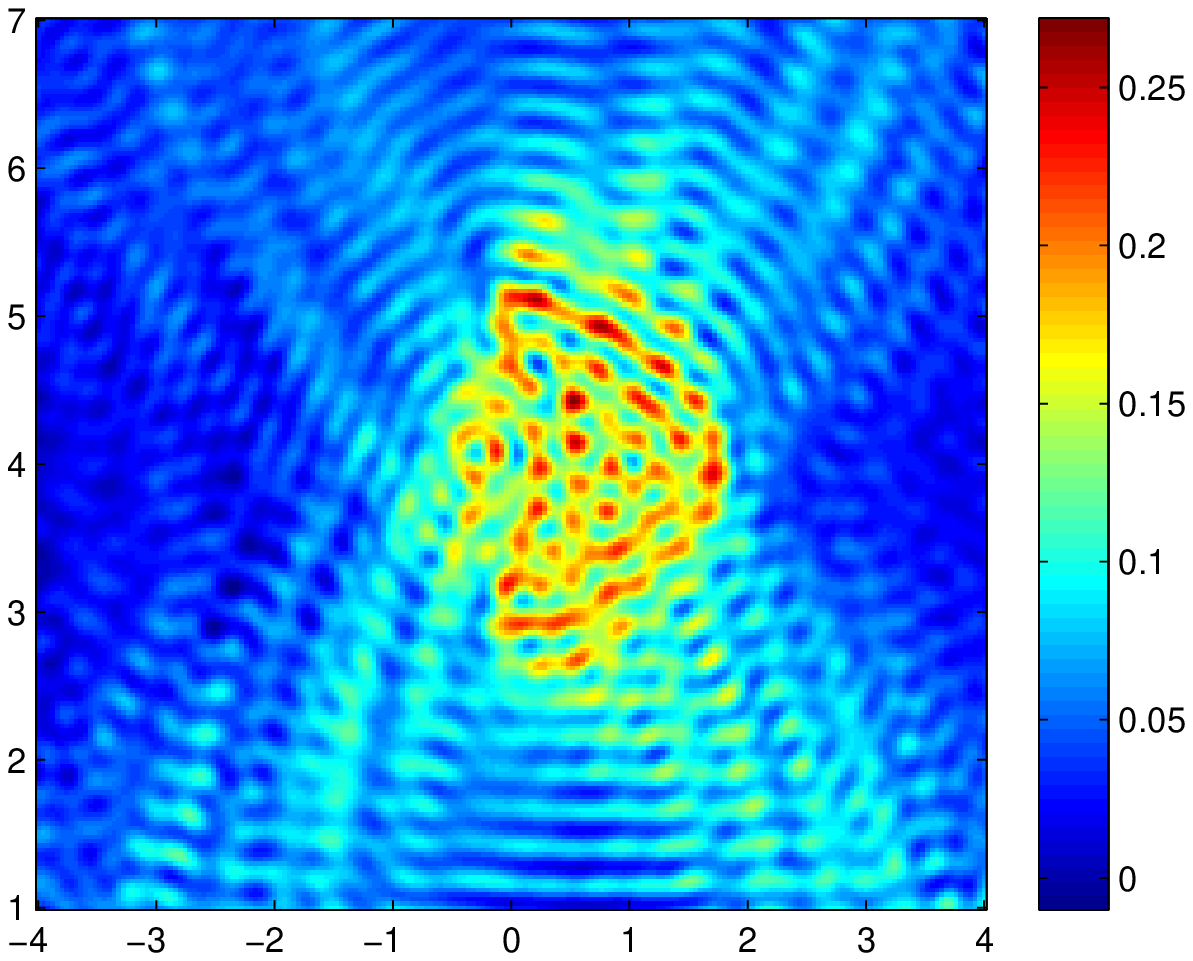}
    \includegraphics[width=0.32\textwidth, height=1.4in]{./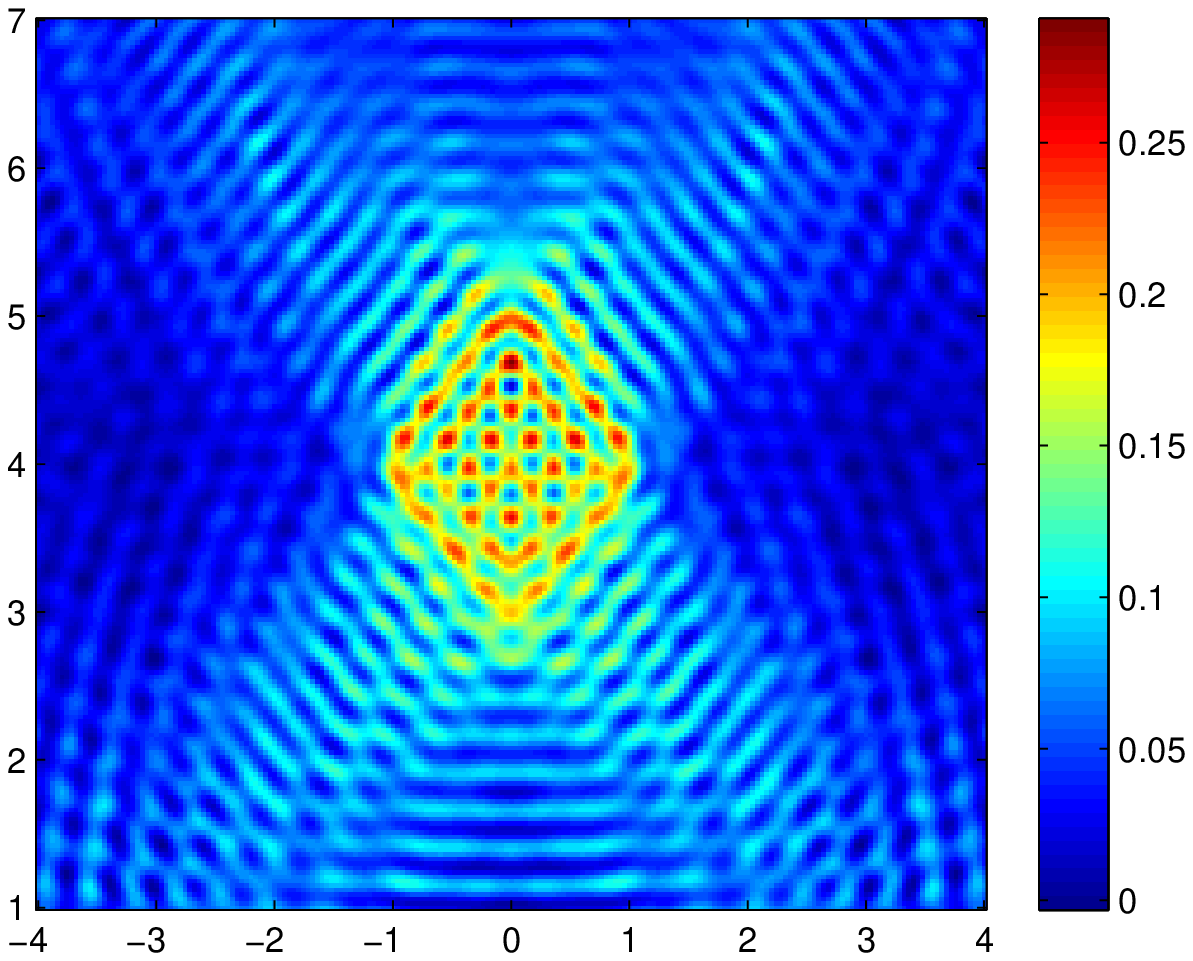}
    \caption{The imaging results with the probe wavelength $\lambda=0.5$, the thickness $h=10$, the aperture $d=30$, and $N_s=N_r=401$ for a circle, a kite, and a rounded square, respectively, The top and  bottom row show the imaging results of the RTM method for the surface impedance $\eta=1$ and $\eta=1000$, respectively. } \label{figure_1}
\end{figure}


\begin{figure}
 \centering
    \includegraphics[width=0.24\textwidth, height=1.2in]{./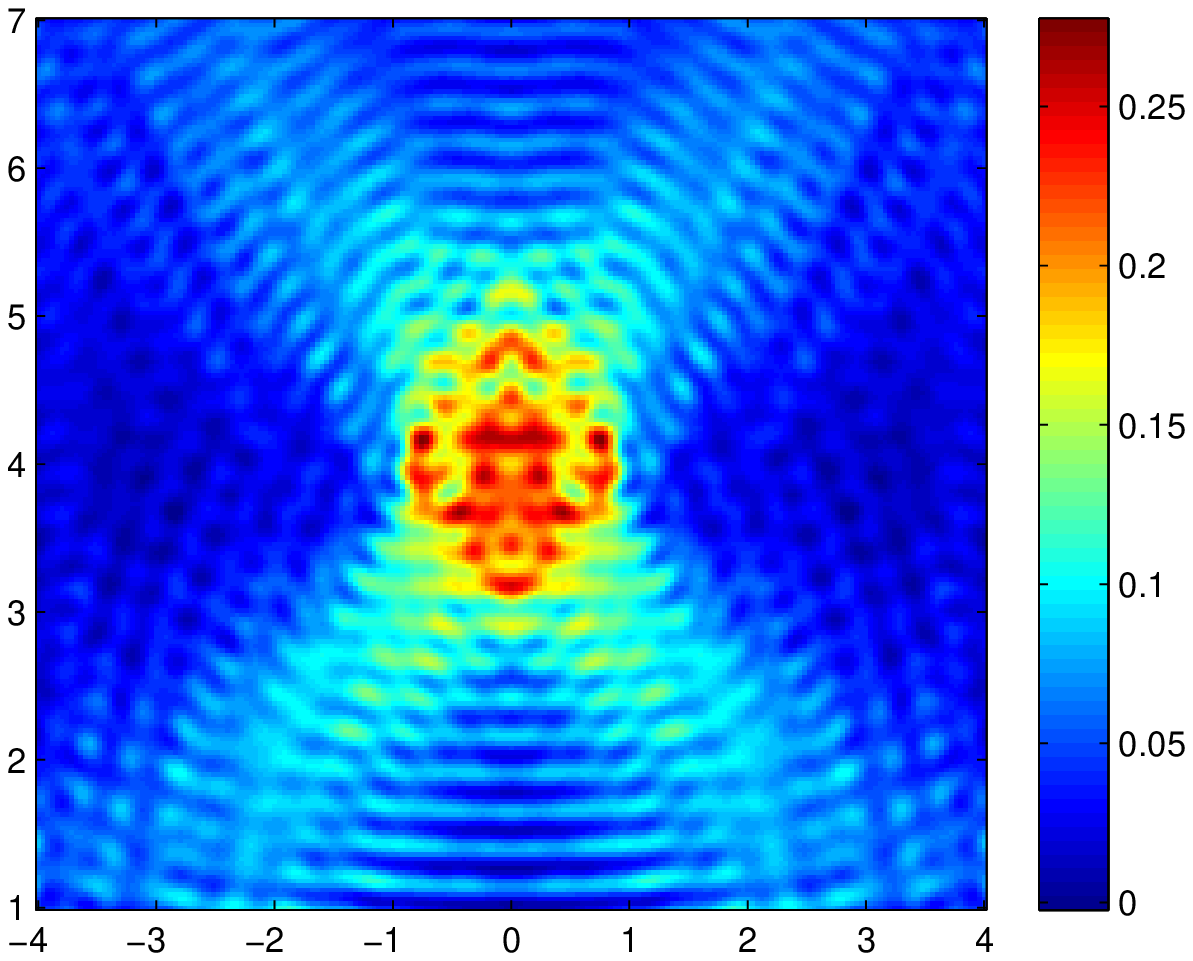}
    \includegraphics[width=0.24\textwidth, height=1.2in]{./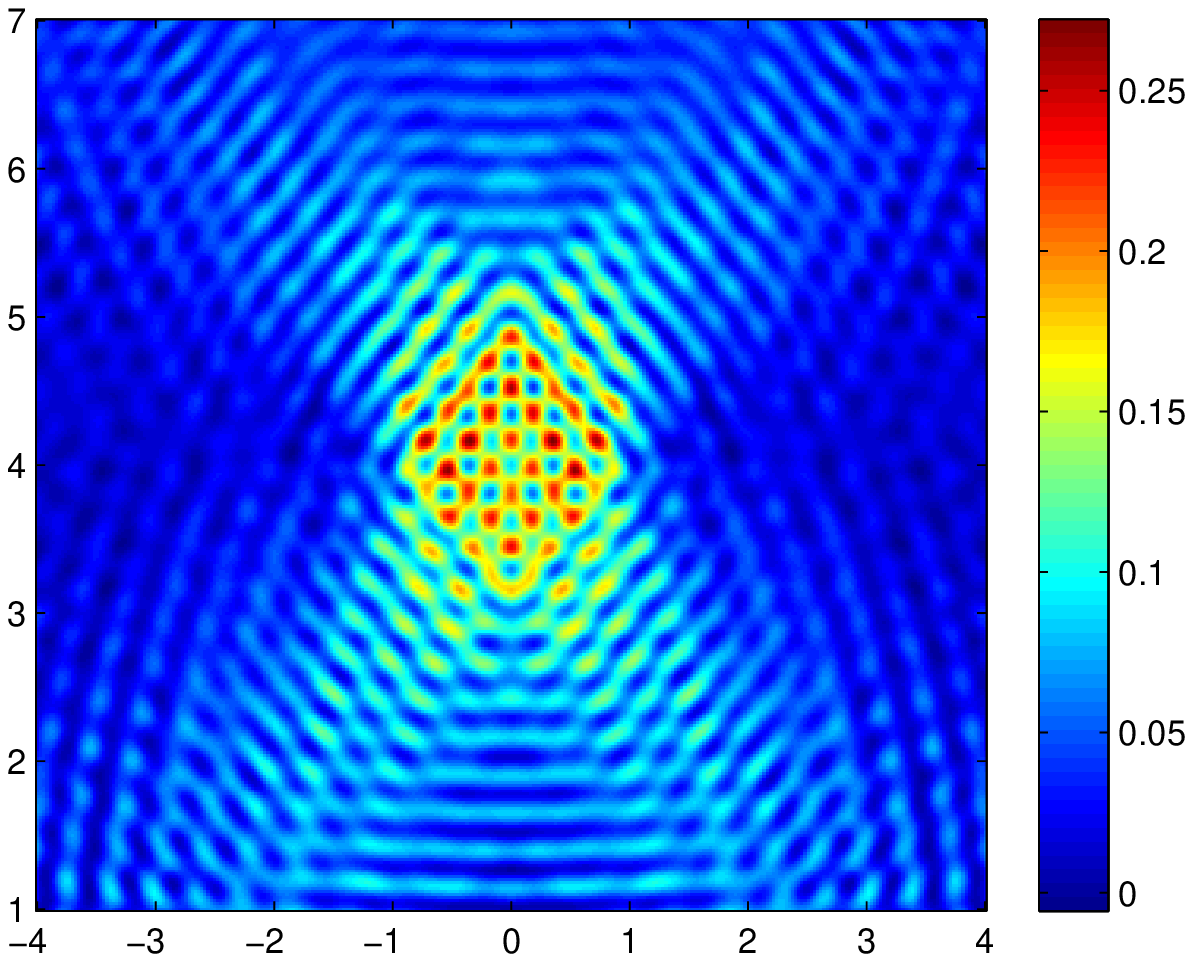}
    \includegraphics[width=0.24\textwidth, height=1.2in]{./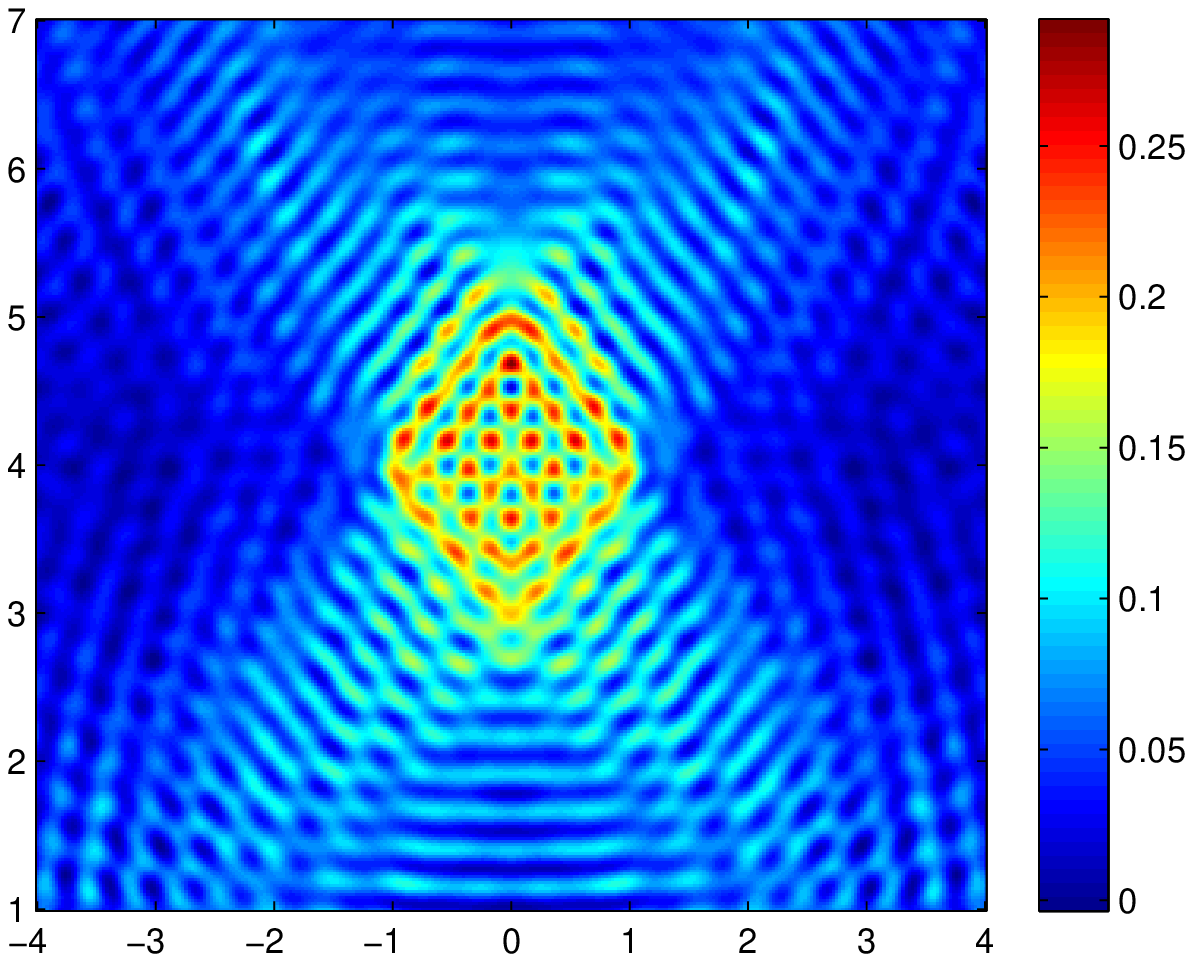}
    \includegraphics[width=0.24\textwidth, height=1.2in]{./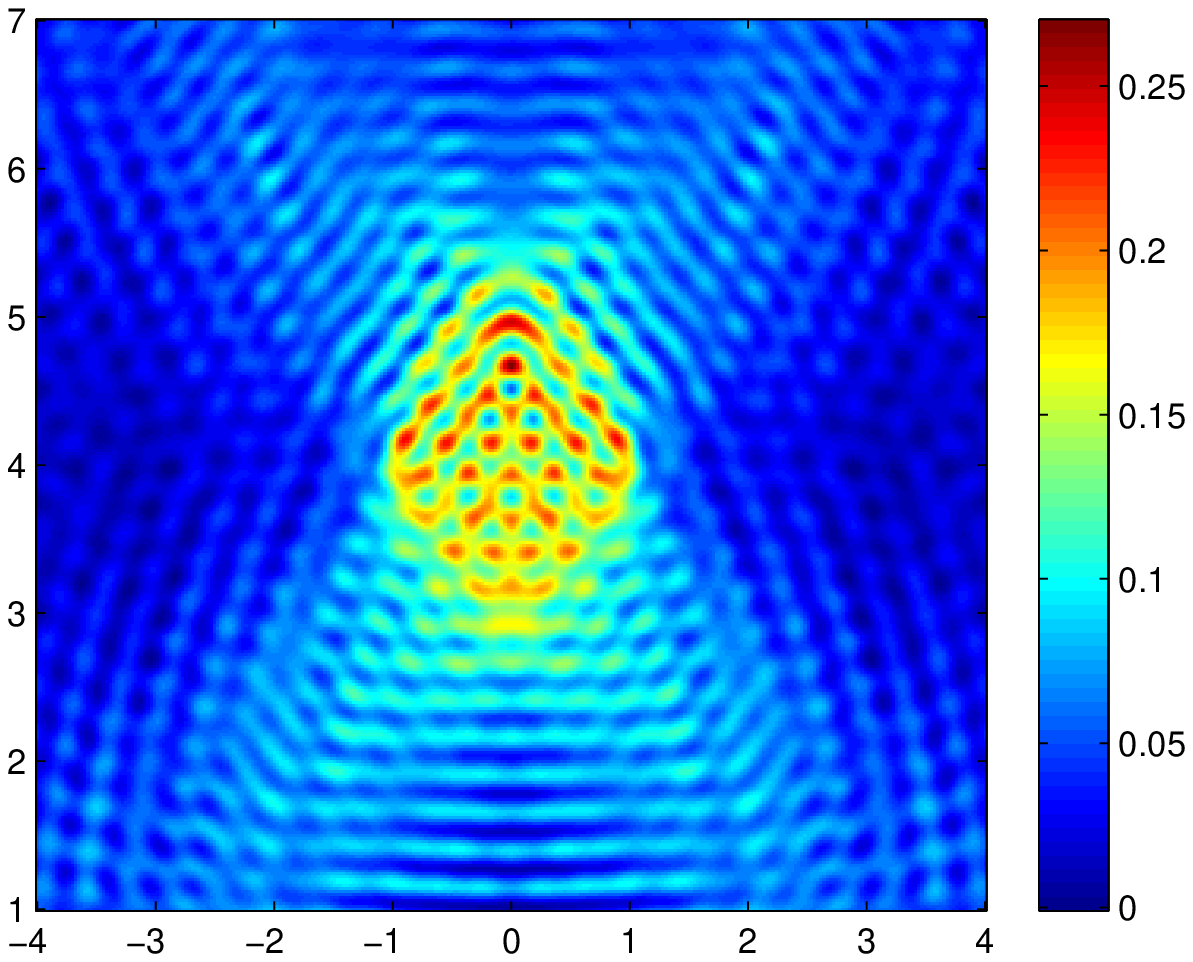}
    \caption{The imaging results, from left to right, for the penetrable obstacle with $n(x)=0.25$, a non-penetrable obstacle with homogeneous Neumann condition, homogeneous Dirichlet condition, and partially coated impedance boundary condition ($\eta=1000$ on the upper half boundary and $\eta=1$ on the lower half boundary), respectively. The probe wavelength $\lambda=0.5$, the thickness $h=10$, the aperture $d=30$, and $N_s=N_r=401$.} \label{figure_bc}
\end{figure}

We then consider to find a penetrable obstacle with the refraction index $n(x)=0.25$, a non-penetrable obstacle with homogeneous Neumann, homogeneous Dirichlet, and partially coated impedance boundary condition ($\eta=1000$ on the upper half boundary and $\eta=1$ on the lower half boundary), respectively. The results are shown in Figure \ref{figure_bc} which indicates clearly that our RTM method can reconstruct the boundary of the obstacle without a priori information on penetrable or non-penetrable obstacles, and in the case of non-penetrable obstacles, the type of the boundary conditions on the boundary of the obstacle.

\begin{figure}
 \centering
    \includegraphics[width=0.24\textwidth, height=1.1in]{./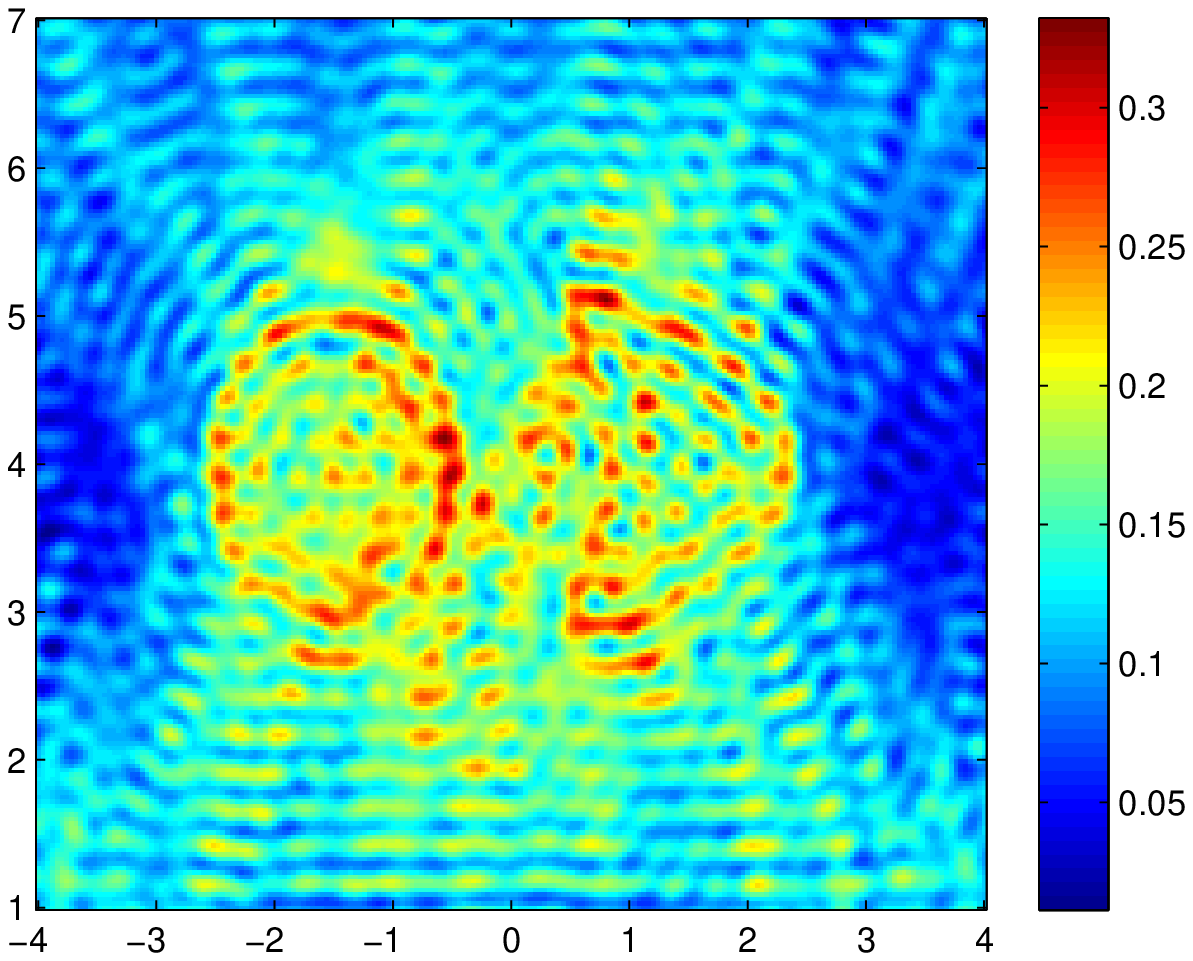}
    \includegraphics[width=0.24\textwidth, height=1.1in]{./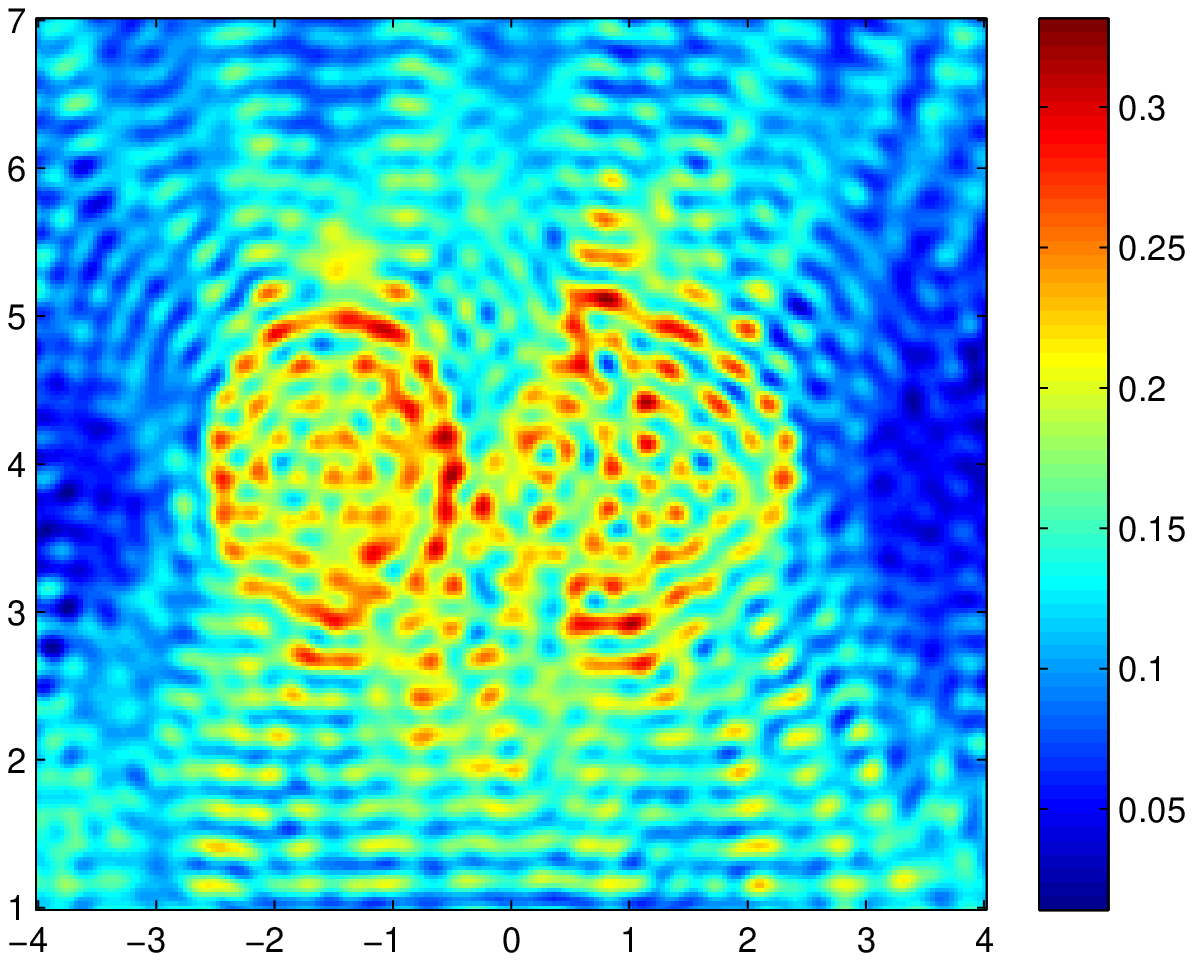}
    \includegraphics[width=0.24\textwidth, height=1.1in]{./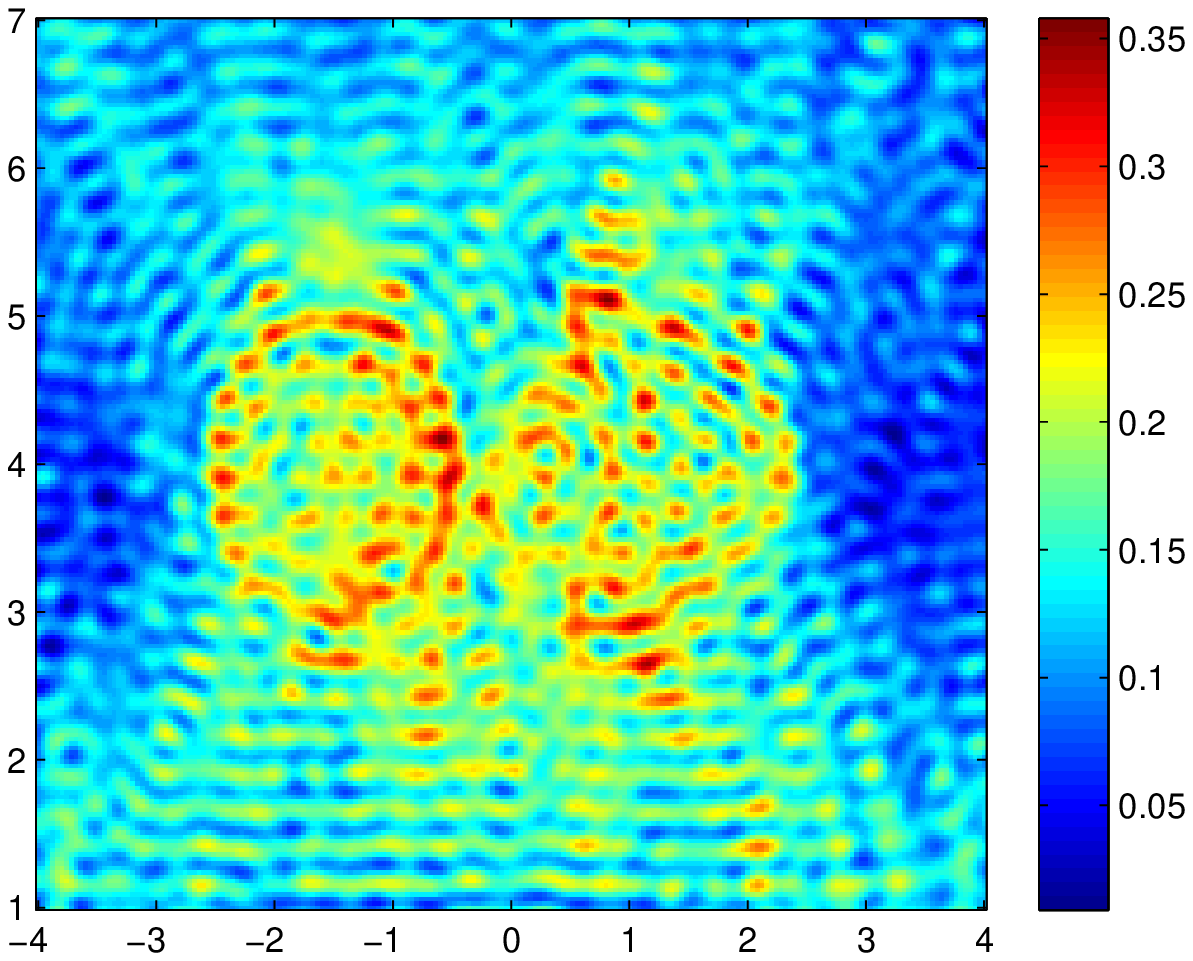}
    \includegraphics[width=0.24\textwidth, height=1.1in]{./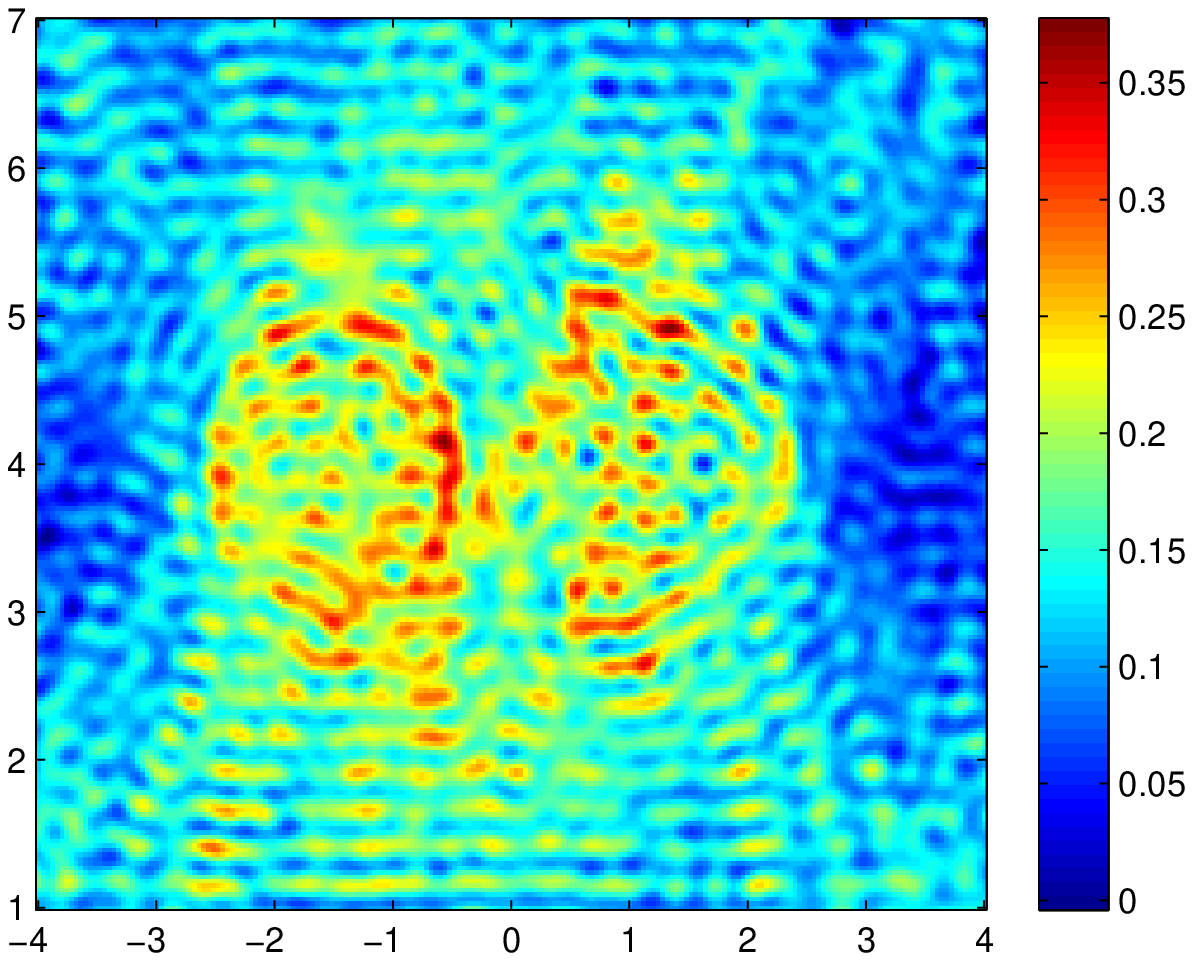}
    \caption{The imaging results using data added with additive Gaussian noise and $\mu = 10\%, 20\%, 30\%, 40\%$ from left to right,  respectively. The probe wavelength $\lambda=0.5$, the thickness $h=10$, the aperture $d=30$, and $N_s=N_r=401$.} \label{figure_3}
\end{figure}

\bigskip
\textbf{Example 3}. In this example we consider the stability of the imaging function with respect to the complex additive Gaussian random noise. We introduce the additive Gaussian noise as follows (see e.g. \cite{cch_a}):
    \begin{equation*}
        u_{noise} = u_s + \nu_{\rm noise},
    \end{equation*}
where $u_s$ is the synthesized data and $\nu_{\rm noise}$ is the complex Gaussian noise with mean zero and standard deviation $\mu$ times the maximum of  the data $|u_s|$, i.e. $\nu_{\rm noise}=\frac{\mu \max|u_s|}{\sqrt{2}}(\eps_1 + \i \eps_2)$, and $\eps_j ~ \thicksim \mathcal{N}(0,1)$ for the real $(j=1)$ and imaginary part $(j=2)$. \par
For the fixed probe wavelength $\lambda=0.5$, we choose one kite and one circle in our test.  The search domain is $\Omega=(-4, 4)\times(1,7)$ with a sampling $201\times 201$ mesh.
Figure \ref{figure_3} shows the imaging results with the noise level $\mu = 10\%, 20\%,30\%, 40\%$ in the single frequency scattered data, respectively.
 The left table in Table \ref{table1} shows the noise level in this case, where $\sigma=\mu \max_{x_r,x_s}|u^s(x_s,x_r)|$, $\|u_s\|_{\ell^2}^2=\frac{1}{N_sN_r}\sum^{N_s,N_r}_{s,r=1}|u^s(x_s.x_r)|^2$, $\|\nu_{\rm noise}\|_{\ell^2}^2 = \frac{1}{N_sN_r}\sum^{N_s,N_r}_{s,r=1}|\nu_{\rm noise}(x_s.x_r)|^2$.

The imaging quality can be improved by using multi-frequency data as illustrated in Figure \ref{figure_4}, in which we show the imaging results added with  the noise level $\mu =10\%, 20\%, 30\%, 40\%$ Gaussian noise by summing the imaging functions for the probed wavelengths $\lam=1/1.8, 1/1.9, 1/2.0, 1/2.1, 1/2.2$. The right table in Table \ref{table1} shows the noise level in the case of multi-frequency data, where $\sigma$, $\|u_s\|_{\ell^2}$, and $\|\nu_{\rm noise}\|_{\ell^2}$ are the arithmetic mean of the corresponding values for different frequencies, respectively. The imaging result is visually much more better than the single frequency imaging result, and the noise
 is greatly suppressed after the summation over individual frequency imaging results.

\begin{figure}[h]
 \centering
    \includegraphics[width=0.24\textwidth, height=1.1in]{./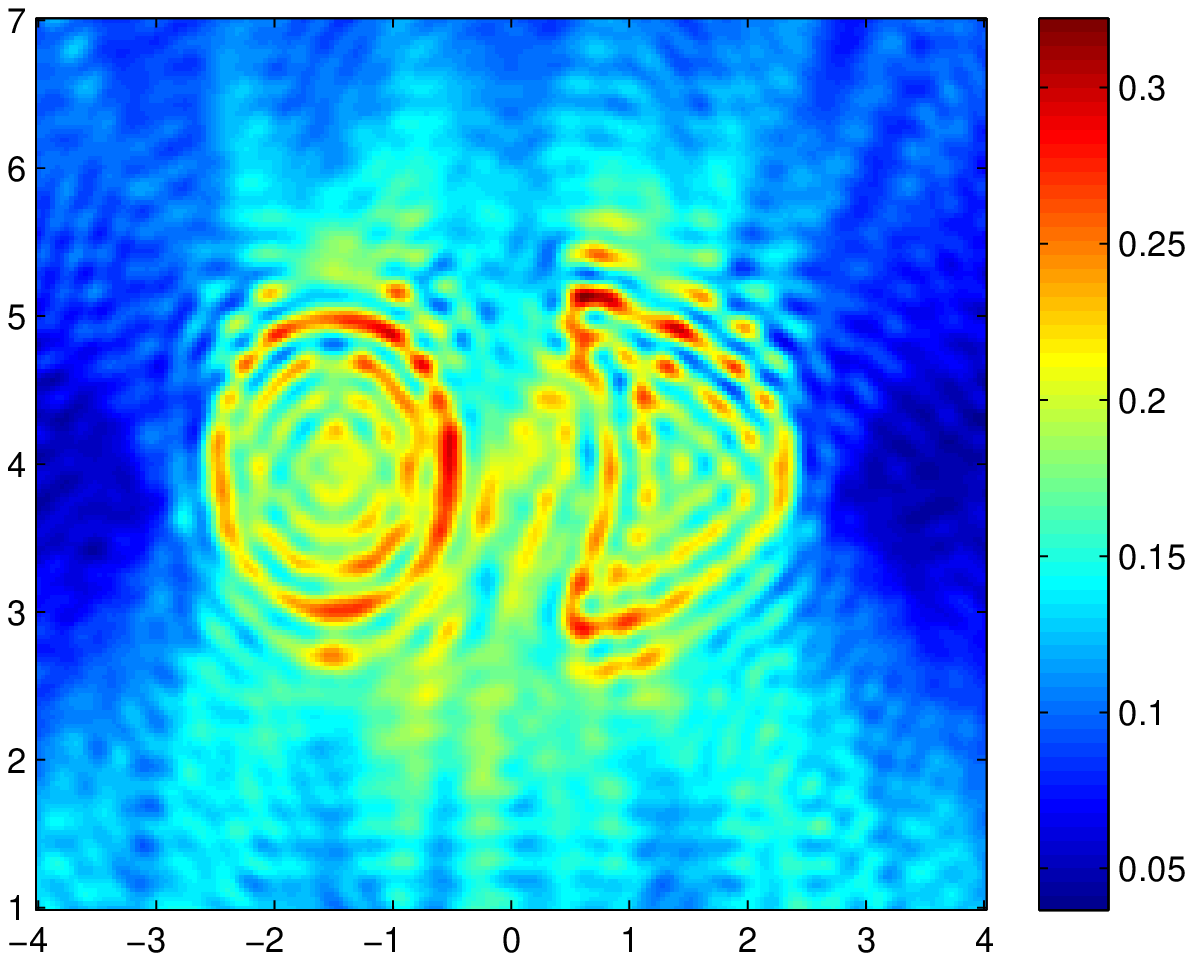}
    \includegraphics[width=0.24\textwidth, height=1.1in]{./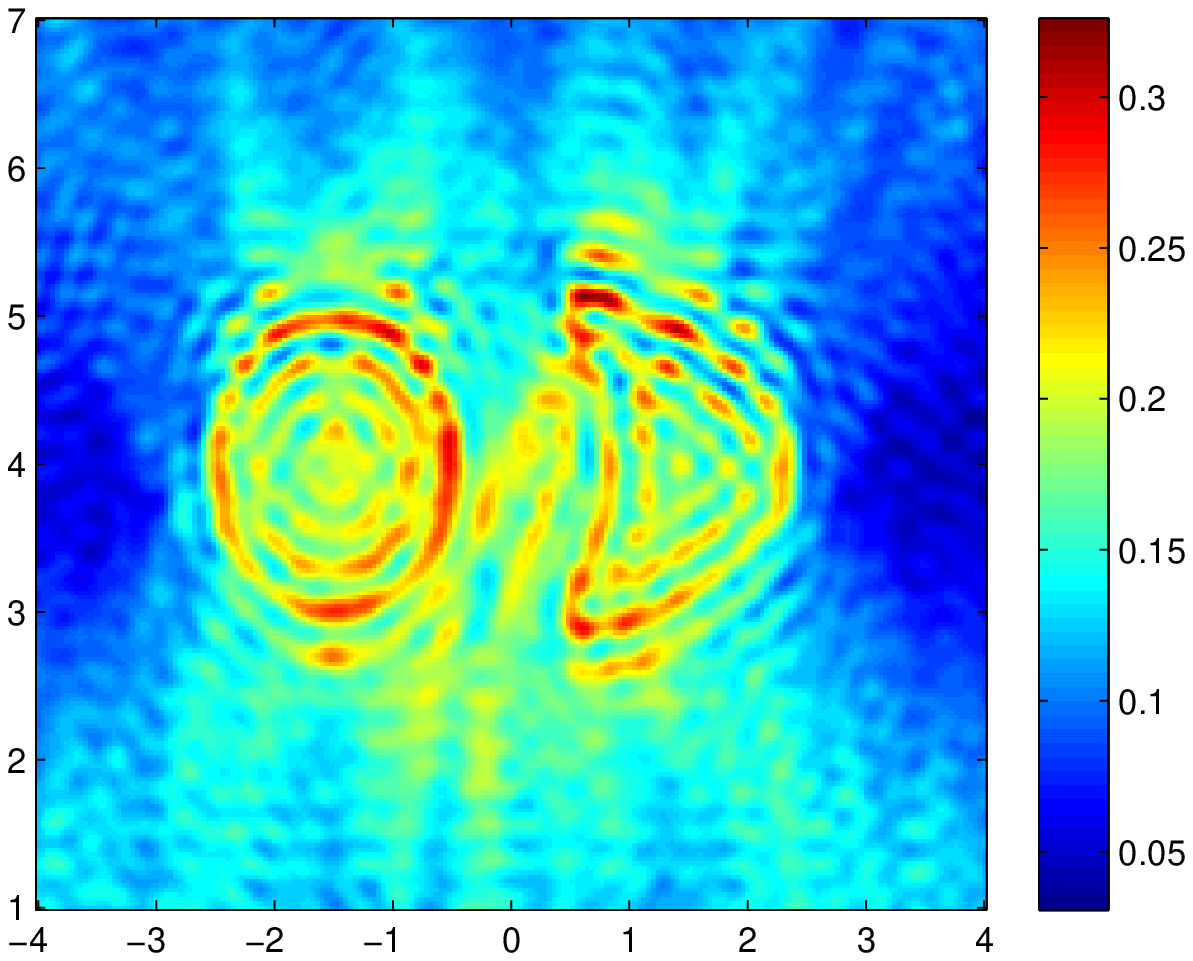}
    \includegraphics[width=0.24\textwidth, height=1.1in]{./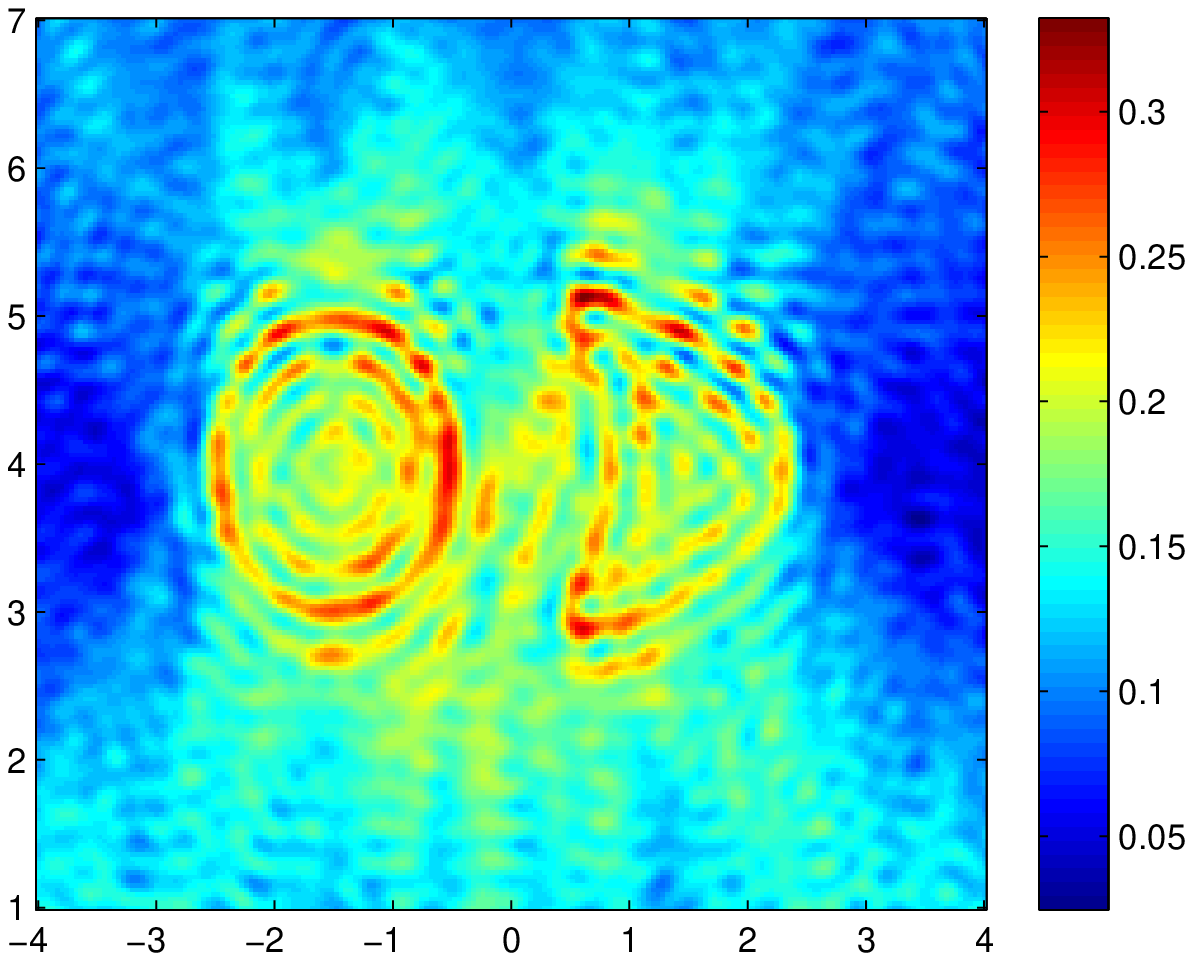}
    \includegraphics[width=0.24\textwidth, height=1.1in]{./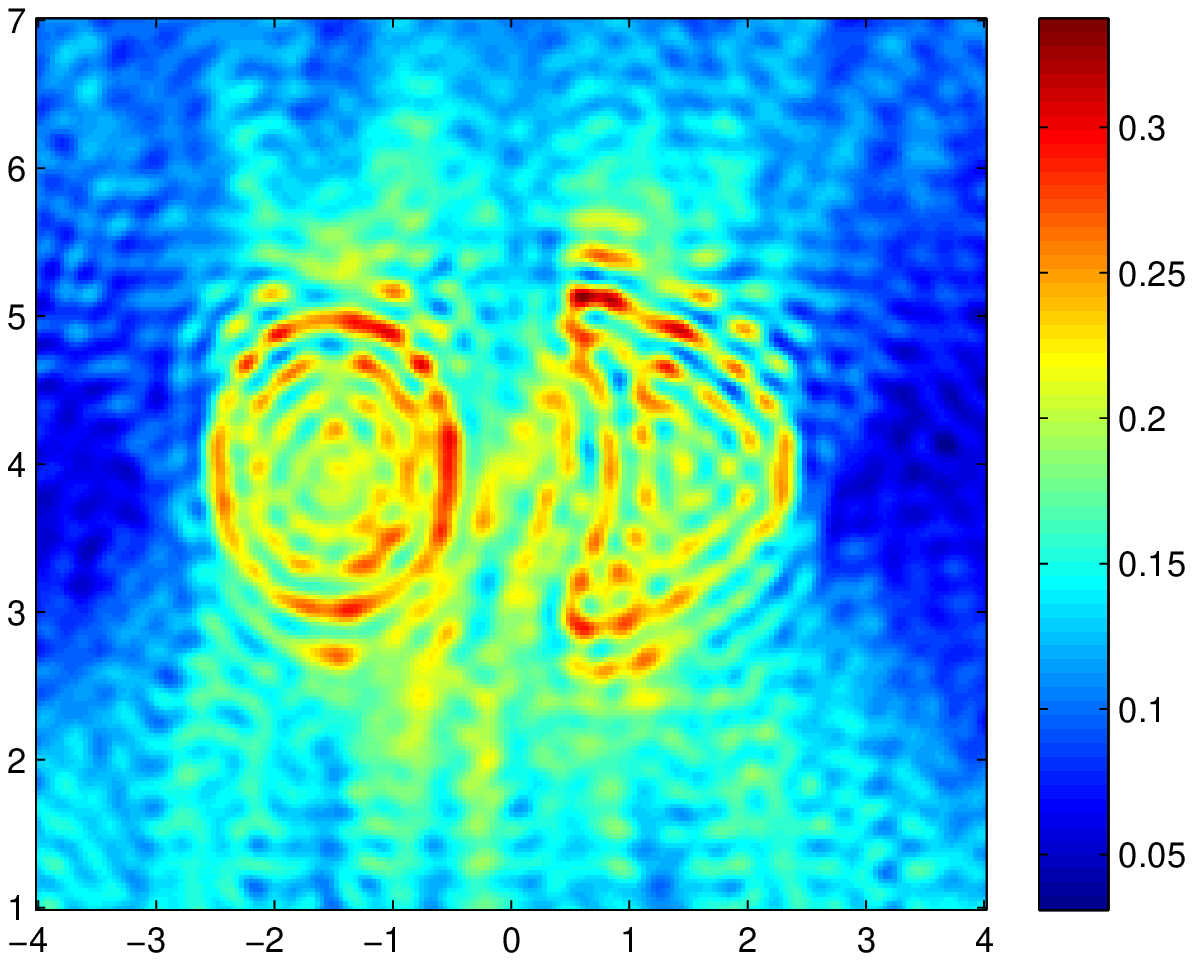}
    \caption{The imaging results using multi-frequency data added with additive Gaussian noise and $\mu = 10\%, 20\%, 30\%, 40\%$ from left to right,  respectively. The probe wavelengths $\lam=1/1.8, 1/1.9, 1/2.0, 1/2.1, 1/2.2$, the thickness $h=10$, the aperture $d=30$, and $N_s=N_r=401$.} \label{figure_4}
\end{figure}

\begin{table}
\caption{The signal level and noise level in the case of single frequency data (left) and multi-frequency data (right).}\label{table1}

\begin{center}
\begin{tabular}{ | c | c| c | c |  }
\hline
$\mu$ & $\sigma$ & $\|u_s\|_{\ell^2}$      & $\|\nu_{\rm noise}\|_{\ell^2}$ \\ \hline
0.1    &    0.0360    & 0.1033    &   0.0293  \\ \hline
0.2    &    0.0720    & 0.1033    &   0.0589  \\ \hline
0.3    &    0.1079    & 0.1033    &   0.0876  \\ \hline
0.4    &    0.1439    & 0.1033    &   0.1178  \\ \hline
\end{tabular} \ \ \ \
\begin{tabular}{ | c | c| c | c |  }
\hline
$\mu$ & $\sigma$ & $\|u_s\|_{\ell^2}$      & $\|\nu_{\rm noise}\|_{\ell^2}$ \\ \hline
0.1     &   0.0355      & 0.1033    &   0.0290  \\ \hline
0.2     &   0.0710      & 0.1033    &   0.0580 \\ \hline
0.3     &   0.1064      & 0.1033    &   0.0869 \\ \hline
0.4     &   0.1419      & 0.1033    &   0.1159 \\ \hline
\end{tabular}
\end{center}
\end{table}

\section{Concluding remarks}
In this paper we have developed a novel reverse time migration algorithm based on the generalized Helmholtz-Kirchhoff identity for the obstacle shape reconstruction in planar acoustic waveguide. The algorithm consists of using the half space Green function instead of the waveguide Green function in both the back-propagation and cross-correlation processes. The algorithm is quite robust with respect to the random noise. Our numerical experiments indicate that the RTM algorithm based on multiple frequency 
superposition can effectively suppress the random noise. Extending the results in this paper to the electromagnetic and elastic waveguide imaging problem is of considerable practical interests and will be pursued in our future works.

\section{Appendix: Proof of Theorem \ref{LAP}}\label{A2}
We will prove the existence of the radiation solution of the problem \eqref{l1}-\eqref{l3} by the method of limiting absorption principle. The
argument is standard and generalizes that for Helmholtz scattering problem in the free space, see e.g. \cite{leis}. Here we only outline the main steps.

For any $z=1+\i\eps$, $\eps>0$, $f\in L^2(\R^2_h)$ with compact support in
$B_R=(-R,R)\times(0,h)$, where $R>0$, we consider the problem
\be
 & &   \Delta u_z + zk^2 u_z = -f \qquad \mbox{in } \R^2_h, \label{wg_ab}\\
 & &    u_z=0\ \ \mbox{on }\Ga_0,\ \ \ \ \frac{\partial u_z}{\partial x_2} = 0\ \ \mbox{on }\Ga_h.\label{wg_abc}
\ee
By Lax-Milgram lemma we know that (\ref{wg_ab})-\eqref{wg_abc} has a unique solution $u_z \in H^{1}(\R^2_h) $. For any domain $\mathcal D\subset\R^2_h$,
we define the weighted space $L^{2,s}(\mathcal D),s \in \R$, by
\ben
    L^{2,s}(\mathcal D)=\{v \in L^2_{\rm loc}(\mathcal D): (1+|x_1|^2)^{s/2}v \in L^2(\mathcal D) \}
\een
with the norm $\| v \|_{ L^{2,s}(\mathcal D)} = (\int_{\mathcal D}(1+|x_1|^2)^{s}|v|^2 dx )^{1/2}$. The weighted Sobolev space $H^{1,s}(\mathcal D),s \in \R$,
is defined as the set of functions in $L^{2,s}(\mathcal D)$ whose first derivative is also in $L^{2,s}(\mathcal D)$. The norm
$\| v \|_{ H^{1,s}(\mathcal D)} = (\| v \|^2_{ L^{2,s} (\mathcal D)} + \| \nabla v \|^2_{ L^{2,s}(\mathcal D)})^{1/2}$.

\begin{lemma}{\label{newton}}
Let  $ f  \in L^2 (\R^2_h) $ with compact support in $B_R$. For any $z=1+\i\eps$, $0<\eps<1$, we have, for any $s>1/2$,
$\|u_z\|_{H^{1,-s}(\R^2_h)}\le C\|f\|_{L^2(\R^2_h)}$ for some constant independent of $\eps, u_z$, and $f$.
\end{lemma}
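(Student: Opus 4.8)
The plan is to establish Lemma \ref{newton} by a compactness--contradiction argument, the standard route to the limiting absorption principle (cf.\ \cite{leis}). Suppose the bound fails. Then there are $s>1/2$ and sequences $\eps_j>0$, $f_j\in L^2(\R^2_h)$ supported in $B_R$, with $u_j:=u_{z_j}$ the $H^1$ solution of \eqref{wg_ab}--\eqref{wg_abc} for $z_j=1+\i\eps_j$, such that
\[
\|u_j\|_{H^{1,-s}(\R^2_h)}=1,\qquad \|f_j\|_{L^2(\R^2_h)}\to 0\quad\text{as }j\to\infty.
\]
The first thing I would do is dispose of the case $\eps_j\not\to 0$: if $\eps_j\ge\eps_0>0$ along a subsequence, the sesquilinear form of \eqref{wg_ab} is coercive on $H^1(\R^2_h)$ with constant controlled by $\eps_0k^2$ (the absorption supplies the missing positivity), so Lax--Milgram gives $\|u_j\|_{H^1(\R^2_h)}\le C(\eps_0)\|f_j\|_{L^2}\to 0$, contradicting the normalization. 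Hence we may assume $\eps_j\to 0$.

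Next I would extract a locally convergent subsequence and identify its limit. Since $\|u_j\|_{H^{1,-s}(\R^2_h)}=1$, the $u_j$ are bounded in $H^1(B_{R'})$ for every $R'$; by Rellich's theorem and a diagonal argument a subsequence (not relabeled) satisfies $u_j\to u$ in $L^2_{\rm loc}(\R^2_h)$ and $u_j\rightharpoonup u$ in $H^1_{\rm loc}(\R^2_h)$. Passing to the limit in the weak formulation, and using $f_j\to 0$, the limit solves $\De u+k^2u=0$ in $\R^2_h$ with the boundary conditions on $\Ga_0,\Ga_h$. For $|x_1|>R$ the decaying $H^1$ solution $u_j$ has mode coefficients $c_{j,n}^\pm e^{\i\xi_{j,n}|x_1|}$ with $\xi_{j,n}=\sqrt{z_jk^2-\mu_n^2}$, $\Im\xi_{j,n}>0$; since the traces of $u_j$ on $\Ga_R^\pm$ converge and $\xi_{j,n}\to\xi_n$, each mode of $u$ is $c_n^\pm e^{\i\xi_n|x_1|}$, so $u$ satisfies the radiation condition \eqref{mode2}. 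The uniqueness argument of Lemma \ref{uniqueness}, specialized to the empty-obstacle case (the $\Ga_D$ term drops out and the $\Ga_{R'}^\pm$ terms force the propagating coefficients to vanish, after which only exponentially decaying modes remain), then gives $u\equiv 0$.

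The real work lies in the third step: showing that the unit weighted norm cannot escape to spatial infinity. I would split $\|u_j\|_{H^{1,-s}}^2$ into the part over $B_R$, which tends to $0$ by the local convergence $u_j\to 0$, plus the part over $\{|x_1|>R\}$. On that region $f_j$ vanishes, so I use the mode expansion $u_j=\sum_n c_{j,n}^\pm e^{\i\xi_{j,n}|x_1|}\sin(\mu_n x_2)$ and estimate mode by mode. For the finitely many propagating modes $n\le M$ one has $\Im\xi_{j,n}\to 0^+$, hence $|e^{\i\xi_{j,n}x_1}|^2=e^{-2\Im\xi_{j,n}x_1}\le 1$, and since $s>1/2$ the integral $\int_R^\infty(1+x_1^2)^{-s}\,dx_1$ converges, giving a bound $C|c_{j,n}^\pm|^2$ \emph{uniform} in $\eps_j$. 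For the evanescent modes $n\ge M+1$, $\Im\xi_{j,n}$ is bounded below by $\tfrac12|\xi_n|$, so the exponential decay yields a contribution controlled by $|c_{j,n}^\pm|^2/|\xi_n|$, which is summable as in Lemma \ref{boundness}. In all cases $c_{j,n}^\pm$ is controlled by the trace of $u_j$ on $\Ga_R^\pm$, hence by $\|u_j\|_{H^1}$ on a fixed neighborhood of $B_R$, which tends to $0$. Summing over $n$ shows the far-field part of $\|u_j\|_{H^{1,-s}}^2$ also tends to $0$, contradicting $\|u_j\|_{H^{1,-s}}=1$.

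The main obstacle is precisely this last, uniform-in-$\eps$ control of the weighted tail. As $\eps_j\to 0$ the propagating modes cease to decay, so the estimate cannot rely on exponential decay; it must come entirely from the polynomial weight $(1+x_1^2)^{-s}$, which is integrable on the half-line exactly because $s>1/2$. Ensuring that the constants in the mode-by-mode bounds are genuinely independent of $\eps_j$ and summable in $n$ (via $\sum_n|\xi_n|^{-1}$-type bounds) is the delicate point that makes the argument work and that pins down the hypothesis $s>1/2$.
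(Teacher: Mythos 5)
Your strategy---a compactness/contradiction argument of limiting-absorption type---is genuinely different from the paper's proof, which is direct and explicit: the paper represents $u_z(x)=\int_{\R^2_h}N^z(x,y)f(y)\,dy$ with the Green function \eqref{b2} of the absorbing problem, expands $u_z$ in modes, and obtains a bound uniform in $\eps$ from $|\xi_n^z|^2\ge |\mu_n^2-k^2|$ (nonzero by the standing assumption \eqref{cut-off}), the summability of $\sum_n|\mu_n^2-k^2|^{-1}$, and the integrability of $(1+|x_1|^2)^{-s}$ for $s>1/2$. Since Lemma \ref{newton} concerns the obstacle-free waveguide, this explicit route is available and short; the compactness scheme you propose is essentially what the paper saves for Theorem \ref{LAP}, where no explicit Green function exists. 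Your scheme could be made to work, but as written it has a genuine gap: every estimate you actually carry out is an $L^2$-type estimate, while the quantity you normalized is the full $H^{1,-s}$ norm.

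Concretely, two steps fail as stated. First, near $B_R$: the normalization only gives $u_j\rightharpoonup 0$ weakly in $H^1_{\rm loc}$ and strongly in $L^2_{\rm loc}$ (Rellich); weak convergence does not make $\|u_j\|_{H^1}$ on a neighborhood of $B_R$ tend to zero, yet you invoke exactly this ("hence by $\|u_j\|_{H^1}$ on a fixed neighborhood of $B_R$, which tends to $0$") to kill the coefficients $c^\pm_{j,n}$, and also to make the $B_R$ portion of the unit norm vanish. Second, in the tail $\{|x_1|>R\}$: your mode-by-mode bounds $C|c^\pm_{j,n}|^2$ and $|c^\pm_{j,n}|^2/|\xi_n|$ only control $\int(1+x_1^2)^{-s}|u_j|^2$. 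The gradient of an evanescent mode carries factors $|\xi_{j,n}|\sim\mu_n$, so its contribution is of order $\mu_n^2|c^\pm_{j,n}|^2/|\xi_n|\sim\mu_n|c^\pm_{j,n}|^2$, and summing in $n$ requires control of the $H^{1/2}$-norm of the trace on $\Ga_R^\pm$ (i.e.\ $\sum_n\mu_n|c^\pm_{j,n}|^2$), not merely its $L^2$-norm. Both defects are repaired by one standard device, which is precisely the first line of the paper's proof: test \eqref{wg_ab} with $(1+|x_1|^2)^{-s}\bar u_z$ (a Caccioppoli-type estimate) to get $\|u_z\|_{H^{1,-s}(\R^2_h)}\le C\|u_z\|_{L^{2,-s}(\R^2_h)}+C\|f\|_{L^2(\R^2_h)}$ uniformly in $\eps$; then you may run your contradiction argument purely at the level of the weighted $L^2$ norm, where Rellich and $L^2$ traces suffice and your tail estimates are exactly what is needed. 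A smaller loose end: in the uniqueness step for the empty waveguide, after the propagating coefficients vanish you must still argue that the evanescent modes are zero; Lemma \ref{uniqueness}'s ending (unique continuation from data on $\Ga_D$) is not available verbatim, but here each mode solves the 1D equation on all of $\R$, so decay at $+\infty$ together with the sub-exponential growth permitted by $H^{1,-s}$ at $-\infty$ forces it to vanish.
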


\begin{proof} We first note that by testing \eqref{wg_ab} by $(1+|x_1|^2)^{-s}\bar u_z$, $s>1/2$, one can obtain $\|u_z\|_{H^{1,-s}(\R^2_h)}\le C\|u_z\|_{L^{2,s}(\R^2_h)}+
C\|f\|_{L^2(\R^2_h)}$ by standard argument. It remains to show $\|u_z\|_{L^{2,s}(\R^2_h)}\le C\|f\|_{L^2(\R^2_h)}$. It is obvious that we only need to prove the estimate for
$f\in C^\infty_0(\R^2_h)$. We start with the following integral representation formula
\bee\label{b1}
    u_z(x) = \int_{\R^2_h}N^z(x,y)f(y)dy, \ \ \ \ x \in \R^2_h.
\eee
Here $N^z(x,y)$ is the Green function of the problem (\ref{wg_ab})-\eqref{wg_abc} with the complex wave number $kz^{1/2}$, where $\Im(z^{1/2})>0$ for $\eps>0$.
Similar to (\ref{GreenN}), it is easy to check that
\bee\label{b2}
    N^z(x,y) = \sum_{n=1}^{\infty}\frac{\textbf{i}}{h\xi^z_n}\sin(\mu_n x_2)\sin(\mu_n y_2)e^{\textbf{i}\xi^z_n|x_1-y_1|},
\eee
where $\xi^z_n = \sqrt{zk^2 - \mu_n^2}$ whose imaginary part $\Im{\xi^z_n} \ge 0$.
It follows from \eqref{b1}-\eqref{b2} that $u_z$ has the mode expansion
\bee\label{b3}
    u_z(x) =  \sum_{n=1}^{\infty} u_n^z(x_1)\sin(\mu_n x_2 ),
\eee
where, since $f$ is supported in $B_R$,
\ben
u_n^z(x_1)=\frac h2\int_{-R}^{R}\frac{\textbf{i}}{h\xi^z_n}e^{\textbf{i}\xi^z_n|x_1-y_1|}f_n(y_1)dy_1,\ \ \ \ f_n(x_1)=\frac 2h\int_0^hf(x)\sin(\mu_nx_2)dx_2.
\een
Since $\Im\xi^z_n\ge 0$ and $|\xi_n^z|^2=\sqrt{(\mu_n^2-k^2)^2+(k^2\eps)^2} \ge |\mu_n^2-k^2|$, we have
\ben
|u_n^z(x_1)|\le\frac{h}2\frac{\sqrt{2R}}{h|\mu_n^2-k^2|^{1/2}}\,\|f_n\|_{L^2(\R)}.
\een
Therefore
\ben
\|u\|_{L^{2,-s}(\R^2_h)}^2 &=& \frac h2\sum^\infty_{n=1}\int^\infty_{-\infty}(1+|x_1|^2)^{-s}|u^z_n(x_1)|^2dx_1\\
&\le&\left(\frac h2\right)^2 \|f\|^2_{L^2(\R^2_h)} \sum^\infty_{n=1}\int_{-\infty}^{+\infty}
\frac{2R}{h^2|\mu^2_n-k^2|}(1+|x_1|^2)^{-s}dx_1\\
&\le&C\|f\|_{L^2(\R^2_h)}^2.
\een
where we have used $\|f\|^2_{L^2(\R^2_h)}=\frac h2\sum^\infty_{n=1}\|f_n\|^2_{L^2(\R)}$. This completes the proof. \end{proof}

Now we are ready to prove Theorem \ref{LAP}.

{\it Proof of Theorem \ref{LAP}}.
For any $0<\eps<1$, we consider the problem
\be  {\label{wg2}}
 & &   \Delta u_{\eps} + (1+\i \eps)k^2 u_{\eps} = 0 \qquad \mbox{in } \R^2_h\bks\bar D, \\
 & &    u_{\eps}=0\ \ \mbox{on }\Ga_0,\ \ \ \ \frac{\partial u_{\eps}}{\partial x_2} = 0\ \ \mbox{on }\Ga_h, {\label{wg3}}\\
 & &    \frac{\pa u_{\eps}}{\pa \nu} + \i k \eta u_{\eps} = g \ \ \ \mbox{ on } \ \ \Ga_D. {\label{wg4}}
\ee
We know that the above problem has a unique solution $u_{\eps} \in H^1(\R^2_h \backslash \bar D)$ by the Lax-Milgram Lemma.

Let $\chi \in C_{0}^{\infty}(\R^2_h)$ be the cut-off function such that $0 \leq \chi \leq 1$, $\chi=0$ in $B_R$, and $\chi=1$
outside of $B_{R+1}$. Let $v_{\eps}=\chi u_{\eps}$, then $v_{\eps}$ satisfies the equation (\ref{wg_ab}) with
$z=1+\i \eps$ and $f=u_{\eps}\Delta \chi + 2\nabla u_{\eps}\cdotp \nabla \chi$. Obviously, $f$ is supported in $B_{R+1}$. By Lemma \ref{newton}, we have $\|v_{\eps}\|_{H^{1,-s}(\R^2_h \backslash \bar{D})}\le C\|u_{\eps}\|_{H^1(B_{R+1}\backslash \bar{D})}$. Since $\chi=1$ outside $B_{R+1}$, we have then
\be \label{newton2}
\|u_{\eps}\|_{H^{1,-s}(\R^2_h \backslash \bar{D})}\le C\|u_{\eps}\|_{H^1(B_{R+1}\backslash \bar{D})}.
\ee
Next let $\chi_1 \in C_{0}^{\infty}(\R^2_h)$
be the cut-off function with that  $0 \leq \chi_1 \leq 1$, $\chi_1=1$ in $B_{R+1}$, and $\chi_1=0$
outside of $B_{R+2}$. For $g\in H^{-1/2}(\Ga_D)$, let $u_g \in H^{1}(\R^2_h \backslash \bar{D})$ be the lifting function such that $ \frac{\pa u_{g}}{\pa \nu} + \i k \eta u_{g}=g \mbox{ on } \Ga_D$ and $\|u_{g}\|_{H^{1}(\R^2_h \backslash \bar{D})}\le C\|g\|_{H^{-1/2}(\Ga_D)}$ hold. By testing \eqref{wg2} with
$\chi_1^2 ( \overline{u_{\eps}-u_{g}} )$, we have by the standard argument
\be \label{u1}
\|u_{\eps}\|_{H^{1}(B_{R+1} \backslash \bar{D})}\le C( \|u_{\eps}\|_{L^{2}(B_{R+2} \backslash \bar{D})} + \|g\|_{H^{-1/2}(\Ga_D)} ).
\ee
Now we claim
\bee \label{u2}
  \|u_{\eps}\|_{L^{2}(B_{R+2} \backslash \bar D)} \leq C \|g\|_{H^{-1/2}(\Ga_D)} ,
\eee
for any $g \in H^{-1/2}(\Ga_D)$ and $\eps >0$. If it were false, there would exist sequences $\{g_m\} \subset H^{-1/2}(\Ga_D)$ and $\{\eps_m\} \subset (0,1)$, and $\{u_{\eps_m}\}$ be the corresponding solution of (\ref{wg2})-(\ref{wg4}) such that
\be {\label{contradict}}
    \|u_{\eps_m}\|_{L^{2}(B_{R+2} \backslash \bar D)} = 1 \ {\rm{ and }} \ \|g_m\|_{H^{-1/2}(\Ga_D)} \leq \frac{1}{m}.
\ee
Then $\|u_{\eps_m}\|_{H^{1,-s}(\R^2_h \backslash \bar D)}\le C $, and thus there is a subsequence of $\{\eps_m\}$, which is
still denoted by $\{\eps_m\}$, such that $\eps_m \to \eps' \in [0,1]$, and a subsequence of $\{u_{\eps_m}\}$,
which is still denoted by $\{u_{\eps_m}\}$, such that it converges weakly to some $u_{\eps'} \in H^{1,-s}(\R^2_h \backslash \bar D)$.
The function $u_{\eps'}$ satisfies (\ref{wg2})-(\ref{wg4}) with $g=0$ and $\eps=\eps'$ .
By the integral representation formula, we have, for $x\in\R^2_h\bs \bar D$,
 \be \label{SL_wg}
    u_{\eps'}(x) = -\int_{\Ga_D} \left(\frac{\partial N^{1+\i\eps'}(x,y)}{\partial \nu(y)} u_{\eps'}(y)-N^{1+\i\eps'}(x,y)\frac{\partial u_{\eps'}(y)}{\partial \nu(y)}\right)ds(y).
 \ee
If $\eps'>0$, we deduce  from (\ref{SL_wg}) that $u_{\eps'}$ decays exponentially and thus $u_{\eps'} \in H^{1}(\R^2_h \backslash \bar D) $, then $u_{\eps'} =0$ by the uniqueness of the solution in $H^{1}(\R^2_h \backslash \bar D) $ with positive absorption.
If $\eps'=0$, (\ref{SL_wg}) implies that $u_{\eps'}$ satisfies the mode radiation condition (\ref{mode2}), and then
 $u_{\eps'}=0$ by the uniqueness Lemma \ref{uniqueness}. Therefore, in any case, $u_{\eps'}=0$, which, however, contradicts to (\ref{contradict}).

This shows \eqref{u2}. Consequently, by (\ref{newton2}) and (\ref{u1}),
\be \label{stab}
\|u_{\eps}\|_{H^{1,-s}(\R^2_h \backslash \bar D)}\le C \|g\|_{H^{-1/2}(\Ga_D)}.
\ee
Now, it is easy to see that $u_{\eps}$ has a convergent subsequence which converges weakly to some $ u \in H^{1,-s}(\R^2_h \backslash \bar D) $ and satisfies (\ref{l1})-(\ref{l3}). The desired estimate follows from (\ref{stab}). This completes the proof. \qquad $\Box$

{\bf Acknowledgement.} We would like to thank the referees for their insightful comments which lead to great improvement
of the paper.

\end{document}